\documentclass[reqno,11pt]{amsart}
\usepackage{amssymb}
\usepackage{amsmath}
\usepackage{amsthm}
\usepackage{mathtools}
\usepackage{amsfonts}
\usepackage{xcolor}
\usepackage{graphicx}
\usepackage{tikz}
\usepackage{comment}
\usepackage{microtype}
\usepackage{enumitem}
\usepackage[margin=1.5in]{geometry}
\usepackage[alphabetic, initials, nobysame]{amsrefs}
\renewcommand{\MR}[1]{}
\usepackage[
    colorlinks = true, 
    linkcolor={blue},
	citecolor={blue},
	urlcolor={black!30!blue}
]{hyperref}

\title[Alternating Distribution of Resonances]{The Sharp Spectral Transition for Almost Mathieu Operators via Alternating Resonances}

\author[J. \ He]{Jiawei He}
\address{[J. \ He]
   Fujian Key Laboratory of Financial Information Processing, Putian University, Putian, Fujian 351100, P.R. China.
}

\email{\href{mailto:hermit_well@163.com}{hermit\_well@163.com}}

\author[X.\ Wang]{Xueyin Wang}
\address{[X.\ Wang] Department of Mathematics, Texas A\&M University, College Station, TX 77843, USA}
\email{\href{mailto:xueyin@tamu.edu}{xueyin@tamu.edu}}

\theoremstyle{plain}
\newtheorem{theorem}{Theorem}[section]
\newtheorem{conjecture}{Conjecture}[section]

\newtheorem{lemma}[theorem]{Lemma}
\newtheorem{proposition}[theorem]{Proposition}

\theoremstyle{definition}

\numberwithin{equation}{section}
\DeclareMathOperator{\dist}{dist}
\DeclareMathOperator{\Lag}{Lag}

\begin{document}

\begin{abstract}
We prove that for any given frequency resonance exponent and phase resonance exponent, there exist a frequency and a phase exactly realizing these values such that the Almost Mathieu operator exhibits Anderson localization for
$\ln|\lambda|>\max\{\beta(\alpha),\delta(\alpha,\theta)\}$. This resolves the conjecture in \cite{MR4686650}.

\end{abstract}

\maketitle	

\section{Introduction}
This paper studies the almost Mathieu operator given by
\begin{equation*}
    [H_{\lambda,\alpha,\theta} \phi](n) = \phi(n+1) + \phi(n-1) + 2\lambda \cos 2\pi (\theta+n\alpha)\phi(n),
\end{equation*}
where $\theta \in \mathbb{R}/\mathbb{Z}$ is the phase, $\alpha \in \mathbb{R}\setminus \mathbb{Q}$ is the irrational frequency, and $\lambda \in \mathbb{R}$ is the coupling constant. The almost Mathieu operator is one of the central models in the spectral theory of quasi-periodic Schr\"odinger operators \cites{MR4567742,MR4840232}. 
 The spectral type is completely understood in the subcritical and critical regimes, while in the supercritical regime it is strongly affected by arithmetic resonances.

There are two natural types of resonances in this setting. The first one is the frequency resonance \cites{MR634437,MR458247}, which is quantified by
\begin{equation*}
    \beta(\alpha) \coloneqq \limsup_{|k|\to \infty} -\frac{\ln \|k\alpha\|_{\mathbb{R}/\mathbb{Z}}}{|k|},
\end{equation*}
where $\|x\|_{\mathbb R/\mathbb Z}=\inf_{m\in\mathbb Z}|x-m|$. The second one is the phase resonance,  which occurs for even functions \cite{MR1298948}. It is quantified by
\begin{equation*}
    \delta(\alpha,\theta) \coloneqq \limsup_{|k|\to \infty} -\frac{\ln \|2\theta+k\alpha\|_{\mathbb{R}/\mathbb{Z}}}{|k|}.
\end{equation*}

Let us recall the known spectral picture. In the subcritical case ($|\lambda|<1$), following numerous contributions \cites{avila2008absolutely,MR2390290,MR2578605,MR1459260,MR1740982}, it is well established that $H_{\lambda,\alpha,\theta}$ exhibits purely absolutely continuous spectrum for all  $\alpha$ and all phases $\theta$. For the critical case ($|\lambda|=1$), recent works \cites{MR4322159,MR3698344,MR3801496} imply that $H_{\lambda,\alpha,\theta}$ exhibits purely singular continuous spectrum for all irrational $\alpha$ and all phases $\theta$. 

The supercritical regime ($|\lambda|>1$) is more delicate, as the spectral properties in this regime depend strongly on resonances. More precisely, on the singular continuous side, if $0<\ln |\lambda|<\beta(\alpha)$, $H_{\lambda,\alpha,\theta}$ has purely singular continuous spectrum for all $\theta$ \cites{MR3779957,MR3707287}; if $0<\ln |\lambda|<\delta(\alpha,\theta)$, $H_{\lambda,\alpha,\theta}$ has purely singular continuous spectrum for all $\alpha$ \cite{MR4756946}. Combining these results \cites{MR3779957,MR3707287,MR4756946} implies that $H_{\lambda,\alpha,\theta}$ possesses a purely singular continuous spectrum whenever $0 < \ln |\lambda| < \max\{\beta(\alpha),\delta(\alpha,\theta)\}$.

On the Anderson localization side, several sharp results are known in special regimes: 
(1) $\ln|\lambda|>\beta(\alpha)$ and $\delta(\alpha,\theta)=0$, see \cites{MR3779957,MR3707287}; (2) $\ln |\lambda|>\delta(\alpha,\theta)$ and $\beta(\alpha)=0$, see \cite{MR4756946}; (3) $\ln |\lambda|>2\beta(\alpha)$ and $2\theta\in \alpha\mathbb{Z}+ \mathbb{Z}$, see \cite{MR4950602}. Since $\delta(\alpha,\theta)=\beta(\alpha)$ for $2\theta\in \alpha\mathbb{Z}+ \mathbb{Z}$, this suggests that $\ln|\lambda|=\beta(\alpha)+\delta(\alpha,\theta)$ might be the spectral transition line. It was natural to ask whether this sum always governs the competition between localization and resonances.

However, the recent discovery in \cite{MR4686650} demonstrates that this is not the complete picture. Not only the strengths of frequency and phase resonances, measured by $\beta(\alpha)$ and $\delta(\alpha,\theta)$, but also the distribution of the two families of resonances matters. Specifically, the author constructed the irrational $\alpha$ and the phase $\theta$ with $\delta(\alpha,\theta)>100\beta(\alpha)$ such that Anderson localization holds provided $\ln |\lambda|>\delta(\alpha,\theta)$. This leads to the following conjecture formulated in \cite{MR4686650}.

\begin{conjecture}\label{mainconjec}
    For any $0<\mu<\infty$ and $0<\kappa<\infty$, there exist $\alpha$ and $\theta$ with $\beta(\alpha)=\mu$ and $\delta(\alpha,\theta)=\kappa$ such that the following is not true: $H_{\lambda,\alpha,\theta}$ has Anderson localization for $\ln|\lambda|>\kappa+\mu$ and has purely singular continuous (sc) spectrum for $0<\ln|\lambda|<\kappa+\mu$.
\end{conjecture}

The main result of this paper confirms Conjecture \ref{mainconjec} in a sharp form. We construct frequencies and phases for which the transition is governed not by the sum $\mu+\kappa$, but by the larger of the two resonance strengths.

\begin{theorem}\label{mainthm}
    For any $0<\mu<\infty$ and $0<\kappa<\infty$, there exist $\alpha$ and $\theta$ with $\beta(\alpha)=\mu$ and $\delta(\alpha,\theta)=\kappa$ such that the following hold:
    \begin{enumerate}
        \item \label{item:AL}$H_{\lambda,\alpha,\theta}$ has Anderson localization for $\ln|\lambda|>\max \{\kappa,\mu\}$;
        \item \label{item:SC}$H_{\lambda,\alpha,\theta}$ has purely sc spectrum for $0<\ln|\lambda|<\max \{\kappa,\mu\}$.
    \end{enumerate}
\end{theorem}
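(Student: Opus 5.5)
Part (2) follows directly from the results already quoted in the introduction. If $0<\ln|\lambda|<\max\{\beta(\alpha),\delta(\alpha,\theta)\}$, then either $\ln|\lambda|<\beta(\alpha)$, and \cites{MR3779957,MR3707287} give purely sc spectrum for every $\theta$, or $\ln|\lambda|<\delta(\alpha,\theta)$, and \cite{MR4756946} gives purely sc spectrum for every $\alpha$. So all the work is in constructing $(\alpha,\theta)$ with $\beta(\alpha)=\mu$ and $\delta(\alpha,\theta)=\kappa$, and then proving part (1) for this pair.

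\textbf{The construction.} Resonances of both types should occur at very sparse, interlaced scales that never interact. I will build $\alpha$ through its continued fraction with denominators $q_n$. Along a very sparse subsequence $q_{n_j}$, I prescribe $q_{n_j+1}\approx e^{\mu q_{n_j}}$. All other partial quotients are bounded, or at least $q_{n+1}\le e^{o(q_n)}$. This gives $\beta(\alpha)=\mu$. The phase $\theta$ is built as a convergent series of corrections. At scales $m_j$ placed strictly between the frequency resonance scales, with $q_{n_j}\ll m_j \ll q_{n_{j+1}}$ so that every scale is superexponentially larger than the previous one, I choose integers $k_j$ with $|k_j|\approx m_j$. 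I then adjust $\theta$ so that $\|2\theta+k_j\alpha\|\approx e^{-\kappa |k_j|}$, while keeping $\|2\theta+k\alpha\|\ge e^{-\varepsilon|k|}$ for all other $k$ outside tiny windows around the $k_j$. Each later correction to $2\theta$ is of size $e^{-\kappa|k_{j+1}|}$. This is far too small to disturb the earlier resonances, and small denominators in $\alpha$ do not create extra phase resonances, because the $k_j$ avoid multiples of the frequency resonance scales $q_{n_j}$. Checking that $\delta(\alpha,\theta)=\kappa$ exactly means controlling $\|2\theta+k\alpha\|$ for $k=k_j+\ell q_n$. Near $k_j$ this is bounded below by $\|\ell q_n\alpha\|$ minus the small error, and I expect this bookkeeping to be routine.

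\textbf{Localization via alternating resonances.} For part (1), take $\ln|\lambda|>\max\{\mu,\kappa\}+\varepsilon$ and a generalized eigenfunction $\phi$ that is polynomially bounded. Since the pair is now fixed, I will follow the block-resolvent and Lagrange-interpolation scheme of \cite{MR3707287} for the $\beta$ side and of \cite{MR4756946} for the $\delta$ side, applied to dyadic-like scales $y$. The key step is a \emph{scale-by-scale dichotomy}. For each large $|y|$, the scale $|y|$ lies in the influence window of at most one resonance: either a frequency resonance near $q_{n_j}$ (and its multiples up to $q_{n_j+1}$), or a phase resonance near $k_j$, but never both, thanks to the interlacing.

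In a frequency-resonant window I use the argument of \cite{MR3707287}. Since $\delta$ is effectively $0$ there (all relevant phase denominators exceed $e^{-\varepsilon|k|}$), it yields decay $|\phi(y)|\le e^{-(\ln|\lambda|-\mu-\varepsilon)|y|}$-type bounds between resonant sites, with the $\phi$-values at the sites $\ell q_{n_j}$ decaying. In a phase-resonant window I use the argument of \cite{MR4756946}, which handles reflection pairs $y\leftrightarrow -k_j-y$ when $\beta$ is effectively $0$; the cost there is only $\kappa$. In nonresonant regions the standard Jitomirskaya argument gives full Lyapunov decay $\ln|\lambda|-\varepsilon$.

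\textbf{Main obstacle.} The hardest step is gluing the windows. The estimate from one regime must feed into the next as an initial bound, without picking up the sum of the two losses $\mu+\kappa$. I would handle this by proving a uniform local statement: for $|y|$ in window $j$,
\[
|\phi(y)|\le e^{-(\ln|\lambda|-\max\{\mu,\kappa\}-2\varepsilon)\,\mathrm{dist}(y,R_j)}\max_{r\in R_j\cup\partial}|\phi(r)|,
\]
where $R_j$ is the resonant set of window $j$ alone. The Lagrange-interpolation uniformity requires only a single type of small divisor on each block of length comparable to $|y|$. The superexponential separation between $q_{n_j}$, $m_j$ and $q_{n_{j+1}}$ ensures the resonant sets of different windows are far apart. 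The block sizes can therefore be chosen so that each block sees only one resonance family, and iterating over windows closes the induction to give exponential decay of $\phi$.
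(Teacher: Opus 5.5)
Part (2) is exactly the paper's argument. For part (1) your route is genuinely different from the paper's and looks viable, but as written it is a plan rather than a proof.

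\textbf{How the two routes differ.} You separate the two families \emph{in scale}. You take sparse $\mu$-Liouville levels $n_j$ with bounded type in between, and plant the phase resonances $k_j$ in the bounded-type stretches. Each scale then falls in a regime already covered by the $\delta=0$ theory (frequency windows) or the $\beta=0$ theory (phase windows). Those theorems are global, however, so their proofs have to be redone one scale at a time.

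The paper instead separates the families \emph{by site within one scale}:
\begin{itemize}
\item At odd $n=2j-1$ there is a deliberately weak frequency resonance ($\beta_n\to\kappa\eta<\mu/2$) at the sites $\ell q_n$, and a phase resonance at $\ell q_n+k_j$ with $k_j\approx\eta q_n$.
\item The phase resonance is manufactured from the continued fraction itself: $k_{j+1}-k_j=m_jq_{2j}$ and $\|2\theta+k_j\alpha\|\asymp\eta/q_{2j}$, so its strength is $\ln q_{2j}/k_j\to\kappa$.
\item Even levels carry the $\mu$-resonance alone.
\end{itemize}

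\textbf{What each route buys.} The paper gets fully explicit arithmetic. On the whole range $|\ell|\le 20b_{n+1}/q_n$, every small divisor is $\gtrsim 1/q_n$ except one per row, of size $\gtrsim 1/q_{n+1}$, and in each block configuration only one type (phase or frequency) is exceptional. A single Lagrange bound (Theorem~\ref{uniform}) plus Lemma~\ref{block} then gives Propositions~\ref{reson_odd1}--\ref{reson-even} with loss $\beta_nq_n$. The threshold $\max\{\mu,\kappa\}$ appears as $\beta_{2j}\to\mu$ at even scales and $\beta_{2j-1}q_{2j-1}\approx\kappa k_j$ at odd scales.

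Your route gives a cleaner conceptual picture, since no scale carries two strong divisors. It costs an inductive joint construction of $\alpha$ and $\theta$ with control of inhomogeneous approximation, plus the localized re-proofs. The ``gluing'' you single out is not where the difficulty lies. Each scale can be anchored using only the polynomial bound and $\phi(0)=1$, as in the paper's final section.

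\textbf{Two points in your sketch need correcting.}

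First, the construction of $\theta$. Consecutive resonances force $\|(k_{j+1}-k_j)\alpha\|\lesssim e^{-\kappa|k_j|}$. So the correction to $2\theta$ at step $j+1$ has the size of the $j$-th window, not $e^{-\kappa|k_{j+1}|}$, and it is not automatically too small to disturb the $j$-th resonance. The right bookkeeping is a nested-interval scheme, as in Lemma~\ref{closeint}.

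Second, frequency windows are not free of phase small divisors. By the three-gap theorem, for any $\theta$ some $0\le k^*<q_{n_j+1}$ has $\|2\theta+k^*\alpha\|\le 1/q_{n_j+1}$. Where $k^*$ falls decides whether the two costs add, as they do for $2\theta\in\alpha\mathbb{Z}+\mathbb{Z}$.
\begin{itemize}
\item Your conditions, together with $|k_{j-1}|=o(q_{n_j})$, give $\min_{|i|<q_{n_j}}\|2\theta+i\alpha\|\ge e^{-o(q_{n_j})}$.
\item This forces $k^*=i^*+\ell^*q_{n_j}$ with $|i^*|<q_{n_j}$ and $|\ell^*|\ge e^{(\mu-o(1))q_{n_j}}$, where the frequency divisor is harmless.
\item There, however, $\|2\theta+k^*\alpha\|$ is controlled only by your bound $e^{-\varepsilon|k^*|}$. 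So that part of the window must be treated with blocks of length $\asymp|k^*|$ rather than $q_{n_j}$.
\end{itemize}
The paper avoids this by construction. At even $n$ the forced divisor is $k_{j+1}\approx\eta q_{n+1}$, which lies beyond the treated range. At odd $n$ it is the planted resonance $k_j$ itself.
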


The singular continuous part, namely \hyperref[item:SC]{Theorem~\ref*{mainthm}\,\textup{(\ref*{item:SC})}}, is not new. Indeed, for the constructed pair $(\alpha,\theta)$, one has
\begin{equation*}
    \max\{\beta(\alpha),\delta(\alpha,\theta)\}
    =
    \max\{\mu,\kappa\}.
\end{equation*}
Consequently, this follows directly from the aforementioned results \cites{MR3779957,MR3707287,MR4756946}. Thus the main contribution of this paper is the Anderson localization statement in \hyperref[item:AL]{Theorem~\ref*{mainthm}\,\textup{(\ref*{item:AL})}}.

We now outline the main idea of the proof. The starting point is the observation that the arithmetic quantities $\beta(\alpha)$ and $\delta(\alpha,\theta)$ measure only the strengths of the frequency and phase resonances, but not their locations. Our construction exploits this additional freedom. We arrange the distributions of frequency and phase resonances so that their strongest contributions occur at different sites. Consequently, the Lyapunov exponent does not need to dominate the sum of the two resonance strengths. Instead, at each scale it only needs to dominate the resonance which is actually present at the relevant site. This mechanism gives rise to the threshold condition
\begin{equation*}
    \ln|\lambda| \sim \max\{\beta(\alpha),\delta(\alpha,\theta)\}.
\end{equation*}

The construction is based on an alternating choice of the continued fraction denominators $q_{n}$ depending on whether $n$ is odd or even. The essential geometric feature of the construction is the separation of resonant sites. At scale $q_n$, frequency resonances occur near the sites $\ell q_{n}$, $\ell\in\mathbb{Z}$. When $n=2j-1$ is odd, a phase resonance occurs near the site $(\ell+\eta) q_n$ for some fixed $0<\eta<1$. The point $(\ell+\eta) q_n$ lies strictly between the two neighboring frequency-resonant sites $\ell q_n$ and $(\ell+1)q_n$. Hence, a phase resonance and a frequency resonance do not occur at the same site. When $n=2j$ is even, the construction keeps the phase resonance away from the relevant scale, and the only effective resonance is the frequency resonance near $\ell q_n$.

This paper is organized as follows. In Section \ref{sec:prelim}, we introduce some necessary notations and lemmata. Section \ref{construction} is devoted to the explicit construction of the resonant frequency and phase. We then establish localization via a site dichotomy: Section~\ref{sec:Non-resonant} addresses non-resonant sites, whereas Section~\ref{sec:Resonant} analyzes resonant sites by distinction of parity, encompassing the estimation of $r_{\ell+\eta}$ and $r_{\ell}$ for $n=2j-1$ and the analysis for $n=2j$.

\section{Preliminaries}\label{sec:prelim}

\subsection{Continued fraction}
Let $\alpha\in(0,1)\setminus\mathbb Q$. We write its continued fraction expansion as
\begin{equation*}
    \alpha=[0;a_1,a_2,\ldots]\coloneqq\cfrac{1}{a_1+\cfrac{1}{a_2+\cdots}}.
\end{equation*}
Let $p_{-1}=1$, $p_0=0$, $q_{-1}=0$, and $q_0=1$. For $n\geqslant 0$, define
\begin{equation*}
    p_{n+1}=a_{n+1}p_n+p_{n-1},
    \quad
    q_{n+1}=a_{n+1}q_n+q_{n-1}.
\end{equation*}
It is standard that $(-1)^{n}(q_{n}\alpha-p_{n})>0$. Moreover, $p_n/q_n$ is the $n$-th convergent of $\alpha$, 
\begin{equation*}
    \frac{p_n}{q_n}=[0;a_1,\ldots,a_n],\quad n\geqslant 1.
\end{equation*}
For $n\geqslant 1$, we shall use
\begin{equation*}
    \frac{1}{2q_{n+1}}\leqslant\|q_n\alpha\|_{\mathbb{R}/\mathbb{Z}}=|q_n\alpha-p_n|\leqslant\frac{1}{q_{n+1}},
\end{equation*}
and
\begin{equation*}
    \|k\alpha\|_{\mathbb{R}/\mathbb{Z}}\geqslant\|q_n\alpha\|_{\mathbb{R}/\mathbb{Z}},\quad \text{for}\ 1\leqslant |k|<q_{n+1}.
\end{equation*}

\subsection{Lyapunov exponent}
We say that $\phi$ is a generalized eigenfunction corresponding to the generalized eigenvalue $E$ if
\begin{equation}\label{eigen1}
    H_{\lambda,\alpha,\theta}\phi=E\phi\quad \text{and}\quad |\phi(n)|\lesssim |n|+1.
\end{equation}
By Shnol's theorem, in order to prove Anderson localization, it is enough to show that every generalized eigenfunction decays exponentially. From now on, we always assume that $\phi$ is a generalized eigenfunction of $H_{\lambda,\alpha,\theta}$. Since $|\phi(0)|+|\phi(1)|\neq 0$, for simplicity, we normalize
\begin{equation}\label{normalization}
    \phi(0)=1.
\end{equation}

Throughout the paper, we denote \begin{equation*}
    A(\theta)=\begin{pmatrix}
    E-2\lambda \cos 2\pi \theta&-1\\1&0
\end{pmatrix}\in \mathrm{SL}(2,\mathbb{R}).
\end{equation*}
For $n\geqslant 1$, we define the $n$-step transfer matrix
\begin{equation*}
    A_{n}(\theta)=A(\theta+(n-1)\alpha)\cdots A(\theta+\alpha)A(\theta)
\end{equation*}
and $A_{-n}=A_{n}(\theta-n\alpha)^{-1}$. 
Any formal eigenfunction of $H\phi=E\phi$ satisfies
\begin{equation}\label{preeigen}
    \begin{pmatrix}
        \phi(n)\\
        \phi(n-1)
    \end{pmatrix}=
    A_{n-m}(\theta+m\alpha)
    \begin{pmatrix}
        \phi(m)\\
        \phi(m-1)
    \end{pmatrix}.
\end{equation}
The Lyapunov exponent is given by
\begin{equation*}
    L(E)=\lim _{n \to \infty} \frac{1}{n} \int_{\mathbb{R}/\mathbb{Z}} \ln \|A_{n}(\theta)\| \ \mathrm{d} \theta.
\end{equation*}
Moreover, we have expression of $L(E)$ for $E\in \Sigma_{\lambda,\alpha}$.
\begin{lemma}[\cites{MR1933451}]
    For $E\in \Sigma_{\lambda,\alpha}$, we have $L(E)=\max \{0,\ln |\lambda|\}$.
\end{lemma}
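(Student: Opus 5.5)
We prove the two inequalities $L(E)\geqslant\max\{0,\ln|\lambda|\}$ and $L(E)\leqslant\max\{0,\ln|\lambda|\}$ separately. The first holds for every $E$. The second uses $E\in\Sigma_{\lambda,\alpha}$ through Avila's global theory of one-frequency cocycles. Throughout we use the complexified Lyapunov exponent
\begin{equation*}
    L(E,\varepsilon)=\lim_{n\to\infty}\frac1n\int_{\mathbb R/\mathbb Z}\ln\|A_n(\theta+i\varepsilon)\|\,\mathrm d\theta ,
\end{equation*}
so that $L(E)=L(E,0)$. Since the cocycle is real, $L(E,\varepsilon)=L(E,-\varepsilon)$.

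\emph{Lower bound (Herman's argument).} Trivially $L(E)\geqslant 0$ because $A_n\in\mathrm{SL}(2,\mathbb R)$. For the other bound, write $z=e^{2\pi i\theta}$ and set $B(z)=zA(\theta)$, which is a matrix polynomial in $z$ with leading coefficient $\begin{pmatrix}-\lambda&0\\0&0\end{pmatrix}$. The product $B_n(z)=z^nA_n(\theta)$ is then polynomial in $z$, and $\ln\|B_n(z)\|$ is subharmonic on the closed unit disc. The sub-mean value property at $z=0$ gives
\begin{equation*}
    \int\ln\|A_n(\theta)\|\,\mathrm d\theta=\int\ln\|B_n(e^{2\pi i\theta})\|\,\mathrm d\theta\geqslant\ln\|B_n(0)\|=n\ln|\lambda| .
\end{equation*}
Dividing by $n$ and letting $n\to\infty$ yields $L(E)\geqslant\ln|\lambda|$.

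\emph{Upper bound on the spectrum.} First compute $L$ for large $|\varepsilon|$. For $\varepsilon\to+\infty$ we have $2\lambda\cos2\pi(\theta+i\varepsilon)=\lambda e^{-2\pi i\theta}e^{2\pi\varepsilon}\bigl(1+o(1)\bigr)$, so $A(\theta+i\varepsilon)$ is a small perturbation of a dominated rank-one matrix of norm $|\lambda|e^{2\pi\varepsilon}$. A standard cone-field argument then gives $L(E,\varepsilon)=\ln|\lambda|+2\pi|\varepsilon|$ for all $|\varepsilon|$ large.

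Next invoke Avila's global theory. The function $\varepsilon\mapsto L(E,\varepsilon)$ is convex and piecewise affine, and its slopes lie in $2\pi\mathbb Z$ (quantization of acceleration). Because the entries are trigonometric polynomials of degree $1$, convexity together with the asymptotics just computed forces every slope to lie in $[-2\pi,2\pi]$. Avila's characterization also says that $L(E,0)>0$ with $0$ a regular point (that is, $L(E,\cdot)$ affine near $0$) holds if and only if the cocycle is uniformly hyperbolic. Since $E\in\Sigma_{\lambda,\alpha}$, the cocycle is not uniformly hyperbolic.

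Now split into cases.
\begin{itemize}
    \item If $L(E,0)=0$, the claim follows from the lower bound, since then $\ln|\lambda|\leqslant 0$.
    \item If $L(E,0)>0$, then $0$ is a kink of $L(E,\cdot)$. By evenness the right slope at $0$ is a positive integer multiple of $2\pi$, hence equals $2\pi$ because slopes are at most $2\pi$. Convexity together with the bound $2\pi$ on slopes forces $L(E,\varepsilon)=L(E,0)+2\pi|\varepsilon|$ for all $\varepsilon$. Comparing with the large-$\varepsilon$ formula gives $L(E,0)=\ln|\lambda|$.
\end{itemize}
In either case $L(E)=\max\{0,\ln|\lambda|\}$.

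The main obstacle is the upper bound on the spectrum, which rests on two deep inputs from global theory: quantization of acceleration, and the equivalence between regularity with positive exponent and uniform hyperbolicity. If one prefers to avoid these, an alternative route is the Thouless formula combined with Aubry duality $N_\lambda(E)=N_{1/\lambda}(E/\lambda)$. This yields $L_\lambda(E)=L_{1/\lambda}(E/\lambda)+\ln|\lambda|$, and Herman's bound applied on the dual side then gives the desired equality on the spectrum.
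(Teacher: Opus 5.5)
Your main argument is correct, but it is not the route behind the cited result. The paper gives no proof and only quotes Bourgain--Jitomirskaya. Their upper bound comes from joint continuity of $L$ in $(E,\alpha)$ at irrational $\alpha$, applied along the periodic approximants $p_n/q_n$. The periodic spectra converge in the Hausdorff metric, and on them Chambers' formula gives the explicit bound $\max\{0,\ln|\lambda|\}+o(1)$ for the exponent. You share Herman's subharmonicity lower bound, but your upper bound works directly at the fixed irrational $\alpha$ through Avila's global theory. Evenness, convexity and quantization of acceleration leave only two options on $(0,\infty)$ near $0$: slope $0$, or slope exactly $2\pi$. The flat case with $L(E,0)>0$ is a regular point with positive exponent, hence uniformly hyperbolic by Avila's theorem, and Johnson's theorem excludes this on the spectrum. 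Your route avoids periodic approximation and proves more, namely $L(E,\varepsilon)=\max\{0,\ln|\lambda|+2\pi|\varepsilon|\}$ on the spectrum (so acceleration $1$ in the supercritical regime). The price is two deep black boxes: quantization, and the characterization of regular cocycles with positive exponent. The Bourgain--Jitomirskaya route instead yields continuity in the frequency, a statement of independent use.

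The closing aside, however, is not a valid alternative. The identity $L_\lambda(E)=L_{1/\lambda}(E/\lambda)+\ln|\lambda|$ holds for every real $E$. Herman's bound on the dual side only gives $L_{1/\lambda}(E/\lambda)\geqslant\max\{0,-\ln|\lambda|\}$, which reproduces $L_\lambda(E)\geqslant\max\{0,\ln|\lambda|\}$ and nothing more. The hypothesis $E\in\Sigma_{\lambda,\alpha}$ is never used, and the equality genuinely fails off the spectrum. There, uniform hyperbolicity (of the cocycle, or of its dual when $|\lambda|>1$) gives $L_\lambda(E)>\max\{0,\ln|\lambda|\}$. What the duality route actually needs is an upper bound: that the exponent vanishes on the spectrum in the subcritical regime. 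That is exactly the nontrivial content, supplied either by continuity in the frequency or by global theory. So that remark should be dropped, or stated only as a reduction between $\lambda$ and $1/\lambda$, not as a proof.
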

Since the irrational rotation is unique ergodic, Furman's theorem implies that for any $\varepsilon>0$, there exists $N=N(\varepsilon,E)$ such that
\begin{equation}\label{prefur}
    \|A_{n}(\theta)\|\leqslant e^{(L(E)+\varepsilon)|n|},\quad \text{for}\ |n|>N.
\end{equation}
By \eqref{preeigen} and \eqref{prefur}, for sufficiently large $|m-n|$,
\begin{equation}\label{transfermatrix}
    \bigg\|\begin{pmatrix}
        \phi(n)\\
        \phi(n-1)
    \end{pmatrix}\bigg\|\leqslant e^{(L(E)+\varepsilon)|m-n|} \bigg\|\begin{pmatrix}
        \phi(m)\\
        \phi(m-1)
    \end{pmatrix}\bigg\|.
\end{equation}

Let us denote
\begin{equation*}
    P_{n}(\theta)=\det (R_{[0, n-1]}(E-H_{\lambda, \alpha, \theta})R_{[0, n-1]} ),
\end{equation*}
where $R_{[x_{1}, x_{2}]}$ is the Dirichlet restriction to $[x_{1}, x_{2}] \subseteq \mathbb{Z}$ and we define $P_{0}(\theta)=1, P_{-1}(\theta)=0$ for convention. Let $P_{[x_{1},x_{2}]}(\theta)=P_{n}(\theta+x_{1}\alpha)$ where $n=x_{2}-x_{1}+1$. A direct computation shows that
\begin{equation*}
A_n(\theta)=\begin{pmatrix}
      P_{n}(\theta)&-P_{n-1}(\theta+\alpha) \\
      P_{n-1}(\theta) & -P_{n-2}(\theta+\alpha)
  \end{pmatrix}.
\end{equation*}
Then by \eqref{prefur}, for sufficiently large $|x_{2}-x_{1}+1|$,
\begin{equation}\label{upperbound}
    |P_{[x_{1},x_{2}]}(\theta)| \leqslant e^{(L(E)+\varepsilon)|x_{2}-x_{1}+1|}.
\end{equation}

\subsection{Block expansion}
For any interval $[x_{1},x_{2}] \subseteq \mathbb{Z}$, define the Green's function
\begin{equation*}
    G_{[x_{1},x_{2}]}=(R_{[x_{1},x_{2}]}(H_{\lambda, \alpha, \theta}-E) R_{[x_{1},x_{2}]})^{-1}.
\end{equation*}
For any $y\in [x_{1},x_{2}]$, the eigenfunction satisfies
\begin{equation}\label{preblock}
    \phi(y)=-G_{[x_{1}, x_{2}]}(y,x_{1}) \phi(x_{1}')-G_{[x_{1}, x_{2}]}(y,x_{2}) \phi(x_{2}'),
\end{equation}
where $x_{1}'=x_{1}-1$ and $x_{2}'=x_{2}+1$. By Cramer's rule, for any $y\in [x_{1},x_{2}]$, 
\begin{equation}\label{pregreen}
    |G_{[x_{1},x_{2}]}(x_{1},y)|=\frac{|P_{[y+1, x_{2}]}|}{|P_{[x_{1},x_{2}]}|},\quad |G_{[x_{1},x_{2}]}(y,x_{2})|=\frac{|P_{[x_{1}, y-1]}|}{|P_{[x_{1},x_{2}]}|}.
\end{equation}
Combine \eqref{preblock} with \eqref{pregreen}, one has
\begin{equation*}
    |\phi(y)|\leqslant \frac{|P_{[y+1, x_{2}]}|}{|P_{[x_{1},x_{2}]}|} |\phi(x_{1}')| +\frac{|P_{[x_{1}, y-1]}|}{|P_{[x_{1},x_{2}]}|} |\phi(x_{2}')|.
\end{equation*}

Given $\{\theta_{1},\cdots,\theta_{n+1}\}$, the Lagrange interpolation terms $\Lag_{m}, m=1,\cdots,n+1$ are defined by
\begin{equation*}
    \Lag_{m}=\ln \max_{a\in\mathbb{T}} \prod_{\ell=1,\ell\neq m}^{n+1} \frac{|\cos 2\pi a-\cos 2\pi\theta_{\ell}|}{|\cos 2\pi \theta_{m}-\cos 2\pi\theta_{\ell}|}.
\end{equation*}
We have the following estimate for the determinant.
\begin{lemma}[\cite{MR4108908}]\label{lowlag}
    Let $\{\theta_{1},\cdots, \theta_{n+1}\} \subseteq \mathbb{T}$. There exists $1\leqslant m\leqslant n+1$ such that
    \begin{equation*}
        \bigg|P_{n}\bigg(\theta_{m}-\frac{n-1}{2}\alpha\bigg)\bigg| \geqslant \frac{e^{n L(E)-\Lag_{m}}}{n+1}.
    \end{equation*}
\end{lemma}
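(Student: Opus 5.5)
The plan is to use Lagrange interpolation in the variable $x=\cos 2\pi\theta$, combined with a lower bound on the sup norm of the determinant. Set
\begin{equation*}
    f(\theta)\coloneqq P_n\Big(\theta-\tfrac{n-1}{2}\alpha\Big).
\end{equation*}
This is the determinant of $R_{[0,n-1]}(E-H)R_{[0,n-1]}$ with diagonal entries $E-2\lambda\cos 2\pi(\theta+(j-\tfrac{n-1}{2})\alpha)$, $j=0,\dots,n-1$.

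\emph{Step 1 (evenness and degree).} The substitution $\theta\mapsto-\theta$ sends the $j$-th diagonal entry to the $(n-1-j)$-th one, because $\cos$ is even. Conjugating by the reversal permutation $j\mapsto n-1-j$ leaves the tridiagonal off-diagonal part unchanged, so $f(-\theta)=f(\theta)$. Expanding the determinant shows that $f$ is a trigonometric polynomial of degree at most $n$. An even trigonometric polynomial of degree $\le n$ can be written as $f(\theta)=Q(\cos 2\pi\theta)$ for some polynomial $Q$ with $\deg Q\le n$.

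\emph{Step 2 (interpolation).} Put $x_\ell=\cos2\pi\theta_\ell$. If two of the $x_\ell$ coincide, the corresponding $\Lag_m$ are $+\infty$ (we may assume this is how such $m$ are read, or else discard that case), and the claim is immediate by choosing such an $m$. Otherwise the $n+1$ nodes are distinct and $\deg Q\le n$, so Lagrange interpolation is exact:
\begin{equation*}
    Q(x)=\sum_{m=1}^{n+1}Q(x_m)\prod_{\ell\ne m}\frac{x-x_\ell}{x_m-x_\ell}.
\end{equation*}
Taking $x=\cos 2\pi a$ and maximizing over $a$ gives
\begin{equation*}
    \max_a|f(a)|\le\sum_m|f(\theta_m)|e^{\Lag_m}\le (n+1)\max_m |f(\theta_m)|e^{\Lag_m}.
\end{equation*}
Note that $f(\theta_m)=P_n(\theta_m-\frac{n-1}{2}\alpha)$, which is exactly the quantity in the statement.

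\emph{Step 3 (lower bound $\max_a|f(a)|\ge e^{nL(E)}$).} This is the step I expect to be the main obstacle, since it is where the identity $L(E)=\max\{0,\ln|\lambda|\}$ enters.

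In the supercritical case, write $z=e^{2\pi i\theta}$. Then $z^nf$ is a polynomial in $z$ whose leading coefficient is $(-\lambda)^n e^{2\pi i c}$ for some real $c$, because only the product of the diagonal entries contributes the top power. Jensen's formula then gives
\begin{equation*}
    \int_{\mathbb T}\ln|f|\,d\theta\ge n\ln|\lambda|.
\end{equation*}
Hence $\max|f|\ge e^{n\ln|\lambda|}=e^{nL(E)}$, using the preceding lemma for $E\in\Sigma_{\lambda,\alpha}$.

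When $L(E)=0$, the Jensen bound above is too weak if $|\lambda|<1$. In that case I would instead use the constant term of $f$ in the Fourier expansion, or simply $\max|f|\ge |\hat f(0)|$. Alternatively, I would run the Herman-type argument with the entry-wise maximum of $A_n$, together with the fact that
\begin{equation*}
    \frac1n\int\ln\|A_n\|\,d\theta\ge L(E),
\end{equation*}
and the expression of $A_n$ in terms of $P_n,P_{n-1},P_{n-2}$.

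In the regime actually used later in the paper ($\ln|\lambda|>0$), the Jensen argument suffices.

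\emph{Step 4 (conclusion).} Combining Steps 2 and 3 gives
\begin{equation*}
    e^{nL(E)}\le (n+1)\,|f(\theta_m)|\,e^{\Lag_m}
\end{equation*}
for the maximizing $m$. Rearranging yields the claimed bound.
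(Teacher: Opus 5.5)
Your proposal is correct and follows the standard argument behind this lemma, which the paper only cites: evenness of $\theta\mapsto P_n(\theta-\frac{n-1}{2}\alpha)$, exact Lagrange interpolation in $x=\cos2\pi\theta$, and the Herman--Jensen bound $\int_{\mathbb T}\ln|P_n|\,d\theta\geqslant n\ln|\lambda|=nL(E)$ for $|\lambda|\geqslant1$ and $E\in\Sigma_{\lambda,\alpha}$, which covers every generalized eigenvalue. (Strictly, Jensen's formula sees the constant coefficient of $z^nf$ rather than the leading one, but both have modulus $|\lambda|^n$.)

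Do drop the fallbacks you sketch for $|\lambda|<1$, because no argument can work there: the statement itself fails in that regime. For example, take $n=1$, $E=0$, so $f(\theta)=-2\lambda\cos2\pi\theta$, and nodes with $\cos2\pi\theta_{1,2}=\pm\frac1{10}$. Then $\Lag_1=\Lag_2=\ln\frac{11}{2}$, so the lemma would require $\frac{|\lambda|}{5}\geqslant\frac{e^{L(0)}}{11}\geqslant\frac1{11}$, which is false for $|\lambda|<\frac5{11}$. The supercritical hypothesis is therefore part of the lemma, which is the regime where the paper uses it, not merely a convenience.
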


The following block expansion estimate will be used in resonant sites.
\begin{lemma}[\cite{MR4686650}]\label{block}
    Let $[x_{1},x_{2}]\subseteq \mathbb{Z}$ with $x_{2}-x_{1}+1=n$. Assume that $|P_{[x_{1},x_{2}]}|\geqslant e^{L(E)|x_{2}-x_{1}|-\xi}$. Then for any $\varepsilon>0$ and sufficiently large $n$,
    \begin{equation*}
        |\phi(y)|\leqslant e^{\xi} \sum_{i=1,2}|\phi(x_{i}')| e^{-L(E)|y-x_{i}|+\varepsilon n}.
    \end{equation*}
\end{lemma}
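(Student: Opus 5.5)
The plan is to combine the Cramer's rule bound \eqref{pregreen}, in the form displayed just before the definition of $\Lag_m$, with the uniform upper bound \eqref{upperbound} on the numerators. The assumed lower bound on $|P_{[x_1,x_2]}|$ controls the denominator. For $y\in[x_1,x_2]$ we start from
\begin{equation*}
    |\phi(y)|\leqslant \frac{|P_{[y+1, x_{2}]}|}{|P_{[x_{1},x_{2}]}|} |\phi(x_{1}')| +\frac{|P_{[x_{1}, y-1]}|}{|P_{[x_{1},x_{2}]}|} |\phi(x_{2}')|,
\end{equation*}
and we treat the two terms symmetrically.

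\textbf{Step 1: bound the numerators at every length.} Given $\varepsilon>0$, we apply \eqref{upperbound} with $\varepsilon/2$ in place of $\varepsilon$. Let $N=N(\varepsilon/2,E)$ be the corresponding threshold. Every block of length $k>N$ then satisfies $|P_k(\cdot)|\leqslant e^{(L(E)+\varepsilon/2)k}$. Blocks of length $k\leqslant N$ are handled crudely: $|P_k|\leqslant \|A_k\|\leqslant (2+2|\lambda|+|E|)^{k}$. This covers the conventions $P_0=1$ and $P_{-1}=0$, which arise when $y=x_2$ or $y=x_1$. Hence there is a constant $C=C(\varepsilon,E,\lambda)$, independent of $n$, such that
\begin{equation*}
    |P_{[y+1,x_2]}|\leqslant C e^{(L(E)+\varepsilon/2)(x_2-y)}
    \quad\text{and}\quad
    |P_{[x_1,y-1]}|\leqslant C e^{(L(E)+\varepsilon/2)(y-x_1)}.
\end{equation*}

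\textbf{Step 2: divide by the denominator.} Using $|P_{[x_1,x_2]}|\geqslant e^{L(E)(x_2-x_1)-\xi}$, the first ratio is at most
\begin{equation*}
    C e^{\xi}\, e^{L(E)(x_2-y)-L(E)(x_2-x_1)+\frac{\varepsilon}{2}(x_2-y)}
    = C e^{\xi} e^{-L(E)|y-x_1|+\frac{\varepsilon}{2}(x_2-y)}
    \leqslant C e^{\xi} e^{-L(E)|y-x_1|+\frac{\varepsilon}{2} n}.
\end{equation*}
The second ratio gives, in the same way, $C e^{\xi} e^{-L(E)|y-x_2|+\frac{\varepsilon}{2} n}$.

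\textbf{Step 3: absorb the constant.} Choose $n$ so large that $C\leqslant e^{\varepsilon n/2}$. Then the constant disappears into the error term and we obtain exactly
\begin{equation*}
    |\phi(y)|\leqslant e^{\xi} \sum_{i=1,2}|\phi(x_{i}')| e^{-L(E)|y-x_{i}|+\varepsilon n}.
\end{equation*}

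The only point requiring care is the one handled in Step 1. When $y$ is close to an endpoint, one numerator has short length, so \eqref{upperbound} does not apply directly. We resolve this with a length-independent constant and then absorb that constant for large $n$. This is why the conclusion is stated only for sufficiently large $n$, and why $\varepsilon$ appears multiplied by $n$ rather than by $|y-x_i|$.
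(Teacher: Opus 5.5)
Your argument is correct and is the standard proof of this lemma; the paper itself gives no proof and only quotes it from the cited source. You combine Cramer's rule for the Green's function, the phase-uniform Furman bound \eqref{upperbound} on the numerators (with a length-independent constant for short blocks), the hypothesis on the denominator, and absorption of that constant into $e^{\varepsilon n/2}$ for large $n$. One cosmetic slip: for $y\in[x_1,x_2]$ the endpoint cases only produce empty blocks with $P_0=1$, never $P_{-1}$, which changes nothing.
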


For the non-resonant sites, we use following estimates. Recall that for a given $\tau > 0$, a point $y \in \mathbb{Z}$ is called $(\tau, t)$-regular if there exists an interval $[x_1, x_2]$ containing $y$, where $x_2 = x_1 + t - 1$, such that
\begin{equation*}
	|G_{[x_1, x_2]}(y, x_i)| < e^{-\tau |y - x_i|} \quad \text{and} \quad |y - x_i| \geqslant \frac{t}{40} \quad \text{for } i = 1, 2.
\end{equation*}
\begin{theorem}[\cite{MR4756946}]\label{block1}
    Suppose $y_1, y_2 \in \mathbb{Z}$ are such that $y_2 - y_1 = K$. Suppose there exists some $\tau > 0$ such that for any $y \in [y_1 + \gamma K, y_2 - \gamma K]$, $y$ is $(\tau, t)$-regular for some $\frac{\gamma K}{20} < t \leqslant \frac{1}{2}\min\{|y - y_1|, |y - y_2|\}$. 
    Set $R_{y}=\sup_{|\sigma|\leqslant 10\gamma}|\phi(y+\sigma K)|$. Then for large enough $K$ (depending on $\tau$ and $\gamma$),
    \begin{equation*}
        R_y \leqslant \max \big\{ R_{y_1} e^{-\tau(|y - y_1| - 3\gamma K)},\, R_{y_2} e^{-\tau(|y - y_2| - 3\gamma K)} \big\}.
    \end{equation*}
for all $y \in [y_1 + 10\gamma K, y_2 - 10\gamma K]$.
\end{theorem}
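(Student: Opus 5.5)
The plan is to prove Theorem~\ref{block1} by iterating the Poisson formula \eqref{preblock} on the regularity intervals and running a discrete maximum-principle argument against a comparison function that decays at rate $\tau$ from both ends. Every point of the middle region $[y_1+\gamma K,\,y_2-\gamma K]$ comes with an interval $[x_1,x_2]\ni y$ on which the Green's function decays at rate $\tau$. So \eqref{preblock} gives
\begin{equation*}
|\phi(y)|\leqslant e^{-\tau|y-x_1|}|\phi(x_1')|+e^{-\tau|y-x_2|}|\phi(x_2')|,
\end{equation*}
with $|y-x_i|\geqslant t/40>\gamma K/800$ and $|y-x_i|\leqslant t\leqslant \tfrac12\min\{|y-y_1|,|y-y_2|\}$. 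The constraint $t\leqslant\frac12\min\{\cdot\}$ guarantees that both $x_i'$ still lie in $[y_1,y_2]$ and in fact stay a definite distance away from the far endpoint.

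First I would expand $\phi(y)$ along trees of such steps. Each branch is followed until its current point enters one of the end zones $[y_1,y_1+\gamma K]$ or $[y_2-\gamma K,y_2]$, where it is stopped. A branch consisting of $s$ steps from $y$ to a terminal point $z$ carries the weight
\begin{equation*}
\exp\Big(-\tau\sum_j |z_{j}-z_{j+1}| + \tau s\Big)\geqslant\text{(its contribution)},
\end{equation*}
where the extra $+\tau s$ accounts for the shift $x_i\mapsto x_i'$. By the triangle inequality, $\sum_j|z_j-z_{j+1}|\geqslant |y-z|$. Since each step has length at least $\gamma K/800$, a branch reaching the end zone near $y_1$ has decayed by at least $e^{-\tau(|y-y_1|-\gamma K)}$ up to the $e^{\tau s}$ loss. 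A terminal point $z$ within $\gamma K$ of $y_1$ satisfies $|\phi(z)|\leqslant R_{y_1}$, because the window $|\sigma|\leqslant 10\gamma$ covers it. The same holds near $y_2$.

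The difficulty is that the tree is not obviously finite: paths may bounce back and forth inside the middle region, and the number of branches grows like $2^s$. I would handle this by truncating at depth $s_0=C/\gamma$ for a large constant $C$. A branch that has not terminated by then has already accumulated decay of at least $e^{-\tau s_0\gamma K/800}=e^{-\tau CK/800}$. Its endpoint is bounded using the a priori bound $|\phi(n)|\lesssim |n|+1$ from \eqref{eigen1}, or more robustly by the maximum of $|\phi|$ over $[y_1,y_2]$.

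To avoid comparing with the possibly tiny $R_{y_i}$, I would instead phrase the argument as a maximum principle. Let $F(y)$ denote the right-hand side of the claimed bound, and set $M=\max_{y}R_y/F(y)$ over $y\in[y_1+10\gamma K,\,y_2-10\gamma K]$; this is a maximum over a finite set. Take $y^*$ attaining $M$, and apply the depth-$s_0$ expansion to each point $y^*+\sigma K$ of its window. Branches that terminate contribute at most $2^{s_0}e^{\tau s_0}$ times $F(y^*)e^{-\tau\cdot 2\gamma K}$, because of the $3\gamma K$ slack built into $F$ against the at most $\gamma K$ end-zone offset. Non-terminated branches contribute at most $2^{s_0}e^{\tau s_0}e^{-\tau CK/800}\cdot M F$, using the fact that $\log F$ is $\tau$-Lipschitz. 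For $K$ large (depending on $\tau,\gamma$), the constants $2^{s_0}e^{\tau s_0}$ are $e^{O(1/\gamma)}$ and are swallowed by either $e^{-\tau\gamma K}$ or $e^{-\tau CK/800}$. This yields $M\leqslant \tfrac12 M+1$, hence $M\leqslant 2$, and the constant is absorbed by slightly adjusting the slack.

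The step I expect to be the main obstacle is the bookkeeping in this expansion. It has to be checked that intermediate points of a branch which land in the band $[y_1+\gamma K,\,y_1+10\gamma K]$ are correctly controlled by the window sup $R_{y_1}$. It also has to be checked that the constraint $t\leqslant\frac12\min\{|y-y_1|,|y-y_2|\}$ prevents a single step from jumping across from one end zone to the other. If this fails, I would first try enlarging the end zones from $\gamma K$ to $3\gamma K$ exactly so as to match the $3\gamma K$ loss in the statement.
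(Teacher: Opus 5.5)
The paper gives no proof of this statement; it is quoted from \cite{MR4756946}, so I am judging your argument on its own. The mechanism is the right one: iterate the Poisson formula \eqref{preblock} along regularity intervals, stop a branch when it enters $[y_1,y_1+\gamma K)$ or $(y_2-\gamma K,y_2]$, truncate at depth $s_0=O(1/\gamma)$, and close with a maximum principle. But the step where you pass from $y^*$ to its whole window fails. It cannot be repaired, because the inequality with $R_y$ on the left is false as written. Take a window point $z=y^*+\sigma K$ with $\sigma$ near $-10\gamma$. A branch started at $z$ and stopped near $y_1$ is only worth $e^{\tau s}R_{y_1}e^{-\tau(|z-y_1|-\gamma K)}\approx e^{\tau s}R_{y_1}e^{-\tau(|y^*-y_1|-11\gamma K)}$. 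This exceeds $F(y^*)$ by a factor up to $e^{8\tau\gamma K}$. The $3\gamma K$ slack pays for the end-zone offset measured from the point where the expansion starts, not for the $10\gamma K$ width of the window. Your fallback of enlarging the end zones to $3\gamma K$ does not address this. Moreover, when $y^*<y_1+11\gamma K$, part of the window lies in the end zone, where there is nothing to expand.

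Concretely, for $y=y_1+10\gamma K$ the window of $y$ contains $y_1$. The claim would then give $|\phi(y_1)|\leqslant\max\{R_{y_1}e^{-7\tau\gamma K},R_{y_2}e^{-\tau(1-13\gamma)K}\}$. That forces $|\phi(y_1)|$ to be exponentially smaller than $\sup_{|x-y_1|\leqslant 10\gamma K}|\phi(x)|$ unless $R_{y_2}$ is enormous. The hypotheses only concern Green's functions on intervals inside the middle region, and they do not prevent $|\phi(y_1)|=R_{y_1}\geqslant R_{y_2}$. Think of $y_1$ sitting at the peak of an exponentially localized eigenfunction.

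What is true, and what the paper actually uses, is the version with $|\phi(y)|$ on the left and window suprema only at $y_1,y_2$; Theorem~\ref{nonres} converts exactly this into pointwise bounds on $|\phi(y)|$ with a $3\varepsilon q_n$ loss. Your argument proves it if you make two changes. First, set $M=\max_{y\in[y_1+\gamma K,\,y_2-\gamma K]}|\phi(y)|/F(y)$. Second, expand only from the maximizer $y^*$.

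The stopped branches are then fine. Since $|z_j-z_{j+1}|\leqslant t\leqslant\frac{1}{2}\min\{|z_j-y_1|,|z_j-y_2|\}$, a stopped branch ends in $[y_1+\frac{\gamma K}{2},\,y_1+\gamma K)$ or its mirror image. It therefore lies inside the window of $y_1$ or $y_2$, and contributes at most $e^{\tau s_0-2\tau\gamma K}F(y^*)$.

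For surviving branches, one step in your bound needs correcting. A branch surviving $s_0$ steps ends at a middle point $w$ with $|\phi(w)|\leqslant MF(w)\leqslant MF(y^*)e^{\tau K}$. The Lipschitz step costs $e^{\tau|w-y^*|}\leqslant e^{\tau K}$, which your bound dropped. The weight of such a branch is at most $e^{\tau s_0-\tau s_0\gamma K/800}$.

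Take $s_0=\lceil 1600/\gamma\rceil$ and note there are at most $2^{s_0}$ leaves. Then $M\leqslant 2^{s_0}e^{\tau s_0}(e^{-2\tau\gamma K}+e^{-\tau K}M)\leqslant\frac{1}{2}+\frac{1}{2}M$ for $K$ large, so $M\leqslant 1$ directly. There is no leftover constant, which matters because the $3\gamma K$ slack is fixed. If you want a window statement, apply this pointwise bound at each window point and use that $\log F$ is $\tau$-Lipschitz; that gives a window version, but only with $13\gamma K$ in place of $3\gamma K$.
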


In the rest of the paper, $C>0$ denotes a large constant independent of $n$. Its value may change from line to line. We always take $n$ sufficiently large, depending on the fixed parameters and constants. We use $\{x\}$ to represent the fractional part of $x\in\mathbb{R}$.

\section{Construction of frequency and phase}\label{construction}
Fix any $0 < \mu < \infty$ and $0 < \kappa < \infty$. We first construct $\alpha$ and $\theta$ such that $\beta(\alpha)=\mu$ and $\delta(\alpha,\theta)=\kappa$.
\subsection{Construction of the frequency}
Let 
\begin{equation}\label{eta}
    0<\eta<\min \bigg\{10^{-2},\frac{\mu}{2\kappa}\bigg\}.
\end{equation}
Choose $j_0$ sufficiently large. We use the standard notation for
continued fractions: $q_{-1}=0$, $q_{0}=1$, and 
\begin{equation*}
    q_{n+1}=a_{n+1}q_{n}+q_{n-1}.
\end{equation*}
Choose $a_{1}=2$. For $1\leqslant n\leqslant j_{0}$, define $a_{n+1}=q_{n}^{10}$. And for $n>j_{0}$, define
\begin{equation}\label{an}
    a_{n+1} = \begin{cases}
        \left\lfloor \dfrac{e^{\kappa\eta q_n}}{q_n} \right\rfloor, & \text{if } n = 2j-1, \\[2ex]
        \left\lfloor \dfrac{e^{\mu q_n}}{q_n} \right\rfloor, & \text{if } n = 2j.
    \end{cases}
\end{equation}

Define the frequency $\alpha$ via 
\begin{equation*}
    \alpha\coloneq\cfrac{1}{a_{1}+\cfrac{1}{a_{2}+\cdots}}\in\mathbb{R}\setminus\mathbb{Q}.
\end{equation*}
By \eqref{an}, as $j\to\infty$,
\begin{equation*}
    q_{2j}=e^{\kappa\eta q_{2j-1}}(1+o(1)),\quad
    q_{2j+1}=e^{\mu q_{2j}}(1+o(1)).
\end{equation*}
Hence, as $j\to\infty$,
\begin{equation*}
    \frac{\ln q_{2j}}{q_{2j-1}}\to \kappa\eta,
    \quad
    \frac{\ln q_{2j+1}}{q_{2j}}\to \mu.
\end{equation*}
Since $\kappa\eta<\mu$ by \eqref{eta}, we obtain
\begin{equation*}
    \beta(\alpha)=\limsup_{n\to\infty}\frac{\ln q_{n+1}}{q_{n}}=\mu.
\end{equation*}

\subsection{Construction of the phase}
Since the irrational rotation by $\alpha$ is minimal, we can choose
a sufficiently large positive integer $K$ such that
\begin{equation*}
    \{-K\alpha\}\in\bigg[\frac{1}{10},\frac{1}{5}\bigg].
\end{equation*}
By choosing $j_0$ sufficiently large, we have that for all $j \ge j_0$,
\begin{equation*}
    \eta \frac{q_{2j+1}}{q_{2j}} \ge 2.
\end{equation*}
Set $k_{j_0}=K$. For $j<j_0$, set $k_j=0$. For $j\geqslant j_{0}$, define
\begin{equation}\label{defin_k}
    k_{j+1}=k_{j}+\bigg\lfloor\eta \frac{q_{2j+1}}{q_{2j}}\bigg\rfloor q_{2j}.
\end{equation}
And for $j\geqslant j_{0}$, we define the interval

\begin{equation}\label{Ij}
    I_{j}=\bigg\{\theta\in [0,1/2]: -\frac{10\eta}{q_{2j}}\leqslant 2\theta-\{-k_{j}\alpha\} \leqslant -\frac{\eta}{10 q_{2j}}\bigg\}.
\end{equation}

\begin{lemma}\label{closeint}
    There exists $\theta= \bigcap_{j\geqslant j_{0}}I_{j}$.
\end{lemma}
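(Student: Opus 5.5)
The plan is to show that the intervals $I_j$, $j\geqslant j_0$, form a nested sequence of nonempty compact intervals whose lengths tend to zero. Cantor's intersection theorem then gives that $\bigcap_{j\geqslant j_0}I_j$ is a single point $\theta$. It is convenient to work with $x=2\theta\in[0,1]$ and to write $c_j\coloneqq\{-k_j\alpha\}$, so that
\begin{equation*}
    I_j=\bigg\{\theta:\ 2\theta\in\Big[c_j-\frac{10\eta}{q_{2j}},\ c_j-\frac{\eta}{10q_{2j}}\Big]\bigg\}.
\end{equation*}

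\textbf{Step 1 (how $c_j$ moves).} Write $m_j=\lfloor \eta q_{2j+1}/q_{2j}\rfloor$, so that $k_{j+1}-k_j=m_jq_{2j}$ by \eqref{defin_k}.
\begin{itemize}
    \item Since $(-1)^{2j}(q_{2j}\alpha-p_{2j})>0$, we have $q_{2j}\alpha=p_{2j}+\epsilon_j$ with $\frac{1}{2q_{2j+1}}\leqslant\epsilon_j\leqslant\frac{1}{q_{2j+1}}$.
    \item Hence $-k_{j+1}\alpha\equiv -k_j\alpha-m_j\epsilon_j \pmod 1$.
    \item The choice of $j_0$ gives $\eta q_{2j+1}/q_{2j}\geqslant 2$, so $m_j\geqslant \frac{\eta q_{2j+1}}{2q_{2j}}$. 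Together with $m_j\leqslant \eta q_{2j+1}/q_{2j}$ this yields
    \begin{equation*}
        \frac{\eta}{4q_{2j}}\leqslant m_j\epsilon_j\leqslant \frac{\eta}{q_{2j}}.
    \end{equation*}
\end{itemize}

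\textbf{Step 2 (no wrap-around).} We have $c_{j_0}=\{-K\alpha\}\in[\frac1{10},\frac15]$. The total shift is bounded by $\sum_{j\geqslant j_0}\eta/q_{2j}$, which is tiny because $q_n$ grows super-exponentially and $j_0$ is large. An induction therefore shows $c_j\in[\frac1{20},\frac15]$ for all $j\geqslant j_0$. Consequently the fractional part never wraps, and exactly
\begin{equation*}
    c_{j+1}=c_j-m_j\epsilon_j .
\end{equation*}
This also ensures that the $x$-intervals lie inside $[0,1]$, so each $I_j$ is a nonempty closed subinterval of $[0,1/2]$.

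\textbf{Step 3 (nesting).} Check both endpoints of $I_{j+1}$ against those of $I_j$.
\begin{itemize}
    \item Right endpoint:
    \begin{equation*}
        c_{j+1}-\frac{\eta}{10q_{2j+2}}\leqslant c_j-\frac{\eta}{4q_{2j}}\leqslant c_j-\frac{\eta}{10q_{2j}}.
    \end{equation*}
    \item Left endpoint:
    \begin{equation*}
        c_{j+1}-\frac{10\eta}{q_{2j+2}}\geqslant c_j-\frac{\eta}{q_{2j}}-\frac{10\eta}{q_{2j+2}}\geqslant c_j-\frac{10\eta}{q_{2j}},
    \end{equation*}
    using $q_{2j+2}\gg q_{2j}$.
\end{itemize}
Thus $I_{j+1}\subseteq I_j$. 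Since $|I_j|\leqslant 5\eta/q_{2j}\to0$, the intersection is a single point $\theta$.

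The only delicate point is Step 2, namely controlling the fractional part so that $\{-k_{j+1}\alpha\}=\{-k_j\alpha\}-m_j\epsilon_j$ holds exactly. This is handled by the summability of $\eta/q_{2j}$ and the initial placement $\{-K\alpha\}\in[\frac1{10},\frac15]$.
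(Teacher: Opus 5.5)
Your proposal is correct and follows essentially the same route as the paper. It bounds the shift $\{-k_{j+1}\alpha\}-\{-k_j\alpha\}$ between $-\eta/q_{2j}$ and $-\eta/(4q_{2j})$, uses $\{-K\alpha\}\in[\frac1{10},\frac15]$ plus summability of $\eta/q_{2j}$ to keep $\{-k_j\alpha\}\in[\frac1{20},\frac15]$, compares left and right endpoints to get nesting, and concludes from $|I_j|\to 0$. Your Step 2 induction is in fact slightly more careful than the paper: it justifies that no wrap-around occurs before using $c_{j+1}=c_j-m_j\epsilon_j$, whereas the paper asserts this identity before establishing the range of $\{-k_j\alpha\}$.
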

\begin{proof}
Let
\begin{equation*}
    m_{j}=\bigg\lfloor\eta \frac{q_{2j+1}}{q_{2j}}\bigg\rfloor,
\end{equation*}
then  by the definition of $j_0$, we have
\begin{equation}\label{dkj}
    \frac{\eta q_{2j+1}}{2} \leqslant k_{j+1}-k_{j}=m_{j}q_{2j}\leqslant \eta q_{2j+1}.
\end{equation}
For even labels, by standard Diophantine approximation, we have $q_{2j}\alpha-p_{2j}>0$. Moreover,
\begin{equation*}
    \frac{1}{2q_{2j+1}}< \|q_{2j}\alpha\|_{\mathbb{R}/\mathbb{Z}}< \frac{1}{q_{2j+1}}.
\end{equation*}
Since $q_{2j+1}\gg q_{2j}$, it follows that
\begin{equation*}
    \frac{\eta}{4q_{2j}}\leqslant m_j\|q_{2j}\alpha\|_{\mathbb{R}/\mathbb{Z}}\leqslant \frac{\eta}{q_{2j}}.
\end{equation*}
By \eqref{dkj}, we have
\begin{equation}\label{dkjalpha}
    -\frac{\eta}{q_{2j}}\leqslant\{-k_{j+1}\alpha\}-\{-k_{j}\alpha\}\leqslant -\frac{\eta}{4q_{2j}}.
\end{equation}

Since $\{-K\alpha\}\in [\frac{1}{10},\frac{1}{5}]$ and
\begin{equation*}
    \sum_{j\geqslant j_{0}} \frac{\eta}{q_{2j}}\leqslant \frac{1}{20},
\end{equation*}
we obtain that
\begin{equation*}
    \{-k_{j}\alpha\}\in \bigg[\frac{1}{20},\frac{1}{5}\bigg]\quad \text{for any}\ j\geqslant j_{0}.
\end{equation*}
Write
\begin{equation*}
    I_{j}=\bigg[\frac{A_{j}}{2}, \frac{B_{j}}{2}\bigg],
\end{equation*}
where
\begin{equation*}
    A_{j}=\{-k_{j}\alpha\}-\frac{10\eta}{q_{2j}},\quad B_{j}=\{-k_{j}\alpha\}-\frac{\eta}{10 q_{2j}}.
\end{equation*}
On the one hand, by \eqref{dkjalpha} and $q_{2j+2}\gg q_{2j}$,
\begin{equation*}
    \begin{split}
        A_{j+1}-A_{j}&=(\{-k_{j+1}\alpha\}-\{-k_{j}\alpha\})+\frac{10\eta}{q_{2j}}-\frac{10\eta}{q_{2j+2}}\\
        &\geqslant -\frac{\eta}{q_{2j}}+\frac{10\eta}{q_{2j}}-\frac{10\eta}{q_{2j+2}}\\
        &\geqslant\frac{8\eta}{q_{2j}}>0.
    \end{split}
\end{equation*}
On the other hand, by \eqref{dkjalpha} 
\begin{equation*}
    \begin{split}
        B_{j+1}-B_{j}&=(\{-k_{j+1}\alpha\}-\{-k_{j}\alpha\})+\frac{\eta}{10q_{2j}}-\frac{\eta}{10q_{2j+2}}\\
        &\leqslant -\frac{\eta}{4q_{2j}}+\frac{\eta}{10q_{2j}}\\
        &\leqslant -\frac{\eta}{10q_{2j}}<0.
    \end{split}
\end{equation*}
Thus $I_{j+1}\subseteq I_{j}$ for any $j\geqslant j_{0}$. As $\{I_j\}_{j\geqslant j_0}$ is a decreasing sequence of nonempty
closed intervals and $|I_{j}|\to 0$ as $j\to\infty$, there exists exactly one point $\theta$ such that
\begin{equation*}
    \theta\coloneq\bigcap_{j\geqslant {j_0}} I_j .   
\end{equation*}
This finishes the proof.
\end{proof}

\begin{lemma}\label{appkj} 
    For sufficiently large $j_{0}$,
    \begin{equation*}
        |k_{j}-\eta q_{2j-1}|\leqslant 100 q_{2j-2},\quad \text{for all}\ j> j_{0}.
    \end{equation*}
\end{lemma}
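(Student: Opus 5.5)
The plan is a direct induction on the recursion \eqref{defin_k}, using that consecutive denominators grow extremely fast. For $j>j_0$ write
\begin{equation*}
    k_{j}=k_{j-1}+m_{j-1}q_{2j-2},\qquad m_{j-1}=\bigg\lfloor \eta\frac{q_{2j-1}}{q_{2j-2}}\bigg\rfloor .
\end{equation*}
Since $\eta q_{2j-1}/q_{2j-2}-1<m_{j-1}\leqslant \eta q_{2j-1}/q_{2j-2}$, we get
\begin{equation*}
    \eta q_{2j-1}-q_{2j-2}< m_{j-1}q_{2j-2}\leqslant \eta q_{2j-1}.
\end{equation*}
Because $k_{j-1}\geqslant 0$, this yields
\begin{equation*}
    |k_j-\eta q_{2j-1}|\leqslant q_{2j-2}+k_{j-1}.
\end{equation*}
It therefore suffices to show $k_{j-1}\leqslant 99\,q_{2j-2}$; in fact I will aim for $k_{j-1}\leqslant q_{2j-2}$.

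To bound $k_{j-1}$, I would unroll the recursion:
\begin{equation*}
    k_{j-1}=K+\sum_{i=j_0}^{j-2} m_i q_{2i}\leqslant K+\eta\sum_{i=j_0}^{j-2} q_{2i+1}.
\end{equation*}
By the superexponential growth of the $q_n$ (each $a_{n+1}\geqslant 2$, and in fact $a_{n+1}$ is huge for $n\geqslant 1$), the sum is at most $2\eta q_{2j-3}$. When $j-1>j_0$, the crude bound $q_{2j-2}\geqslant a_{2j-2}q_{2j-3}$ with $a_{2j-2}\geqslant 4$ gives $2\eta q_{2j-3}\leqslant \tfrac12 q_{2j-2}$ (recall $\eta<10^{-2}$). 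It then remains to have $K\leqslant \tfrac12 q_{2j-2}$, and since $q_{2j-2}\geqslant q_{2j_0}$, it is enough that $K\leqslant \tfrac12 q_{2j_0}$. In the base case $j=j_0+1$ we have $k_{j-1}=K$ directly, and the same bound $K\leqslant q_{2j_0}$ suffices.

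The only genuine subtlety, and the step I expect to require care, is controlling $K$ relative to $q_{2j_0}$, since $K$ is chosen after $\alpha$ (hence after $j_0$). I would handle it by noting that the first partial quotients $a_1=2$ and $a_{n+1}=q_n^{10}$ do not depend on $j_0$ as long as $n\leqslant j_0$. Hence a rational approximation of $\alpha$ up to a fixed level $n_1$ is independent of $j_0$. Using it, one finds a $K\leqslant q_{n_1}$, with $n_1$ fixed, such that $\{-K\alpha\}$ lies well inside $[1/10,1/5]$, because the orbit $\{-k p_{n_1}/q_{n_1}\}$ is $1/q_{n_1}$-dense and the perturbation $K\|q_{n_1}\alpha\|$ is tiny. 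Taking $j_0\geqslant n_1$ large then gives $K\leqslant q_{n_1}\ll q_{2j_0}$.

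Combining the two bounds gives $k_{j-1}\leqslant q_{2j-2}$, hence $|k_j-\eta q_{2j-1}|\leqslant 2q_{2j-2}\leqslant 100\,q_{2j-2}$ for all $j>j_0$.
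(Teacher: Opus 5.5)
Your proof is correct and is essentially the paper's argument. Like the paper, you unroll the recursion \eqref{defin_k}, let each floor cost at most $q_{2\ell}$, and use the superexponential growth of the $q_n$ to bound every sum by its top term; peeling off only the last step and using $k_{j-1}\geqslant 0$ even gives the constant $2$ instead of $100$.

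Your observation that $K\leqslant q_{n_1}$ can be chosen independently of $j_0$ is a useful addition, since $a_1,\dots,a_{j_0+1}$ follow a fixed rule. The paper silently uses $k_{j_0}=K\leqslant q_{2j-2}$, even though its $K$ is chosen after $\alpha$, and hence after $j_0$.
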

\begin{proof}
For $\ell\geqslant j_{0}$, we set
\begin{equation*}
    \rho_{\ell}\coloneqq k_{\ell+1}-k_{\ell}-\eta q_{2\ell +1}.
\end{equation*}
By \eqref{defin_k}, we have
\begin{equation*}
    -q_{2\ell}\leqslant \rho_{\ell}\leqslant 0.
\end{equation*}
Thus for $j> j_{0}$,
\begin{equation*}
    k_{j}=k_{j_{0}}+ \sum_{\ell=j_{0}}^{j-1}(k_{\ell+1}-k_{\ell}) =k_{j_{0}}+\sum_{\ell=j_{0}}^{j-1} (\eta q_{2\ell+1}+\rho_{\ell}).
\end{equation*}
Therefore,
\begin{equation*}
    \begin{split}
        |k_{j}-\eta q_{2j-1}|&\leqslant k_{j_0}+ \eta\sum_{\ell=j_0}^{j-2}q_{2\ell+1}+ \sum_{\ell=j_0}^{j-1}q_{2\ell}\\
        &\leqslant q_{2j-2}+2\eta q_{2j-2}+ 2q_{2j-2}<100 q_{2j-2}.
    \end{split}
\end{equation*}
This finishes the proof.
\end{proof}

\begin{lemma}
    Let $\theta$ be in \hyperref[closeint]{Lemma~\ref*{closeint}}. Then
    \begin{equation*}
        \lim_{j\to \infty} -\frac{\ln \|2\theta+k_{j}\alpha\|}{k_{j}}=\kappa=\delta(\alpha,\theta).
    \end{equation*}
\end{lemma}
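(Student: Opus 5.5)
The plan is to prove two things: the limit along the subsequence $k_j$, and the upper bound $\delta(\alpha,\theta)\leqslant\kappa$. Since the subsequence limit gives $\delta(\alpha,\theta)\geqslant\kappa$, these together yield the claim.

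\emph{Subsequence limit.} Lemma~\ref{closeint} gives $\theta\in I_j$. Note that $\{-k_j\alpha\}\in[\frac1{20},\frac15]$ and that $2\theta-\{-k_j\alpha\}$ is within $10\eta/q_{2j}$ of $0$. Hence $\|2\theta+k_j\alpha\|_{\mathbb R/\mathbb Z}=|2\theta-\{-k_j\alpha\}|$, and \eqref{Ij} gives
\begin{equation*}
\frac{\eta}{10q_{2j}}\leqslant \|2\theta+k_j\alpha\|_{\mathbb R/\mathbb Z}\leqslant \frac{10\eta}{q_{2j}}.
\end{equation*}
Therefore $-\ln\|2\theta+k_j\alpha\|=\ln q_{2j}+O(1)=\kappa\eta q_{2j-1}(1+o(1))$ by the growth of $a_{2j}$ in \eqref{an}. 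Lemma~\ref{appkj} gives $k_j=\eta q_{2j-1}+O(q_{2j-2})=\eta q_{2j-1}(1+o(1))$, because $q_{2j-1}\gg q_{2j-2}$. Dividing the two asymptotics gives the limit $\kappa$, and hence $\delta(\alpha,\theta)\geqslant\kappa$.

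\emph{Upper bound.} Fix $\varepsilon>0$ and suppose $|k|$ is large with $\|2\theta+k\alpha\|<e^{-(\kappa+\varepsilon)|k|}$; we aim for a contradiction. Choose $j$ so that $\eta q_{2j-1}/2\lesssim |k-k_j|$ is comparable to the scale where $k$ lives; concretely, take $j$ with $k$ closest to $k_j$ among the $k_i$. Put $m=k-k_j$. Then
\begin{equation*}
\|m\alpha\|\leqslant \|2\theta+k\alpha\|+\|2\theta+k_j\alpha\|.
\end{equation*}

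Case (a): $|m|<q_{2j}$. The best-approximation property gives $\|m\alpha\|\geqslant\|q_{2j-1}\alpha\|\geqslant 1/(2q_{2j})$ when $m\neq0$. Together with $\|2\theta+k_j\alpha\|\leqslant 10\eta/q_{2j}$ and $\eta<10^{-2}$, this only shows that both quantities are of order $1/q_{2j}$. We therefore argue the other way: $\|2\theta+k\alpha\|\geqslant \|m\alpha\|-\|2\theta+k_j\alpha\|$ cannot absorb the loss, so instead we use the sign and size of $2\theta+k_j\alpha\in[-10\eta/q_{2j},-\eta/(10q_{2j})]$. Then $\|2\theta+k\alpha\|\gtrsim \eta/q_{2j}^{2}$ roughly, which is $e^{-O(\kappa\eta q_{2j-1})}$ while $|k|\gtrsim \eta q_{2j-1}$, giving rate $\leqslant\kappa$ asymptotically.

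Case (b): $q_{2j}\leqslant|m|<q_{2j+1}$. Write $m=\ell q_{2j}+r$ with $|r|<q_{2j}$. If $r\neq0$, then $\|m\alpha\|\geqslant\|r\alpha\|-|\ell|\|q_{2j}\alpha\|\gtrsim 1/q_{2j}$, so $\|2\theta+k\alpha\|\gtrsim 1/q_{2j}$, which is far too large. If $r=0$, then
\begin{equation*}
2\theta+k\alpha=(2\theta+k_j\alpha)+\ell(q_{2j}\alpha-p_{2j})\pmod 1,
\end{equation*}
which is a negative number of size about $\eta/q_{2j}$ plus $\ell$ steps of positive size about $1/q_{2j+1}$. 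This is the arithmetic-progression structure built in \eqref{defin_k}. It is $\gtrsim \eta/q_{2j}$ unless $\ell$ is within $O(q_{2j+1}/q_{2j})$ of $m_j$; near $\ell=m_j$ (i.e.\ $k\approx k_{j+1}$) the size is at least $\|2\theta+k_{j+1}\alpha\|\approx\eta/q_{2j+2}$, up to an additive $|\ell-m_j|/q_{2j+1}$. In all subcases the ratio $-\ln\|2\theta+k\alpha\|/|k|$ is at most $\kappa+o(1)$: either the numerator is $O(\ln q_{2j})$ against $|k|\gtrsim q_{2j}$, using $\kappa\eta<\mu$ and $\ln q_{2j}\approx\kappa\eta q_{2j-1}\ll q_{2j}$, or we are at the subsequence already handled.

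Case (c): $|m|\geqslant q_{2j+1}$. We redo the argument with $k_{j+1}$ instead of $k_j$; the closest-choice of $j$ makes this the same case one level up.

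I expect the main obstacle to be this case analysis for general $k$, in particular verifying that no $k$ between $k_j$ and $k_{j+1}$ beats rate $\kappa$. The point is that only multiples of $q_{2j}$ added to $k_j$ can move $2\theta+k\alpha$ finely. Each such step moves it by about $1/q_{2j+1}$, and the target $-\{2\theta+k_j\alpha\}\sim\eta/q_{2j}$ is reached precisely at $\ell\approx m_j$, i.e.\ at $k_{j+1}$. Away from that point the distance is at least $\gtrsim 1/q_{2j+1}=e^{-\mu q_{2j}(1+o(1))}$ at indices $|k|\gtrsim q_{2j}\cdot|\ell|$. I would first check carefully that $\mu q_{2j}/(|\ell|q_{2j})\leqslant\kappa$ fails only for $|\ell|$ small, where instead the distance is $\gtrsim\eta/q_{2j}$.
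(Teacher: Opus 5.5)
\textbf{There is a genuine gap, in Case (a) of your upper bound.} Your subsequence limit, and hence $\delta(\alpha,\theta)\geqslant\kappa$, is correct and coincides with the paper's argument.

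First, you set aside the triangle-inequality estimate, but it does work. For $0<|m|<q_{2j}$ and $\eta<10^{-2}$ one has $\|2\theta+k\alpha\|\geqslant\frac{1}{2q_{2j}}-\frac{10\eta}{q_{2j}}\geqslant\frac{1}{4q_{2j}}$. The bound $\gtrsim\eta/q_{2j}^2$ you put in its place is unproved. Even granting it, $-\ln\|2\theta+k\alpha\|\approx 2\kappa\eta q_{2j-1}$ against $|k|\approx\eta q_{2j-1}$ gives a rate of about $2\kappa$, not $\kappa$.

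More seriously, under your closest-$k_i$ rule even the correct bound $\frac{1}{4q_{2j}}$ does not suffice. Case (a) contains every $k\in[\frac{k_{j-1}+k_j}{2},k_j)$, where $|k|$ is only about $k_j/2$, so $\ln(4q_{2j})/|k|$ tends to $2\kappa$. Shifting $j$ by one, the same happens for $k\in[\frac{k_j+k_{j+1}}{2},k_{j+1})$.

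The missing observation is that for such $k$ one has $0<k_j-k\leqslant\frac{k_j-k_{j-1}}{2}<q_{2j-1}$ by \eqref{dkj}. So the finer best-approximation bound $\|(k-k_j)\alpha\|\geqslant\|q_{2j-2}\alpha\|\geqslant\frac{1}{2q_{2j-1}}$ applies. It gives $\|2\theta+k\alpha\|\geqslant\frac{1}{4q_{2j-1}}$ and a rate $O\bigl(\frac{\ln q_{2j-1}}{\eta q_{2j-1}}\bigr)\to 0$. The bound of size $1/q_{2j}$, whose rate tends to $\kappa$, may only be used once $|k|\geqslant k_j$.

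The paper builds this in from the start by choosing $j$ with $k_j\leqslant|k|<k_{j+1}$. If $|k|\geqslant 10^{-3}\eta q_{2j+1}$, it compares with $k_{j+1}$, using $0<|k-k_{j+1}|<q_{2j+1}$ to get $\|2\theta+k\alpha\|>\frac{1}{4q_{2j+1}}$ (rate $\to0$). Otherwise it writes $k=k_j+l_1q_{2j}+l_2$ with $|l_1|\leqslant 2\cdot 10^{-3}\eta q_{2j+1}/q_{2j}$ and $0\leqslant l_2<q_{2j}$. This gives $\|2\theta+k\alpha\|>\frac{\eta}{20q_{2j}}$ if $l_2=0$ and $\geqslant\frac{1}{4q_{2j}}$ otherwise, so the ratio is at most $\ln(cq_{2j})/k_j\to\kappa$.

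\textbf{Smaller points.} Your rule ignores $k<0$, since all $k_i\geqslant 0$. Choosing $j$ through $|k|$ as above handles both signs, because $|k-k_j|\leqslant 2|k|$ and $|k-k_{j+1}|<2k_{j+1}<q_{2j+1}$.

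In Case (b) with $r\neq0$, the bound $\|m\alpha\|\gtrsim 1/q_{2j}$ fails for general $q_{2j}\leqslant|m|<q_{2j+1}$. There $|\ell|\,\|q_{2j}\alpha\|$ can be of order $1/q_{2j}$; for instance $m=q_{2j+1}-q_{2j}$ has $\|m\alpha\|\approx 1/q_{2j+1}$. The bound holds only because, for $k>0$ closest to $k_j$, $|\ell|\leqslant\frac{m_j}{2}+1$, so $|\ell|\,\|q_{2j}\alpha\|\leqslant\frac{\eta}{2q_{2j}}+\frac{1}{q_{2j+1}}$. You should say this explicitly.

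With that restriction, your arithmetic-progression identity in the subcase $r=0$, rewritten as $2\theta+k\alpha\equiv(2\theta+k_{j+1}\alpha)-(m_j-\ell)(q_{2j}\alpha-p_{2j})$, gives $\|2\theta+k\alpha\|\gtrsim\eta/q_{2j}$ directly. Case (c) is then vacuous for $k>0$.
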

\begin{proof}
We first prove the first equality. By  \eqref{Ij}, we have
\begin{equation*}
    \frac{\eta}{10q_{2j}}\leqslant \|2\theta+k_{j}\alpha\|_{\mathbb{R}/\mathbb{Z}}\leqslant \frac{10\eta}{q_{2j}}.
\end{equation*}
By \eqref{an}, we have
\begin{equation}\label{qs}
    \ln q_{2j}=\kappa\eta q_{2j-1} +o(q_{2j-1}).
\end{equation}
By \hyperref[appkj]{Lemma~\ref*{appkj}}, we have
\begin{equation}\label{ks}
    k_{j}=\eta q_{2j-1}+ O(q_{2j-2}).
\end{equation}
Combine \eqref{qs} with \eqref{ks}, one has
\begin{equation}\label{firsthalf}
    \lim_{j\to \infty}-\frac{\ln \|2\theta+k_{j}\alpha\|_{\mathbb{R}/\mathbb{Z}}}{k_{j}}=\kappa.
\end{equation}

For the second equality, by \eqref{firsthalf},
\begin{equation*}
	\delta(\alpha,\theta)=\limsup_{|k|\rightarrow \infty}-\frac{\ln \|2\theta+k\alpha\|}{|k|} \geqslant \limsup_{j\rightarrow \infty}- \frac{\ln \|2\theta+k_{j}\alpha\|}{k_{j}}=\kappa.
\end{equation*}
It remains to prove the opposite inequality. Choose $j$ such that $k_{j}\leqslant |k|<k_{j+1}$.

If $|k|\geqslant 10^{-3}\eta q_{2j+1}$, then $0<|k-k_{j+1}|<q_{2j+1}$ and by \eqref{Ij},
\begin{equation*}
	\begin{split}
		\|2\theta+k\alpha\|&\geqslant \|(k-k_{j+1})\alpha\|-\|2\theta+k_{j+1}\alpha\|\\
		&\geqslant \|q_{2j}\alpha\|-\|2\theta+k_{j+1}\alpha\|\\
		&\geqslant \frac{1}{2q_{2j+1}} -  \frac{10\eta}{q_{2j+2}}\\
		&>\frac{1}{4q_{2j+1}}.
	\end{split}
\end{equation*}
Therefore,
\begin{equation}\label{kkj1}
	\begin{split}
		-\frac{\ln \|2\theta+k\alpha\|}{|k|} &\leqslant \frac{\ln 4q_{2j+1}}{10^{-3}\eta q_{2j+1}}\to 0.
	\end{split}
\end{equation}

If $k_{j}\leqslant |k|<10^{-3}\eta q_{2j+1}$, then we write
\begin{equation*}
	k=k_{j}+l_{1}q_{2j}+l_{2},
\end{equation*}
where $|l_{1}|\leqslant \frac{2\cdot 10^{-3}\eta q_{2j+1}}{q_{2j}}$ and $0\leqslant l_{2}<q_{2j}$. If $l_{2}=0$, then
\begin{equation*}
	\begin{split}
		\|2\theta+k\alpha\|&\geqslant \|2\theta+k_{j}\alpha\|-\|l_{1}q_{2j}\alpha\| \\
		&\geqslant \frac{\eta}{10 q_{2j}}- \frac{2\cdot 10^{-3}\eta q_{2j+1}}{q_{2j}} \frac{1}{q_{2j+1}} \\
		&>\frac{\eta}{20 q_{2j}}.
	\end{split}
\end{equation*}
If $l_{2}\neq 0$, then $1\leqslant l_{2}<q_{2j}$. Hence
\begin{equation*}
	\begin{split}
		\|2\theta+k\alpha\|&\geqslant \|l_{2}\alpha\|-\|2\theta+k_{j}\alpha\|- |l_{1}|\|q_{2j}\alpha\|\\
		&\geqslant \frac{1}{2q_{2j}}-\frac{10\eta}{q_{2j}}-\frac{2\cdot 10^{-3}\eta q_{2j+1}}{q_{2j}} \frac{1}{q_{2j+1}}\\
		&\geqslant \frac{1}{4q_{2j}},
	\end{split}
\end{equation*}
where we use $\eta\leqslant 10^{-2}$. 

Therefore, in both cases, for $k_{j}\leqslant |k|<10^{-3}\eta q_{2j+1}$, by \eqref{an} we have
\begin{equation}\label{kkj2}
	-\frac{\ln \|2\theta+k\alpha\|}{|k|} \leqslant \frac{\ln cq_{2j}}{k_{j}} \to \kappa.
\end{equation}
Combine \eqref{kkj1} with \eqref{kkj2},
\begin{equation*}
	\delta(\alpha,\theta)=\limsup_{|k|\to \infty}-\frac{\ln \|2\theta+k\alpha\|}{|k|} \leqslant \kappa.
\end{equation*}
This finishes the proof.
\end{proof}

\section{Localization: Non-resonant sites}
\label{sec:Non-resonant}
Denote $L=\ln |\lambda|$. Let $0<\varepsilon<10^{-5}\eta$ be sufficiently small. Set $b_{n}=10^{-7}\eta q_{n}$. For $\ell\in\mathbb{Z}$, define
\begin{equation*}
    r_{\ell}=r_{\ell}^{\varepsilon,n}\coloneq \sup_{|x-\ell q_{n}|\leqslant 10\varepsilon q_{n}} |\phi(x)|.
\end{equation*}
When $n=2j-1$, define also
\begin{equation*}
    r_{\ell+\eta}=r_{\ell+\eta}^{\varepsilon,n}\coloneq \sup_{|x-(\ell q_{n}+k_{j})|\leqslant 10\varepsilon q_{n}} |\phi(x)|.
\end{equation*}

\begin{theorem}\label{nonres}
Assume $|\lambda|>1$. Let $\ell$ be such that \begin{equation*}
    0\leqslant |\ell|\leqslant 50 \frac{b_{n+1}}{q_{n}}.
\end{equation*}
Then, for sufficiently large $n$, the following hold.

\noindent (1) If $n=2j-1$, then for any 
    \begin{equation*}
        y\in [\ell q_{n}+10\varepsilon q_{n}, \ell q_{n}+k_{j}-10\varepsilon q_{n}],
    \end{equation*}
    we have
    \begin{equation*}
        |\phi(y)|\leqslant r_{\ell}e^{-(L-\varepsilon)(|y-\ell q_{n}|-3\varepsilon q_{n})} +r_{\ell+\eta} e^{-(L-\varepsilon)(|\ell q_{n}+k_{j}-y|-3\varepsilon q_{n})},
    \end{equation*}
    and for any 
    \begin{equation*}
        y\in [\ell q_{n}+k_{j}+10\varepsilon q_{n}, (\ell+1)q_{n}-10\varepsilon q_{n}],
    \end{equation*}
    we have
    \begin{equation*}
        |\phi(y)|\leqslant r_{\ell+\eta} e^{-(L-\varepsilon)(|\ell q_{n}+k_{j}-y|-3\varepsilon q_{n})}+r_{\ell+1}e^{-(L-\varepsilon)(|(\ell+1) q_{n}-y|-3\varepsilon q_{n})}.
    \end{equation*}

\noindent (2) If $n=2j$, then for any 
    \begin{equation*}
        y\in [\ell q_{n}+10\varepsilon q_{n}, (\ell+1)q_{n}-10\varepsilon q_{n}],
    \end{equation*}
    we have
    \begin{equation*}
        |\phi(y)|\leqslant r_{\ell}e^{-(L-\varepsilon)(|y-\ell q_{n}|-3\varepsilon q_{n})}+r_{\ell+1}e^{-(L-\varepsilon)(|(\ell+1)q_{n}-y|-3\varepsilon q_{n})}.
    \end{equation*}
\end{theorem}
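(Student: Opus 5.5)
The plan is to apply Theorem~\ref{block1} with $\tau=L-\varepsilon$ and a small $\gamma$ proportional to $\varepsilon$. In case $n=2j-1$ we use it twice: once with $(y_1,y_2)=(\ell q_n,\ell q_n+k_j)$, so $K=k_j\approx\eta q_n$, and once with $(y_1,y_2)=(\ell q_n+k_j,(\ell+1)q_n)$, so $K\approx(1-\eta)q_n$. In case $n=2j$ we use $(y_1,y_2)=(\ell q_n,(\ell+1)q_n)$. We choose $\gamma$ so that $10\gamma K\le 10\varepsilon q_n$ and $3\gamma K\le 3\varepsilon q_n$; for example $\gamma=\varepsilon q_n/K$ works. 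Then $R_{y_i}$ is dominated by $r_\ell$, $r_{\ell+\eta}$ or $r_{\ell+1}$, and the conclusion of Theorem~\ref{block1} gives exactly the stated bounds, with the maximum bounded by the sum. The whole task therefore reduces to verifying the regularity hypothesis: every $y\in[y_1+\gamma K,y_2-\gamma K]$ must be $(L-\varepsilon,t)$-regular for some admissible $t$.

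For regularity, fix such a $y$ and let $d=\min\{|y-y_1|,|y-y_2|\}\ge\varepsilon q_n$. We take $t$ comparable to $d$, say $t\approx d/2$ rounded to a convenient length. The interval $[x_1,x_2]$ of length $t$ is chosen with $y$ near its center, so that $|y-x_i|\ge t/40$. By \eqref{pregreen} and the upper bound \eqref{upperbound}, it suffices to show
\begin{equation*}
|P_{[x_1,x_2]}|\ge e^{(L-\varepsilon/10)t}.
\end{equation*}
Lower bounds of this kind come from Lemma~\ref{lowlag}. Following the standard Avila--Jitomirskaya/Jitomirskaya--Liu scheme, we take the $t+1$ phases $\theta_m=\theta+(x_1'+m)\alpha$ to be two blocks of shifted points. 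One block is $I_1$ near $y$; the other is $I_2$ near the origin (or near $y_1$). Then we show that $\Lag_m\le\varepsilon t/20$ for every $m$ whose $\theta_m$ lies in $I_2$ and also for those in $I_1$. If the maximizing index lies in $I_2$, the resulting large determinant near $0$ contradicts the boundedness of $\phi$ there, by the usual argument using \eqref{preblock} and $\phi(0)=1$. So it lies in $I_1$, which gives the regularity of $y$.

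The main obstacle is the estimate of $\Lag_m$. It needs lower bounds on the products $\prod|\cos2\pi\theta_m-\cos2\pi\theta_\ell|$, that is, on
\begin{equation*}
\|\theta_m-\theta_\ell\|\cdot\|\theta_m+\theta_\ell\|=\|(m-\ell)\alpha\|\cdot\|2\theta+(m+\ell+c)\alpha\|.
\end{equation*}
These factors are small exactly at the resonances: $m-\ell\equiv0$ modulo $q_n$ for the frequency factor, and $m+\ell+c$ near $-k_j$ modulo $q_n$ for the phase factor. Placing $y$ at distance at least $\varepsilon q_n$ from $\ell q_n$, $(\ell+1)q_n$ and $\ell q_n+k_j$ ensures that the combined block avoids both families. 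The sums $\sum\ln\|(m-\ell)\alpha\|$ and $\sum\ln\|2\theta+k\alpha\|$ are then controlled by the usual Diophantine lemmas at scale $q_n$. These lemmas use $\|k\alpha\|\ge\|q_{n-1}\alpha\|$ for $|k|<q_n$, and use the explicit control of $\|2\theta+k\alpha\|$ obtained in the proof of $\delta(\alpha,\theta)=\kappa$, where $k\neq k_j$ gives $\|2\theta+k\alpha\|\gtrsim 1/q_n$ in the relevant range, with losses $O(\ln q_n)=o(\varepsilon q_n)$.

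The even case $n=2j$ is the delicate one. There the phase resonance $k_{j+1}\approx\eta q_{2j+1}$ lies far away, beyond $|\ell|\le 50b_{n+1}/q_n$ blocks, which is why the range of $\ell$ is restricted. In the odd case we must also verify that $q_{n+1}\approx e^{\kappa\eta q_n}$ contributes only $\kappa\eta q_n<\mu q_n$, and this loss is absorbed because the site $y$ is distance $\ge\varepsilon q_n$ from any resonance. If this uniform-in-$\ell$ bookkeeping fails for $t$ of order $q_n$, the fallback is to choose $t$ as a multiple of $q_{n-1}$ and iterate at the smaller scale.
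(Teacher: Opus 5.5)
Your architecture is the paper's. You apply Theorem~\ref{block1}, and your choice $\gamma K=\varepsilon q_n$ does reproduce the stated bounds. Regularity of each non-resonant site comes from Lemma~\ref{lowlag} with one block at the origin and one at the site, and the origin alternative is excluded via $\phi(0)=1$.

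The gap is in the one substantive step: the uniform bound $\Lag_m=o(q_n)$ when $n=2j-1$. You claim that keeping $y$ at distance $\geqslant\varepsilon q_n$ from $\ell q_n$, $(\ell+1)q_n$, $\ell q_n+k_j$ makes the blocks avoid both resonance families. This is false, for two reasons.

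First, the phase-resonant family is all of $k_j+q_n\mathbb Z$, not just $k_j$. For $k=k_j+lq_n$ one only has $\|2\theta+k\alpha\|\leqslant(10\eta+|l|)/q_{n+1}$, so ``$k\neq k_j$ gives $\gtrsim 1/q_n$'' is wrong.

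Second, phase divisors $\|2\theta+(m+m')\alpha\|$ also arise from pairs inside a single block. Suppose the block at $y$ has points on both sides of $y$, as your centring suggests. Suppose also that $y$ is near the midpoint $\ell q_n+k_j/2$ or $(\ell+\tfrac12)q_n+k_j/2$. These midpoints lie at distance $\approx\eta q_n/2$, resp.\ $\approx(1-\eta)q_n/2$, from all three resonant sites, so your hypothesis allows them. Then some $m\neq m'$ in the block satisfy $m+m'=k_j+2\ell q_n$, resp.\ $k_j+(2\ell+1)q_n$, and for $\ell=0$ this divisor is $\lesssim 1/q_{n+1}$.

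Likewise, take $t\approx d/2$ and $y$ in the middle of $[\ell q_n+k_j,(\ell+1)q_n]$. Then the origin block has length exceeding $k_j\approx\eta q_n$, so it contains $m\neq m'$ with $m+m'=k_j$.

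In all these configurations the denominator in $\Lag_m$ contains a factor $\lesssim 1/q_{n+1}$. So $\Lag_m$ exceeds $\varepsilon t/20$ by a term of order $\ln q_{n+1}\approx\kappa\eta q_n$. The dichotomy (contradiction at $0$ versus regularity at $y$) then breaks down.

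This loss cannot be ``absorbed'' as your last paragraph suggests. The theorem assumes only $|\lambda|>1$ and asserts the rate $L-\varepsilon$. $(L-\varepsilon)$-regularity on a window of length $t\leqslant q_n$ tolerates only $\Lag_m\lesssim\varepsilon t/40$, with $\varepsilon<10^{-5}\eta$, and no comparison with $\mu$ helps. It is also the odd case, not the even one, that is delicate. For $n=2j$ all relevant phase divisors are $\geqslant\eta/(20q_n)$, so the paper even uses a two-sided block at $p$ there.

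The missing idea is the block geometry in the odd case.
\begin{itemize}
\item The paper works at scale $d_p=\frac{1}{100}\min\{\dist(p,q_n\mathbb Z),\dist(p-k_j,q_n\mathbb Z),10^{-5}\eta q_n\}$. Hence the origin block $[-2sq_{n-n_0},2sq_{n-n_0}-1]$ has all its self-sums far from $k_j+q_n\mathbb Z$.
\item It takes the block at $p$ one-sided: $[p-2sq_{n-n_0},p-1]$ or $[p+1,p+2sq_{n-n_0}]$, according to whether $p$ lies left or right of the midpoint of its interval. Its self-sums then stay strictly between consecutive points of $k_j+q_n\mathbb Z$.
\end{itemize}
With this geometry every pair has $j_1-j_2\notin q_n\mathbb Z$ and $j_1+j_2\notin k_j+q_n\mathbb Z$, so all divisors are $\geqslant 1/(4q_n)$. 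Theorem~\ref{uniform}\,(1) then yields $\Lag_m\leqslant\varepsilon^2 q_n$ with no $\ln q_{n+1}$ term. Placing the second block near $y_1$ instead of $0$, as you also suggest, is not an option, since the exclusion step rests on $\phi(0)=1$.
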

\begin{proof}
We first consider the case $n=2j-1$.  For any $p \in [\ell q_n, (\ell+1) q_n ]$ satisfying $|p-\ell q_n| \geqslant \varepsilon q_n$, $|p-(\ell+1) q_n| \geqslant \varepsilon q_n$, and $|p-\ell q_n-k_j| \geqslant \varepsilon q_n$, we let
\begin{equation*}
    d_{p}=\frac{1}{100}\min \{\dist(p,q_{n}\mathbb{Z}), \dist(p-k_{j}, q_{n}\mathbb{Z}), 10^{-5}\eta q_{n} \}.
\end{equation*}
Let $n_{0}$ be the smallest integer such that $2q_{n-n_{0}}\leqslant d_{p}$. Let $s$ be the largest positive integer such that $2sq_{n-n_{0}}\leqslant d_{p}$. Notice that $2(s+1) q_{n-n_0}> d_p$, one has that
\begin{equation*}
    s q_{n-n_0} \geqslant \frac{1}{4} d_p \geqslant \frac{\varepsilon q_n}{400}. 
\end{equation*}

\noindent Case 1: $p\in [\ell q_{n}+\varepsilon q_{n}, \ell q_{n}+k_{j}-\varepsilon q_{n}]$.

If $p\leqslant \ell q_{n}+\frac{k_{j}}{2}$, we construct
\begin{equation*}
    I_{1}=[-2sq_{n-n_{0}}, 2sq_{n-n_{0}}-1],\quad I_{2}=[p-2sq_{n-n_{0}}, p-1].
\end{equation*}

If $p>\ell q_{n}+\frac{k_{j}}{2}$, we construct
\begin{equation*}
    I_{1}=[-2sq_{n-n_{0}}, 2sq_{n-n_{0}}-1],\quad I_{2}=[p+1, p+2sq_{n-n_{0}}].
\end{equation*}

\noindent Case 2: $p\in [\ell q_{n}+k_{j}+\varepsilon q_{n}, (\ell+1)q_{n}-\varepsilon q_{n}]$.

If $p\leqslant (\ell+\frac{1}{2}) q_{n}+\frac{k_{j}}{2}$, we construct
\begin{equation*}
    I_{1}=[-2sq_{n-n_{0}}, 2sq_{n-n_{0}}-1],\quad I_{2}=[p-2sq_{n-n_{0}}, p-1].
\end{equation*}

If $p>(\ell+\frac{1}{2}) q_{n}+\frac{k_{j}}{2}$, we construct
\begin{equation*}
    I_{1}=[-2sq_{n-n_{0}}, 2sq_{n-n_{0}}-1],\quad I_{2}=[p+1, p+2sq_{n-n_{0}}].
\end{equation*}

By the construction of $I_1$ and $I_2$, for any $j_1, j_2 \in I_1 \cup I_2$ with $j_1 \neq j_2$, there exist integers $l_{1},l_{2},l_{1}',l_{2}'$ with $|l_{1}|, |l_{1}'|\leqslant 100 \frac{b_{n+1}}{q_n}+4$ and $1\leqslant l_{2},l_{2}'<q_{n}$ such that
\begin{equation*}
    j_1-j_2=l_1 q_n+l_2,
\end{equation*}
and
\begin{equation*}
    j_1+j_2=k_j+l_1' q_{n}+l_{2}'.
\end{equation*}
Therefore
\begin{equation}\label{reson1}
    \begin{split}
        \|(j_{1}-j_{2})\alpha\|&\geqslant \|l_{2}\alpha\|-|l_{1}| \|q_{n}\alpha\|\\
        &\geqslant \|q_{n-1}\alpha\|-(100 \frac{b_{n+1}}{q_{n}}+4) \|q_{n}\alpha\|\\
        &\geqslant \frac{1}{2q_{n}}- \frac{10^{-5}\eta}{q_{n}}-\frac{4}{q_{n+1}}\\
        &>\frac{1}{4q_{n}}.
    \end{split}
\end{equation}
Since for $n=2j-1$, by \eqref{Ij} we have
\begin{equation*}
    \|2\theta+k_{j}\alpha\|_{\mathbb{R}/\mathbb{Z}} \leqslant \frac{10\eta}{q_{2j}}= \frac{10\eta}{q_{n+1}}.
\end{equation*}
Hence
\begin{equation}\label{reson2}
    \begin{split}
        \|2\theta+(j_{1}+j_{2})\alpha\|&\geqslant \|l_{2}'\alpha\|-\|2\theta+k_{j}\alpha\|_{\mathbb{R}/\mathbb{Z}} - |l_{1}'| \|q_{n}\alpha\|_{\mathbb{R}/\mathbb{Z}}\\
        &\geqslant \|q_{n-1}\alpha\|-\frac{10\eta}{q_{n+1}}-(100\frac{b_{n+1}}{q_{n}} +4)\|q_{n}\alpha\| \\
        &\geqslant \frac{1}{2q_{n}}-\frac{10\eta}{q_{n+1}}-\frac{10^{-5}\eta}{q_{n}}-\frac{4}{q_{n+1}}\\
        &>\frac{1}{4q_{n}}.
    \end{split}
\end{equation}

We now consider the case $n=2j$. For any $p \in [\ell q_n, (\ell+1) q_n ]$ satisfying $|p-\ell q_n| \geqslant \varepsilon q_n$ and $|p-(\ell+1) q_n| \geqslant \varepsilon q_n$, we define
\begin{equation*}
    d_{p}=\frac{1}{100} \min \{\dist(p,q_{n}\mathbb{Z}),10^{-5}\eta q_{n} \}.
\end{equation*}
Let $n_{0}$ be the smallest integer such that $2q_{n-n_{0}}\leqslant d_{p}$. Let $s$ be the largest positive integer such that $2sq_{n-n_{0}}\leqslant d_{p}$. We also have
\begin{equation*}
    s q_{n-n_0} \geqslant \frac{1}{4} d_p \geqslant \frac{\varepsilon q_n}{400}.
\end{equation*}
We construct
\begin{equation*}
    I_{1}=[-sq_{n-n_{0}}, sq_{n-n_{0}}-1],\quad I_{2}=[p-sq_{n-n_{0}}, p+sq_{n-n_{0}}-1]
\end{equation*}
For any $j_1, j_2 \in I_1 \cup I_2$ with $j_1 \neq j_2$, there exist integers $l_{1},l_{2},l_{1}',l_{2}'$ with with $|l_{1}|, |l_{1}'|\leqslant 100 \frac{b_{n+1}}{q_n}+4$ and $1\leqslant l_{2}<q_{n}, 0\leqslant l_{2}'<q_{n}$ such that
\begin{equation*}
    j_1-j_2=l_1 q_n+l_2,
\end{equation*}
and
\begin{equation*}
    j_1+j_2=k_j+l_1' q_{n}+l_{2}'.
\end{equation*}
Therefore,
\begin{equation}\label{reson11}
    \begin{split}
        \|(j_{1}-j_{2})\alpha\|&\geqslant \|l_{2}\alpha\|-|l_{1}| \|q_{n}\alpha\|\\
        &\geqslant \|q_{n-1}\alpha\|-(100 \frac{b_{n+1}}{q_{n}}+4) \|q_{n}\alpha\|\\
        &\geqslant \frac{1}{2q_{n}}- \frac{10^{-5}\eta}{q_{n}}-\frac{4}{q_{n+1}}\\
        &>\frac{1}{4q_{n}}. 
    \end{split}
\end{equation}
If $l_{2}'=0$, then by \eqref{Ij} and $n=2j$,
\begin{equation*}
    \begin{split}
        \|2\theta+(j_{1}+j_{2})\alpha\|&\geqslant \|2\theta+k_{j}\alpha\|_{\mathbb{R}/\mathbb{Z}} - |l_{1}'| \|q_{n}\alpha\|_{\mathbb{R}/\mathbb{Z}}\\
        &\geqslant \frac{\eta}{10q_{n}}-(100\frac{b_{n+1}}{q_{n}} +4)\|q_{n}\alpha\| \\
        &\geqslant \frac{\eta}{10q_{n}}-\frac{10^{-5}\eta}{q_{n}}-\frac{4}{q_{n+1}}\\
        &>\frac{\eta}{20q_{n}}.
    \end{split}
\end{equation*}
If $l_{2}'\neq 0$, then
\begin{equation*}
    \begin{split}
        \|2\theta+(j_{1}+j_{2})\alpha\|&\geqslant \|l_{2}'\alpha\|-\|2\theta+k_{j}\alpha\|_{\mathbb{R}/\mathbb{Z}} - |l_{1}'| \|q_{n}\alpha\|_{\mathbb{R}/\mathbb{Z}}\\
        &\geqslant \|q_{n-1}\alpha\|-\frac{10\eta}{q_{n}}-(100\frac{b_{n+1}}{q_{n}} +4)\|q_{n}\alpha\| \\
        &\geqslant \frac{1}{2q_{n}}-\frac{10\eta}{q_{n}}-\frac{10^{-5}\eta}{q_{n}}-\frac{4}{q_{n+1}}\\
        &>\frac{1}{4q_{n}}.
    \end{split}
\end{equation*}
Thus we always have
\begin{equation}\label{reson12}
    \|2 \theta+(j_1+j_2) \alpha\|_{\mathbb{R} / \mathbb{Z}} \geqslant \frac{\eta}{20 q_n}.
\end{equation}

In all cases, let $\theta_m = \theta + m\alpha$ for $m \in I_1 \cup I_2$. Combine the small divisor estimates \eqref{reson1} with \eqref{reson2}, (or \eqref{reson11} with \eqref{reson12}), we obtain the following Lagrange estimate.

\begin{lemma}\label{lagnonresonant}
    For any $m \in I_1 \cup I_2$, we have $\mathrm{Lag}_{m} \leqslant \varepsilon^{2} q_n$.
\end{lemma}
\begin{proof}
    We only prove the case of $n=2j-1$ since the proof for $n=2j$ is similar. By the construction of $I_{1}$ and $I_{2}$, for sufficiently large $n$,
    \begin{equation*}
        \frac{s}{q_{n}}\ln \frac{q_{n-n_{0}+1}}{s}\leqslant \frac{s}{q_{n-n_{0}+1}}\ln \frac{q_{n-n_{0}+1}}{s}\leqslant \frac{\ln q_{n-n_{0}}}{q_{n-n_{0}}}=o(1),
    \end{equation*}
    where we use $q_{n-n_{0}+1}/s>q_{n-n_{0}}$ and $t^{-1}\ln t$ decays for sufficiently large $t>0$. 
    
    Since $\# (I_{1}\cup I_{2})=6sq_{n-n_{0}}<6q_{n-n_{0}+1}$, we apply \hyperref[item: uniformk]{Theorem~\ref*{uniform}\,\textup{(\ref*{item: uniformk})}} with $6s$ in the place of $s$, and with
    \begin{equation*}
        \sigma_{1}=\sigma_{2}=1,\quad n_{1}=n_{2}=n,\quad \Omega_{1}=\Omega_{2}=0,
    \end{equation*}
    one has
    \begin{equation*}
        \begin{split}
            \Lag_{m}&\leqslant -12s \ln \frac{s}{q_{n-n_{0}+1}} + Cs \ln q_{n-n_{0}} + C \ln q_n\\
            &\leqslant q_{n} \bigg(\frac{12s}{q_{n}}\ln \frac{q_{n-n_{0}+1}}{s} + C\frac{\ln q_{n-n_{0}}}{q_{n-n_{0}}} + C\frac{\ln q_{n}}{q_{n}} \bigg)\\
            &\leqslant \varepsilon^{2} q_{n}.
        \end{split}
    \end{equation*}
\end{proof}

Let $k=\#(I_{1}\cup I_{2})-1$. By \hyperref[lowlag]{Lemma~\ref*{lowlag}}, there exists $m_{0}\in I_{1}\cup I_{2}$ such that
\begin{equation}\label{Pk}
    \bigg|P_{k}\bigg(\theta_{m_{0}}-\frac{k-1}{2}\alpha\bigg)\bigg|\geqslant e^{k L-\varepsilon^{2} q_{n}}.
\end{equation}
Assume $m_{0}\in I_{2}$. Construct $I(p)=[x_{1},x_{2}]$ where in the odd case,
\begin{equation*}
    I(p)=[m_{0}-3sq_{n-n_{0}}+1, m_{0}+3sq_{n-n_{0}}-1],
\end{equation*}
and in the even case,
\begin{equation*}
    I(p)=[m_{0}-2s q_{n-n_{0}}+1, m_{0}+2sq_{n-n_{0}}-1].
\end{equation*}
Then $p\in I(p)$ and $|I(p)|>\frac{\varepsilon q_{n}}{200}$. Moreover,
\begin{equation*}
    \operatorname{dist}(p,\partial I(p))\geqslant sq_{n-n_{0}}-1\geqslant \frac{1}{40} |I(p)|.
\end{equation*}
Moreover, by \eqref{Pk} and \eqref{pregreen}, we have
\begin{equation*}
    |G_{[x_{1},x_{2}]}(p,x_{i})|\leqslant e^{-(L-\varepsilon/2)|p-x_{i}|},\quad \text{for}\ i=1,2.
\end{equation*}
Thus every such $p$ is $(L-\varepsilon/2, |I(p)|)$-regular. Let $[y_{1}, y_{2}]$ be one of $[\ell q_{n},\ell q_{n}+k_{j}]$, $[\ell q_{n}+k_{j}, (\ell+1) q_{n}]$, or $[\ell q_{n}, (\ell+1)q_{n}]$. Set
\begin{equation*}
    \widetilde y_{1}=y_{1}+\frac{9}{10}\varepsilon q_{n},\qquad
    \widetilde y_{2}=y_{2}-\frac{9}{10}\varepsilon q_{n}.
\end{equation*}
Applying \hyperref[block1]{Theorem~\ref*{block1}} to $[\widetilde y_{1},\widetilde y_{2}]$ with
\begin{equation*}
    K=\widetilde y_{2}-\widetilde y_{1},\quad
    \tau=L-\frac{\varepsilon}{2},\quad
    \gamma K=\frac{1}{10}\varepsilon q_{n},\quad
    t=|I(p)|,
\end{equation*}
we have
\begin{equation*}
    [\widetilde y_{1}+\gamma K,\widetilde y_{2}-\gamma K]= [y_{1}+\varepsilon q_{n},y_{2}-\varepsilon q_{n}],
\end{equation*}
and one can check that
\begin{equation*}
    \frac{\gamma K}{20}=\frac{\varepsilon q_n}{200}<t
    \leqslant \frac{1}{2}\min \{|p-\widetilde y_{1}|,|p-\widetilde y_{2}|\}.
\end{equation*}
Thus the assumptions of \hyperref[block1]{Theorem~\ref*{block1}} are satisfied. Since $10\gamma K=\varepsilon q_{n}$,
the endpoint local suprema in \hyperref[block1]{Theorem~\ref*{block1}} are bounded by the corresponding quantities $r_{\ell}$, $r_{\ell+\eta}$, and $r_{\ell+1}$. 

We now exclude $m_{0}\in I_{1}$. Suppose that $m_{0}\in I_{1}$. In the odd case, define 
\begin{equation*}
    [x_{1},x_{2}]=[m_{0}-3sq_{n-n_{0}}+1,\,
m_{0}+3sq_{n-n_{0}}-1],
\end{equation*}
and in the even case, define
\begin{equation*}
    [x_{1},x_{2}]=[m_{0}-2sq_{n-n_{0}}+1,\,
m_{0}+2sq_{n-n_{0}}-1]
\end{equation*}
In both cases, $0\in [x_{1},x_{2}]$.
Then, combining \eqref{Pk} and \eqref{pregreen}, 
\begin{equation*}
    |\phi(0)|\leqslant |\phi(x_{1}-1)| e^{-L|x_{1}|+C\varepsilon^{2} q_{n}} + |\phi(x_{2}+1)| e^{-L|x_{2}|+C\varepsilon^{2} q_{n}}.
\end{equation*}
Since $|x_{i}|\geqslant \frac{\varepsilon q_{n}}{400}-1$ and $\phi$ is a generalized eigenfunction, we have $|\phi(0)|<\frac{1}{2}$ for sufficiently small $\varepsilon$ and large $n$, which contradicts to $\phi(0)=1$.  
\end{proof}

\section{Localization: Resonant sites}
\label{sec:Resonant}
Throughout this section, let $L=\ln |\lambda|$ and assume $L>\max \{\kappa,\mu\}$.
Define
\begin{equation*}
    \beta_{n}\coloneq \frac{\ln q_{n+1}}{q_{n}},\quad n\in\mathbb{N}.
\end{equation*}
By the construction of $\alpha$ via \eqref{an}, we have
\begin{equation*}
    \beta_{2j-1}=\kappa\eta+o(1),\quad \beta_{2j}=\mu+o(1).
\end{equation*}
Moreover, by \hyperref[appkj]{Lemma~\ref*{appkj}}, for $n=2j-1$,
\begin{equation*}
    \beta_n q_n=\kappa k_j+o(k_j).
\end{equation*}
Let $C$ be a large constant depending on $L,\alpha$. We choose $\varepsilon>0$ sufficiently small depending on $L,\eta,\kappa,\mu$. For sufficiently large $n$, the following hold:
\begin{align}
    C\varepsilon q_n + \beta_n q_n - Lk_j 
    &\leqslant -\frac{\eta}{2}(L-\kappa)q_n, 
    && n = 2j-1, \label{absorb1} \\
    C\varepsilon q_n - L(q_n-k_j) + \beta_n q_n 
    &\leqslant -\frac{L}{2}q_n, 
    && n = 2j-1, \label{absorb2} \\
    C\varepsilon q_n - Lq_n + \beta_n q_n 
    &\leqslant -\frac{L-\mu}{2}q_n, 
    && n = 2j. \label{absorb3}
\end{align}
Indeed, \eqref{absorb1} follows from $L>\kappa$ and 
\begin{equation*}
    k_j=\eta q_n+o(q_n),\quad \beta_n=\kappa\eta+o(1).
\end{equation*}
Estimate \eqref{absorb2} follows from $\eta\leqslant 10^{-2}$ and $L>\kappa$.
Finally, Estimate \eqref{absorb3} follows from $L>\mu$ and $\beta_{2j}=\mu+o(1)$.

For simplicity, we suppress the superscripts $\varepsilon,n$ in
$r_\ell^{\varepsilon,n}$ and $r_{\ell+\eta}^{\varepsilon,n}$.

\subsection{Resonance for $n=2j-1$: Estimate of $r_{\ell+\eta}$}

\begin{proposition}\label{reson_odd1}
    Assume $n=2j-1$ and \begin{equation*}
        |\ell|\leqslant 20 \frac{b_{n+1}}{q_{n}}.
    \end{equation*}
    Then, for sufficiently large $n$,
    \begin{equation*}
        r_{\ell+\eta}\leqslant e^{C\varepsilon q_{n}+\beta_{n}q_{n}}(e^{-Lk_j}r_\ell + e^{-Lq_n+Lk_j}r_{\ell+1}).
    \end{equation*}
    Consequently, 
    \begin{equation*}
        r_{\ell+\eta}\leqslant e^{-\gamma_{1} q_{n}}(r_{\ell}+r_{\ell+1})
    \end{equation*}
    for some $\gamma_{1}=\gamma_{1}(L,\kappa,\eta)>0$.
\end{proposition}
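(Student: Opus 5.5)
We will apply the block expansion Lemma~\ref{block} on an interval of length about $q_n$ centred near the phase-resonant site $\ell q_n+k_j$. The endpoints of this interval will be placed near the two neighbouring frequency-resonant sites $\ell q_n$ and $(\ell+1)q_n$. Concretely, for any $y$ with $|y-(\ell q_n+k_j)|\leqslant 10\varepsilon q_n$, we take
\[
[x_1,x_2]=[\ell q_n+\lfloor\varepsilon q_n\rfloor,\;(\ell+1)q_n+\lfloor\varepsilon q_n\rfloor-1]
\]
(or a small shift of it, chosen among $O(\varepsilon q_n)$ candidates). This interval has length $q_n$ and contains $y$. The distances satisfy $|y-x_1|\geqslant k_j-C\varepsilon q_n$ and $|y-x_2|\geqslant q_n-k_j-C\varepsilon q_n$. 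Also $|\phi(x_1')|\leqslant r_\ell$ and $|\phi(x_2')|\leqslant r_{\ell+1}$, since $x_1',x_2'$ lie within $10\varepsilon q_n$ of $\ell q_n$ and $(\ell+1)q_n$. Once we know $|P_{[x_1,x_2]}|\geqslant e^{L(q_n-1)-\beta_nq_n-C\varepsilon q_n}$, Lemma~\ref{block} with $\xi=\beta_nq_n+C\varepsilon q_n$ gives
\[
|\phi(y)|\leqslant e^{C\varepsilon q_n+\beta_nq_n}\bigl(e^{-Lk_j}r_\ell+e^{-L(q_n-k_j)}r_{\ell+1}\bigr).
\]
Taking the supremum over $y$ yields the first claim. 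The consequence then follows directly from \eqref{absorb1} and \eqref{absorb2}. These give $r_{\ell+\eta}\leqslant e^{-\frac\eta2(L-\kappa)q_n}r_\ell+e^{-\frac L2q_n}r_{\ell+1}$, so one may take $\gamma_1=\min\{\frac\eta2(L-\kappa),\frac L2\}$.

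The main step is the lower bound on $|P_{[x_1,x_2]}|$, and we will prove it by Lemma~\ref{lowlag}. Let $k=q_n-1$ and take the $q_n$ phases $\theta_m=\theta+m\alpha$ for $m$ in a window $I$ of $q_n$ consecutive integers. Lemma~\ref{lowlag} produces some $m_0\in I$ with $|P_{k}(\theta_{m_0}-\frac{k-1}{2}\alpha)|\geqslant e^{kL-\Lag_{m_0}}/q_n$. The window should be centred so that the resulting interval $[m_0-\frac{k-1}{2},m_0+\frac{k-1}{2}]$ has endpoints within $C\varepsilon q_n$ of $\ell q_n$ and $(\ell+1)q_n$. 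For this, take $I$ short, of length about $\varepsilon q_n$, around $(\ell+\frac12)q_n$. We then let the Lagrange set consist of $q_n$ points formed by $I$ together with a complementary block, as in the non-resonant section, and accept that the block edges move by $O(\varepsilon q_n)$. This movement costs only $e^{C\varepsilon q_n}$ through the upper bound \eqref{upperbound}. If the maximizing $m_0$ lands in the wrong block, we argue as in the non-resonant case: we use $\phi(0)=1$ and that $\phi$ is a generalized eigenfunction to reach a contradiction. This requires that the wrong block lies near $0$ and that $0$ is far from its edges.

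The estimate of $\Lag_{m_0}$ is where the resonances enter, and we expect it to be the main obstacle. We will bound it with the uniform Lagrange estimate (Theorem~\ref{uniform}) used in Lemma~\ref{lagnonresonant}. The pairs $j_1\neq j_2$ in a set of $q_n$ consecutive sites satisfy $\|(j_1-j_2)\alpha\|\geqslant\|q_{n-1}\alpha\|\gtrsim1/q_n$, so there is no frequency resonance at scale $q_n$. The phase resonance, however, sits exactly here. The sum $j_1+j_2\equiv k_j\pmod{q_n}$ can occur, giving $\|2\theta+(j_1+j_2)\alpha\|$ as small as $\eta/(10q_{n+1})$. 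The point is that, because the window has length only $q_n$, this resonance happens for essentially one value of $j_1+j_2$, up to $O(1)$ multiples of $q_n$. It therefore contributes a single bad factor of size about $\ln q_{n+1}=\beta_nq_n$ to the numerator in $\Lag$. We would input this into Theorem~\ref{uniform} with $\sigma$ reflecting one phase-resonant factor and $\Omega$ of order $\ln q_{n+1}$. The conclusion we aim for is $\Lag_{m_0}\leqslant\beta_nq_n+C\varepsilon q_n$, and checking that Theorem~\ref{uniform} delivers exactly this bound with no further loss is the part we anticipate needing care.
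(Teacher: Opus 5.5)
The step that fails is pinning $m_0$ to your short window. Your Lagrange set has $q_n$ points, so that the interval has length $q_n-1$. Only $O(\varepsilon q_n)$ of these points lie in the window near $(\ell+\frac12)q_n$, so the complementary block around $0$ contains about $(1-\varepsilon)q_n$ consecutive points.

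Take $m_0$ at the right end of that block, $m_0\approx(1-\varepsilon)q_n/2$. Its phase-resonant partner $k_j-m_0$ lies in the same block, since $m_0+(k_j-m_0)=k_j$. So all you know is $\Lag_{m_0}\le(\beta_n+\varepsilon)q_n$. The interval of length $q_n-1$ centred at $m_0$ is roughly $[-\varepsilon q_n/2,(1-\varepsilon/2)q_n]$. Block expansion at $0$ then gives only $1\le e^{\beta_nq_n+C\varepsilon q_n}\bigl(r_0e^{-L\varepsilon q_n/2}+r_1e^{-Lq_n}\bigr)$. This is no contradiction, because $\beta_nq_n\approx\kappa\eta q_n\gg L\varepsilon q_n$. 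In fact this interval is essentially your ``good'' interval for $\ell=0$, and nothing about it conflicts with $\phi(0)=1$.

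The non-resonant exclusion you appeal to works only because its Lagrange loss is $\varepsilon^2q_n$, against a depth of order $\varepsilon q_n$. In general, suppose you use $N$ Lagrange points and the good window has width $w$. Then every centre outside the window must lie in $[-N/2+d,\,N/2-d]$, with a depth $d$ satisfying $Ld\gtrsim\beta_nq_n$. This forces $w\ge 2d-1$, a fixed multiple of $\eta q_n$ rather than $O(\varepsilon q_n)$. Without this exclusion, the ``small shift among $O(\varepsilon q_n)$ candidates'' has no justification. Lemma~\ref{lowlag} applied to $O(\varepsilon q_n)$ points only yields determinants of length $O(\varepsilon q_n)$.

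Once the window is that large, the endpoints of the good interval may sit at distance comparable to $k_j$ from $\ell q_n$ and $(\ell+1)q_n$. Then $|\phi(x_i')|$ is no longer bounded directly by $r_\ell$ or $r_{\ell+1}$. You have to transport it with Theorem~\ref{nonres}, which produces a term $r_{\ell+\eta}e^{\beta_nq_n+C\varepsilon q_n-Lk_j}$ that must be absorbed via \eqref{absorb1}.

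That is exactly the paper's scheme. It uses an interval of length $6sq_{n-n_0}\approx\frac32k_j$ centred near $\ell q_n+k_j$. The endpoints $x_1'\in[\ell q_n,\ell q_n+k_j]$ and $x_2'\in[\ell q_n+k_j,(\ell+1)q_n]$ are handled by Theorem~\ref{nonres}. The wrong block $I_1$ is only two thirds of the Lagrange set, so $0$ stays at distance of order $k_j$ from the ends of the interval.

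Two smaller points. First, your two blocks must together form a complete residue system mod $q_n$. Otherwise differences in $q_n\mathbb{Z}\setminus\{0\}$ create frequency resonances and a second factor $e^{\beta_nq_n}$, which \eqref{absorb1} cannot absorb when $L\le2\kappa$. Second, in Theorem~\ref{uniform} the phase resonance is encoded by $\sigma_2=1$, $n_2=n+1$, $\Omega_2=1$, where $\Omega_2$ is a count, not a quantity of order $\ln q_{n+1}$. The small divisor sits in the denominator of $\Lag$, not the numerator.
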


\begin{proof}
Take any $p$ with $|p-(\ell q_{n}+k_{j})|\leqslant 10\varepsilon q_{n}$. Let $n_{0}$ be the least positive integer such that
\begin{equation*}
    q_{n-n_{0}}\leqslant \frac{\varepsilon}{2}\bigg(\frac{k_{j}}{4}-4\varepsilon q_{n}\bigg).
\end{equation*}
Let $s$ be the largest positive integer such that $sq_{n-n_{0}}\leqslant \frac{k_{j}}{4}-4\varepsilon q_{n}$. Since $(s+1)q_{n-n_{0}}>\frac{k_{j}}{4}-4\varepsilon q_{n}$, one has
\begin{equation*}
    \frac{k_{j}}{4}-5\varepsilon q_{n}\leqslant sq_{n-n_{0}}\leqslant \frac{k_{j}}{4}-4\varepsilon q_{n}.
\end{equation*}

We construct
\begin{equation*}
    \begin{split}
        &I_{1}=[-2sq_{n-n_{0}}, 2sq_{n-n_{0}}-1],\\
        &I_{2}=[\ell q_{n}+k_{j}-sq_{n-n_{0}}, \ell q_{n}+k_{j}+sq_{n-n_{0}}-1].
    \end{split}
\end{equation*}
Let $\theta_{m}=\theta+m\alpha$ for $m\in I_{1}\cup I_{2}$. The set $\{\theta_m:m\in I_1\cup I_2\}$ consists of
$6sq_{n-n_0}$ points. Let $k=6sq_{n-n_{0}}-1$.

For any $j_{1},j_{2}\in I_{1}\cup I_{2}$ with $j_{1}\neq j_{2}$, there exist $l_{1},l_{2},l_{1}',l_{2}'$ with $|l_{1}|,|l_{1}'|\leqslant 40\frac{b_{n+1}}{q_{n}}+4$ and $1\leqslant l_{2}<q_{n}, 0\leqslant l_{2}'<q_{n}$ such that
\begin{equation*}
    j_{1}-j_{2}=l_{1}q_{n}+l_{2},
\end{equation*}
and
\begin{equation*}
    j_{1}+j_{2}=k_{j}+l_{1}'q_{n}+l_{2}'.
\end{equation*}
Hence
\begin{equation}\label{odd1frequency}
    \begin{split}
        \|(j_{1}-j_{2})\alpha\|&\geqslant \|l_{2}\alpha\|-|l_{1}|\|q_{n}\alpha\|\\
        &\geqslant \frac{1}{2q_{n}}- (40\frac{b_{n+1}}{q_{n}}+4) \frac{1}{q_{n+1}}\\
        &\geqslant \frac{1}{4q_{n}}.
    \end{split}
\end{equation}
If $l_{2}'\neq 0$, then
\begin{equation}\label{odd1phase}
    \begin{split}
        \|2\theta+ (j_{1}+j_{2})\alpha\|&\geqslant \|l_{2}'\alpha\|-\|2\theta+k_{j}\alpha\|- |l_{1}'| \| q_{n}\alpha\|\\
        &\geqslant \frac{1}{2q_{n}}-\frac{10\eta}{q_{n+1}}-(40\frac{b_{n+1}}{q_{n}}+4) \frac{1}{q_{n+1}}\\
        &\geqslant \frac{1}{4q_{n}}.
    \end{split}
\end{equation}
If $l_{2}'=0$, then the small divisor may be of size $1/q_{n+1}$. More precisely, we decompose two subcases. If $l_{1}'=0$, then immediately
\begin{equation*}
    \|2\theta+ (j_{1}+j_{2})\alpha\|= \|2\theta+k_{j}\alpha\|\geqslant \frac{\eta}{10q_{n+1}}.
\end{equation*}
If $l_{1}'\neq 0$, then $\|l_{1}'q_{n}\alpha\|<\frac{1}{2}$ and hence
\begin{equation*}
    \begin{split}
        \|2\theta+ (j_{1}+j_{2})\alpha\|&\geqslant \| l_{1}'q_{n}\alpha\|-\|2\theta+k_{j}\alpha\|\\
        &\geqslant \frac{1}{2q_{n+1}}- \frac{10\eta}{q_{n+1}}\\
        &\geqslant \frac{1}{4q_{n+1}}.
    \end{split}
\end{equation*}
Combine both subcases, for $l_{2}'=0$, we have
\begin{equation}\label{odd1phase'}
    \|2\theta+(j_1+j_2)\alpha\|_{\mathbb{R}/\mathbb{Z}}
    \geqslant e^{-(\beta_n+\varepsilon)q_n}.
\end{equation}
For each fixed $j_1$, the exceptional case
$l_2'=0$ can occur for at most one $j_2$. Therefore, compared with the non-resonant Lagrange estimate, one loses at most one factor
$e^{(\beta_n+\varepsilon)q_n}$. 

Based on \eqref{odd1frequency}, \eqref{odd1phase} and \eqref{odd1phase'}, we have the following Lagrange estimate.
\begin{lemma}\label{lag2}
    For any $m \in I_1 \cup I_2$, we have $\mathrm{Lag}_{m} \leqslant (\beta_{n}+\varepsilon) q_n.$
\end{lemma}
\begin{proof}
By the construction of $I_{1}$ and $I_{2}$, one has
    \begin{equation*}
        q_{n-n_{0}+1}>\frac{\varepsilon}{2}\bigg(\frac{k_{j}}{4}-4\varepsilon q_{n}\bigg),
    \end{equation*}
    which means for sufficiently large $n$,
    \begin{equation*}
        \frac{\ln q_{n}}{q_{n-n_{0}+1}}\leqslant \frac{\ln q_{n}}{\varepsilon k_{j}/9}=o(1),
    \end{equation*}
    and
    \begin{equation*}
        \frac{s}{q_{n}}\ln \frac{q_{n}}{s} \leqslant \frac{\ln q_{n-n_{0}}}{q_{n-n_{0}}}=o(1),
    \end{equation*}
    where we use $q_{n}/s>q_{n-n_{0}}$ and $t^{-1}\ln t$ decays for sufficiently large $t>0$.
    
Since $\# (I_{1}\cup I_{2})=6s q_{n-n_{0}}<\frac{3}{2}k_{j} < q_{n}$, we apply \hyperref[item: uniformn]{Theorem~\ref*{uniform}\,\textup{(\ref*{item: uniformn})}} with $6s$ in place of $s$, and with
    \begin{equation*}
        \sigma_{1}=\sigma_{2}=1,\quad n_{1}=n,\quad n_{2}=n+1,\quad \Omega_{1}=0,\quad \Omega_{2}=1,
    \end{equation*}
    one has
    \begin{equation*}
        \begin{split}
            \Lag_{m} &\leqslant 12s \ln \frac{q_{n}}{s} + Cs \ln q_{n-n_{0}} + C \bigg(1+\frac{sq_{n-n_{0}}}{q_{n-n_{0}+1}} \bigg)\ln q_n+\ln q_{n+1}\\
            &\leqslant q_{n}\bigg(\frac{12s}{q_{n}} \ln \frac{q_{n}}{s}+  C \frac{\ln q_{n-n_{0}}}{q_{n-n_{0}}}+C\frac{\ln q_{n}}{q_{n}}+ C \frac{\ln q_{n}}{q_{n-n_{0}+1}} + \frac{\ln q_{n+1}}{q_{n}}\bigg)\\
            &\leqslant q_{n}(\beta_{n}+\varepsilon).
        \end{split}
    \end{equation*}
\end{proof}

By \hyperref[lowlag]{Lemma~\ref*{lowlag}} and \hyperref[lag2]{Lemma~\ref*{lag2}}, there exists some $m_{0}\in I_{1}\cup I_{2}$ such that
\begin{equation*}
    \bigg|P_{k}\bigg(\theta_{m_{0}}-\frac{k-1}{2}\alpha\bigg)\bigg| \geqslant e^{kL-(\beta_{n}+\varepsilon)q_{n}}.
\end{equation*}
Assume $m_{0}\in I_{2}$. Set
\begin{equation*}
    [x_{1},x_{2}]=[m_{0}-3sq_{n-n_{0}}+1, m_{0}+3sq_{n-n_{0}}-1].
\end{equation*}
Thus
\begin{equation}\label{det1}
    |P_{[x_{1},x_{2}]}(\theta)|\geqslant e^{kL-(\beta_{n}+\varepsilon)q_{n}}.
\end{equation}
By \hyperref[block]{Lemma~\ref*{block}} and \eqref{det1}, we have
\begin{equation}\label{phip}
    |\phi(p)|\leqslant e^{\beta_{n}q_{n}+C\varepsilon q_{n}}\sum_{i=1,2} |\phi(x_{i}')| e^{-L |p-x_{i}|},
\end{equation}
where $x_{1}'=x_{1}-1$ and $x_{2}'=x_{2}+1$.

From the construction of $I_{2}$, we have,
\begin{equation*}
    \begin{split}
        &x_{1}'\in [\ell q_{n}+k_{j}-4sq_{n-n_{0}}, \ell q_{n}+k_{j}-2sq_{n-n_{0}}]\subseteq [\ell q_{n}, \ell q_{n}+k_{j}],\\
        &x_{2}'\in [\ell q_{n}+k_{j}+2sq_{n-n_{0}}, \ell q_{n}+k_{j}+4sq_{n-n_{0}}] \subseteq [\ell q_{n}+k_{j}, (\ell+1)q_{n}].
    \end{split}
\end{equation*}
By  the nonresonant estimates in \hyperref[nonres]{Theorem~\ref*{nonres}}, we have,
\begin{equation}\label{boundary1}
    \begin{split}
        |\phi(x_{1}')|&\leqslant r_{\ell} e^{-(L-\varepsilon)(|x_{1}'-\ell q_{n}|-3\varepsilon q_{n})}\\
        &\quad + r_{\ell+\eta} e^{-(L-\varepsilon) (|\ell q_{n}+k_{j}-x_{1}'|-3\varepsilon q_{n})},
    \end{split}
\end{equation}
and
\begin{equation}\label{boundary2}
    \begin{split}
        |\phi(x_{2}')|&\leqslant r_{\ell+\eta}e^{-(L-\varepsilon) (|x_{2}'-(\ell q_{n}+k_{j})|-3\varepsilon q_{n})}\\
        &\quad +r_{\ell+1} e^{-(L-\varepsilon) (|(\ell+1)q_{n}-x_{2}'|-3\varepsilon q_{n})}.
    \end{split}
\end{equation}
Combine \eqref{phip} with \eqref{boundary1} and \eqref{boundary2}, we have
\begin{equation*}
    \begin{split}
        |\phi(p)|&\leqslant e^{\beta_{n}q_{n}+C\varepsilon q_{n}} (r_{\ell}e^{-L|p-x_{1}|} e^{-L|x_{1}'-\ell q_{n}|}\\
        &\qquad\qquad +r_{\ell+\eta} e^{-L|p-x_{1}|} e^{-L|\ell q_{n}+k_{j}-x_{1}'|}\\
        &\qquad\qquad + r_{\ell+\eta} e^{-L|p-x_{2}|} e^{-L|\ell q_{n}+k_{j}-x_{2}'|} \\
        &\qquad\qquad+ r_{\ell+1} e^{-L|p-x_{2}|} e^{-L|(\ell+1)q_{n}-x_{2}'|} ).
    \end{split}
\end{equation*}
Therefore,
\begin{equation}\label{exclude}
    |\phi(p)|\leqslant e^{\beta_{n}q_{n}+C\varepsilon q_{n}} (r_{\ell}e^{-L k_{j}} + r_{\ell+\eta} e^{-L k_{j}} +r_{\ell+1} e^{-Lq_{n}+Lk_{j}}).
\end{equation}
Take the supremum over $|p-(\ell q_{n}+k_{j})|\leqslant 10\varepsilon q_{n}$, one has
\begin{equation*}
    r_{\ell+\eta}\leqslant e^{\beta_{n}q_{n}+C\varepsilon q_{n}} (r_{\ell}e^{-L k_{j}} + r_{\ell+\eta} e^{-L k_{j}} +r_{\ell+1} e^{-Lq_{n}+Lk_{j}}).
\end{equation*}
Since $L>\kappa$, for sufficiently large $n$,
\begin{equation*}
    e^{\beta_n q_n+C\varepsilon q_n-Lk_j}
    \leqslant
    \frac{1}{2}.
\end{equation*}
Thus
\begin{equation*}
    r_{\ell+\eta}\leqslant e^{\beta_{n}q_{n}+C\varepsilon q_{n}} (r_{\ell}e^{-L k_{j}} + r_{\ell+1} e^{-Lq_{n}+Lk_{j}} ).
\end{equation*}

\begin{figure}[htbp]
\centering
\begin{tikzpicture}[x=1cm,y=1cm,line cap=round,line join=round]

\draw[->,line width=0.6pt] (-0.3,0) -- (10.3,0);

\coordinate (F0) at (0.9,0);
\coordinate (P)  at (5.0,0);
\coordinate (F1) at (9.1,0);

\node[circle,fill=black,inner sep=1.8pt] at (F0) {};
\node[circle,draw=black,fill=white,inner sep=1.7pt,line width=0.65pt] at (P) {};
\node[circle,fill=black,inner sep=1.8pt] at (F1) {};

\node[font=\small,below=5pt] at (F0) {$\ell q_n$};
\node[font=\small,below=5pt] at (P) {$\ell q_n+k_j$};
\node[font=\small,below=5pt] at (F1) {$(\ell+1)q_n$};

\draw[line width=1.7pt] (1.75,0) -- (4.05,0);
\node[font=\small,below=5pt] at (2.90,0) {$x_1'$};

\draw[line width=1.7pt] (5.95,0) -- (8.25,0);
\node[font=\small,below=5pt] at (7.10,0) {$x_2'$};

\end{tikzpicture}
    \caption{Odd case $n=2j-1$: diagram for $r_{\ell+\eta}$.}
    \label{fig:odd1}
\end{figure}

In order to prove the theorem, it suffices to exclude the case $m_{0}\in I_{1}$.
If $m_{0}\in I_{1}$, following the proof of \eqref{exclude} (move $-q_n-k_j$ units in \eqref{exclude}), we have 
\begin{equation}\label{Contradiction}
    |\phi(0)|\leqslant e^{\beta_{n}q_{n}+C\varepsilon q_{n}} (r_{\eta}e^{-L k_{j}} + r_{0} e^{-L k_{j}} +r_{-1+\eta} e^{-Lq_{n}+Lk_{j}}).
\end{equation}
Then  for sufficiently large $n$, by \eqref{absorb1} and \eqref{absorb2}, 
we have that $|\phi(0)|\leqslant \frac{1}{2}$, which contradicts to $\phi(0)=1.$  

Finally, we let 
\begin{equation*}
    \gamma_{1}=\frac{1}{2}\min\bigg\{ \frac{\eta(L-\kappa)}{2}, \frac{L}{2}\bigg\}.
\end{equation*}
Since $n=2j-1$, we deduce from \eqref{absorb1} and \eqref{absorb2} that
\begin{equation*}
    r_{\ell+\eta}\leqslant e^{-\gamma_{1}q_{n}} (r_{\ell}+r_{\ell+1}).
\end{equation*}
This completes the proof.
\end{proof}

\subsection{Resonance for $n=2j-1$: Estimate of $r_{\ell}$}
\begin{proposition}\label{reson_odd2}
    Assume $n=2j-1$ and 
    \begin{equation*}
        0<|\ell|\leqslant 20\frac{b_{n+1}}{q_{n}}.
    \end{equation*} 
    Then, for sufficiently large $n$,
    \begin{equation}\label{rell-odd-short}
        r_{\ell}\leqslant e^{C\varepsilon q_{n}+\beta_{n}q_{n}} (r_{\ell-1+\eta}e^{-L(q_{n}-k_{j})} +r_{\ell+\eta}e^{-Lk_{j}}).
    \end{equation}
    Consequently, 
    \begin{equation*}
        r_{\ell}\leqslant e^{-\gamma_{2}q_{n}}(r_{\ell-1+\eta}+r_{\ell+\eta})
    \end{equation*}
    for some $\gamma_{2}=\gamma_{2}(L,\kappa,\eta)>0$.
\end{proposition}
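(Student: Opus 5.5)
We mirror the proof of Proposition~\ref{reson_odd1}, now centering the second block at the frequency-resonant site $\ell q_n$ instead of the phase-resonant site $\ell q_n+k_j$. Fix $p$ with $|p-\ell q_n|\leqslant 10\varepsilon q_n$. Choose $n_0$ least with $q_{n-n_0}\leqslant \frac{\varepsilon}{2}(\frac{k_j}{4}-4\varepsilon q_n)$ and $s$ maximal with $sq_{n-n_0}\leqslant \frac{k_j}{4}-4\varepsilon q_n$. Set
\begin{equation*}
I_1=[-2sq_{n-n_0},2sq_{n-n_0}-1],\qquad I_2=[\ell q_n-sq_{n-n_0},\ell q_n+sq_{n-n_0}-1].
\end{equation*}
The half-width $sq_{n-n_0}\approx k_j/4$ is chosen so that the enlarged interval $[x_1,x_2]$ around $m_0\in I_2$ has boundary points $x_1'\in[(\ell-1)q_n+k_j,\ell q_n]$ and $x_2'\in[\ell q_n,\ell q_n+k_j]$, both at distance at least $2sq_{n-n_0}$ from $\ell q_n$ and staying inside the non-resonant windows of Theorem~\ref{nonres}.

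Next come the small divisors for $j_1\neq j_2\in I_1\cup I_2$. Write $j_1-j_2=l_1q_n+l_2$. If $l_2\neq0$ we get $\|(j_1-j_2)\alpha\|\geqslant 1/(4q_n)$ as in \eqref{odd1frequency}. If $l_2=0$, which can only happen when $j_1\in I_1$, $j_2\in I_2$ or vice versa, with $|l_1|=|\ell|\geqslant1$, then $\|(j_1-j_2)\alpha\|\geqslant \|q_n\alpha\|\geqslant 1/(2q_{n+1})$; here the hypothesis $\ell\neq0$ is used. For each $j_1$ this exceptional partner is unique. For the phase, write $j_1+j_2=l_1'q_n+l_2'$ (note there is no $k_j$ shift now). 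Since $\ell q_n+k_j$ is at distance about $3k_j/4$ from $I_2$ and $I_1$ is of length about $k_j$ around $0$, the sum $j_1+j_2$ never lands within $q_n$-multiples of $k_j$ up to error $<q_n$ in a way that makes $l_2'$ vanish relative to $k_j$. Hence $\|2\theta+(j_1+j_2)\alpha\|\geqslant \|l_2''\alpha\|-\|2\theta+k_j\alpha\|-|l_1''|\|q_n\alpha\|\geqslant 1/(4q_n)$ after writing $j_1+j_2-k_j=l_1''q_n+l_2''$ with $l_2''\neq0$. The step I expect to be delicate is verifying $l_2''\neq0$: one needs $j_1+j_2\not\equiv k_j \pmod{q_n}$ near the relevant range, which holds because $j_1+j_2\in[\ell q_n-4sq_{n-n_0}, \ell q_n+4sq_{n-n_0}]$ and $4sq_{n-n_0}<k_j-16\varepsilon q_n$, $q_n-k_j$. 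Thus only one frequency loss $e^{(\beta_n+\varepsilon)q_n}$ per point occurs, and Theorem~\ref{uniform}\,(\ref{item: uniformn}) with $\Omega_1=1$ (frequency) gives $\Lag_m\leqslant(\beta_n+\varepsilon)q_n$, as in Lemma~\ref{lag2}.

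Lemma~\ref{lowlag} then yields $m_0$ with a large determinant. The case $m_0\in I_1$ is excluded exactly as in \eqref{Contradiction}: it would give $|\phi(0)|\leqslant e^{\beta_nq_n+C\varepsilon q_n}(e^{-Lk_j}r_{\eta}+e^{-L(q_n-k_j)}r_{-1+\eta})\cdot$(polynomial) $<1/2$, using \eqref{absorb1}, \eqref{absorb2} and the polynomial bound on $\phi$. For $m_0\in I_2$, Lemma~\ref{block} gives $|\phi(p)|\leqslant e^{\beta_nq_n+C\varepsilon q_n}\sum_i|\phi(x_i')|e^{-L|p-x_i|}$. Inserting Theorem~\ref{nonres} on $[(\ell-1)q_n+k_j,\ell q_n]$ and $[\ell q_n,\ell q_n+k_j]$ produces the terms $r_{\ell-1+\eta}e^{-L(q_n-k_j)}$, $r_{\ell+\eta}e^{-Lk_j}$, and $r_\ell e^{-L\cdot 2sq_{n-n_0}+C\varepsilon q_n}$. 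The last term is absorbed because $e^{\beta_nq_n+C\varepsilon q_n-Lk_j/2}\ll1$ needs care. Since $\beta_nq_n\approx\kappa k_j$ and $L>\kappa$ only, the factor $e^{-L k_j/2}$ may not beat $e^{\kappa k_j}$. I therefore expect the main obstacle to be this self-term. To fix it, I would instead make the block asymmetric. Better, I would note that the path $p\to x_1\to x_1'$ plus $x_1'\to \ell q_n$ via Theorem~\ref{nonres} has total length $\approx 2|x_1'-\ell q_n|$, and lengthen the block (taking $sq_{n-n_0}\approx k_j/2-4\varepsilon q_n$, still fitting inside $[\ell q_n-k_j... ]$ only if $q_n-k_j\geqslant k_j$, true as $\eta<10^{-2}$). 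The self-term then carries $e^{-Lk_j+C\varepsilon q_n}$, which is absorbed by \eqref{absorb1}. Taking the supremum over $p$ gives \eqref{rell-odd-short}, and \eqref{absorb1}, \eqref{absorb2} give the consequence with $\gamma_2=\gamma_1$.
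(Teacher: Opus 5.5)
Up to the absorption step your argument is the paper's. You use the same $n_0$ and $s$, so $k_j/4-5\varepsilon q_n\le sq_{n-n_0}\le k_j/4-4\varepsilon q_n$, and the same $I_1,I_2$. The divisor analysis is the same: one frequency resonance per point, from the cross pairs with $j_1-j_2=\pm\ell q_n$, and no phase resonance because $4sq_{n-n_0}<k_j-16\varepsilon q_n$ and $k_j+4sq_{n-n_0}<q_n$. So is the bound $\Lag_m\le(\beta_n+\varepsilon)q_n$ from Theorem~\ref{uniform}(2) with $\Omega_1=1$, and the location of $x_1',x_2'$.

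\textbf{The gap is the self-term.} Your bound $r_\ell e^{-L\cdot 2sq_{n-n_0}}$ drops the factor $e^{-L|p-x_i|}$ from Lemma~\ref{block}, which you kept for the $r_{\ell-1+\eta}$ and $r_{\ell+\eta}$ terms. The repair you commit to, enlarging $sq_{n-n_0}$ to about $k_j/2$, breaks the proof in two ways.
\begin{enumerate}
\item It destroys the inequality $4sq_{n-n_0}<k_j-16\varepsilon q_n$ on which your own phase estimate rests. Now $I_1\approx[-k_j,k_j]$ contains pairs with $j_1+j_2=k_j$, for which $\|2\theta+(j_1+j_2)\alpha\|\le 10\eta/q_{n+1}$. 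For small $|\ell|$ the cross sums also hit $\ell q_n+k_j$. Each such partner adds $\ln q_{n+1}=\beta_nq_n$ to the Lagrange bound. The coefficient of $r_{\ell+\eta}$ is then of order $e^{2\beta_nq_n-Lk_j}\approx e^{(2\kappa-L)k_j}$ or worse. That is not \eqref{rell-odd-short}, and it is not small for $\kappa<L\le 2\kappa$.
\item $x_2'\in[\ell q_n+2sq_{n-n_0},\ell q_n+4sq_{n-n_0}]$ would run past the phase-resonant site $\ell q_n+k_j$. That is outside the window $[\ell q_n,\ell q_n+k_j]$ where you apply Theorem~\ref{nonres}.
\end{enumerate}

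\textbf{The fix.} No enlargement is needed. Your ``path'' remark, applied with the original $s$, is exactly the paper's step. Since $|p-\ell q_n|\le 10\varepsilon q_n$ and $x_1'\le \ell q_n-2sq_{n-n_0}$, both $|p-x_1|$ and $|\ell q_n-x_1'|$ are at least $2sq_{n-n_0}-C\varepsilon q_n\ge k_j/2-C\varepsilon q_n$. The same holds for $x_2,x_2'$. So the $r_\ell$ terms carry $e^{\beta_nq_n+C\varepsilon q_n-Lk_j}\le\frac12$ by \eqref{absorb1}. They are absorbed after taking the supremum over $p$, giving \eqref{rell-odd-short}. Then \eqref{absorb1} and \eqref{absorb2} give the consequence with $\gamma_2=\gamma_1$.

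\textbf{A further caution on $m_0\in I_1$.} In this case you dropped the self-term $r_0e^{-Lk_j}$ of \eqref{Contradiction}. It needs the same two-leg bookkeeping, and you should check it: since $I_1$ is twice as wide as $I_2$, for $sq_{n-n_0}<|m_0|<2sq_{n-n_0}$ the near endpoint of the block is only about $sq_{n-n_0}\approx k_j/4$ from $0$. That yields merely $e^{\beta_nq_n-Lk_j/2}$ on $r_0$, which is not small when $L<2\kappa$. You inherit this point from the paper's treatment. It is avoided by giving $I_1$ and $I_2$ equal half-widths just below $k_j/2$, at the price of a case split for $x_2'$ near $\ell q_n+k_j$, as in Proposition~\ref{reson-even}.
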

\begin{proof}
Take any $p$ with $|p-\ell q_{n}|\leqslant 10\varepsilon q_{n}$.  Let $n_{0}$ be the least positive integer such that
\begin{equation*}
    q_{n-n_{0}}\leqslant \frac{\varepsilon}{2}\bigg(\frac{k_{j}}{4}-4\varepsilon q_{n}\bigg).
\end{equation*}
Let $s$ be the largest positive integer such that $sq_{n-n_{0}}\leqslant \frac{k_{j}}{4}-4\varepsilon q_{n}$. Since $(s+1)q_{n-n_{0}}>\frac{k_{j}}{4}-4\varepsilon q_{n}$, one has
\begin{equation*}
    \frac{k_{j}}{4}-5\varepsilon q_{n}\leqslant sq_{n-n_{0}} \leqslant \frac{k_{j}}{4}-4\varepsilon q_{n}.
\end{equation*}

We construct
\begin{equation*}
    \begin{split}
        &I_{1}=[-2sq_{n-n_{0}}, 2sq_{n-n_{0}}-1],\\
        &I_{2}=[\ell q_{n}-sq_{n-n_{0}}, \ell q_{n}+sq_{n-n_{0}}-1].
    \end{split}
\end{equation*}
Let $\theta_{m}=\theta+m\alpha$ for $m\in I_{1}\cup I_{2}$. The set $\{\theta_{m}: m\in I_{1}\cup I_{2}\}$ consists of $6sq_{n-n_{0}}$ points. Let $k=6sq_{n-n_{0}}-1$.

For any $j_{1},j_{2}\in I_{1}\cup I_{2}$ with $j_{1}\neq j_{2}$, there exist $l_{1},l_{2},l_{1}',l_{2}'$ with $|l_{1}|,|l_{1}'|\leqslant 40\frac{b_{n+1}}{q_{n}}+4$ and $0\leqslant l_{2}<q_{n}, 1 \leqslant l_{2}'<q_{n}$ such that
\begin{equation*}
    j_{1}-j_{2}=l_{1}q_{n}+l_{2},
\end{equation*}
and
\begin{equation*}
    j_{1}+j_{2}=k_{j}+l_{1}'q_{n}+l_{2}'.
\end{equation*}
If $l_2=0$, then $l_1\neq 0$. By $|l_1|\|q_n\alpha\|<\frac{1}{2}$, we have
\begin{equation}\label{odd2frequency}
    \|(j_1-j_2)\alpha\|\geqslant
    \|q_n\alpha\|\geqslant \frac{1}{2q_{n+1}}\geqslant e^{-(\beta_n+\varepsilon)q_n}.
\end{equation}
If $l_2\neq0$, then
\begin{equation}\label{odd2frequency'}
    \begin{split}
        \|(j_1-j_2)\alpha\|
        &\geqslant
        \|l_2\alpha\|-|l_1|\|q_n\alpha\|\\
        &\geqslant
        \frac{1}{2q_n}
        -
        (40\frac{b_{n+1}}{q_n}+4)\frac{1}{q_{n+1}}\\
        &\geqslant
        \frac{1}{4q_n}.
    \end{split}
\end{equation}
For each fixed $j_1$, the exceptional case $l_2=0$ can occur for at most one
$j_2\in I_1\cup I_2$.

Since $|l_{2}'|\neq 0$, we have
\begin{equation}\label{odd2phase}
    \begin{split}
        \|2\theta+(j_1+j_2)\alpha\|
        &\geqslant
        \|l_2'\alpha\|
        -
        \|2\theta+k_j\alpha\|
        -
        |l_1'|\|q_n\alpha\|\\
        &\geqslant
        \frac{1}{2q_n}
        -
        \frac{10\eta}{q_{n+1}}
        -
        (40\frac{b_{n+1}}{q_n}+4)\frac{1}{q_{n+1}}\\
        &\geqslant
        \frac{1}{4q_n}.
    \end{split}
\end{equation}

Hence, compared with the non-resonant Lagrange estimate, the only possible extra
small divisor is the frequency one, and it occurs at most once for each fixed $j_1$. 

Based on \eqref{odd2frequency}, \eqref{odd2frequency'} and \eqref{odd2phase}, we have the following Lagrange estimate.

\begin{lemma}\label{lag-ell-short}
    For any $m\in I_{1}\cup I_{2}$, we have $\Lag_{m}\leqslant (\beta_{n}+\varepsilon)q_{n}$.
\end{lemma}
\begin{proof}
By the construction of $I_{1}$ and $I_{2}$, one has
    \begin{equation*}
        q_{n-n_{0}+1}>\frac{\varepsilon}{2}\bigg(\frac{k_{j}}{4}-4\varepsilon q_{n}\bigg),
    \end{equation*}
    which means for sufficiently large $n$,
    \begin{equation*}
        \frac{\ln q_{n}}{q_{n-n_{0}+1}}\leqslant \frac{\ln q_{n}}{\varepsilon k_{j}/9}=o(1),
    \end{equation*}
    and
    \begin{equation*}
        \frac{s}{q_{n}}\ln \frac{q_{n}}{s} \leqslant \frac{\ln q_{n-n_{0}}}{q_{n-n_{0}}}=o(1),
    \end{equation*}
    where we use $q_{n}/s>q_{n-n_{0}}$ and $t^{-1}\ln t$ decays for sufficiently large $t>0$.
    
Since $\# (I_{1}\cup I_{2})=6s q_{n-n_{0}}<\frac{3}{2}k_{j} < q_{n}$, we apply \hyperref[item: uniformn]{Theorem~\ref*{uniform}\,\textup{(\ref*{item: uniformn})}} with $6s$ in place of $s$, and with
    \begin{equation*}
        \sigma_{1}=\sigma_{2}=1,\quad n_{1}=n+1,\quad n_{2}=n,\quad \Omega_{1}=1,\quad \Omega_{2}=0,
    \end{equation*}
    one has
    \begin{equation*}
        \begin{split}
            \Lag_{m} &\leqslant 12s \ln \frac{q_{n}}{s} + Cs \ln q_{n-n_{0}} + C \bigg(1+\frac{sq_{n-n_{0}}}{q_{n-n_{0}+1}} \bigg)\ln q_n+\ln q_{n+1}\\
            &\leqslant q_{n}\bigg(\frac{12s}{q_{n}} \ln \frac{q_{n}}{s}+  C \frac{\ln q_{n-n_{0}}}{q_{n-n_{0}}}+C\frac{\ln q_{n}}{q_{n}}+ C \frac{\ln q_{n}}{q_{n-n_{0}+1}} + \frac{\ln q_{n+1}}{q_{n}}\bigg)\\
            &\leqslant q_{n}(\beta_{n}+\varepsilon).
        \end{split}
    \end{equation*}
\end{proof}

By \hyperref[lowlag]{Lemma~\ref*{lowlag}} and \hyperref[lag-ell-short]{Lemma~\ref*{lag-ell-short}}, there exists
$m_0\in I_1\cup I_2$ such that
\begin{equation*}
    \bigg|
    P_k\bigg(\theta_{m_0}-\frac{k-1}{2}\alpha\bigg)
    \bigg|
    \geqslant
    e^{kL-(\beta_n+\varepsilon)q_n}.
\end{equation*}
Assume that $m_0\in I_2$. Construct
\begin{equation*}
    [x_{1},x_{2}]=[m_{0}-3sq_{n-n_{0}}+1, m_{0}+3sq_{n-n_{0}}-1].
\end{equation*}
Thus
\begin{equation}\label{det2}
    |P_{[x_{1},x_{2}]}(\theta)|\geqslant e^{kL-(\beta_n+\varepsilon)q_n}.
\end{equation}
By \hyperref[block]{Lemma~\ref*{block}} and \eqref{det2}, we have
\begin{equation}\label{green-ell-short}
    |\phi(p)| \leqslant e^{\beta_{n}q_{n}+C\varepsilon q_{n}} \sum_{i=1,2} |\phi(x_{i}')|e^{-L|p-x_{i}|},
\end{equation}
where $x_{1}'=x_{1}-1$ and $x_{2}'=x_{2}+1$.

From the construction of $I_{2}$, we have,
\begin{align*}
    x_{1}' &\in [\ell q_{n}-4sq_{n-n_{0}}, \ell q_{n}-2sq_{n-n_{0}}] \subseteq [(\ell-1)q_{n}+k_{j}, \ell q_{n}], \\
    x_{2}' &\in [\ell q_{n}+2sq_{n-n_{0}}, \ell q_{n}+4sq_{n-n_{0}}] \subseteq [\ell q_{n}, \ell q_{n}+k_{j}].
\end{align*}
By \hyperref[nonres]{Theorem~\ref*{nonres}}, we have
\begin{equation}\label{x1-ell-short}
    \begin{split}
        |\phi(x_1')|
        &\leqslant r_{\ell-1+\eta} e^{-(L-\varepsilon)(|x_1'-((\ell-1)q_n+k_j)|-3\varepsilon q_n)}\\
        &\quad+ r_{\ell} e^{-(L-\varepsilon)(|\ell q_n-x_1'|-3\varepsilon q_n)}.
    \end{split}
\end{equation}
Similarly,
\begin{equation}\label{x2-ell-short}
    \begin{split}
        |\phi(x_2')| &\leqslant r_{\ell} e^{-(L-\varepsilon)(|x_2'-\ell q_n|-3\varepsilon q_n)}\\
        &\quad+ r_{\ell+\eta} e^{-(L-\varepsilon)(|\ell q_n+k_j-x_2'|-3\varepsilon q_n)}.
    \end{split}
\end{equation}
Combining \eqref{green-ell-short} with \eqref{x1-ell-short} and \eqref{x2-ell-short}, we get
\begin{equation*}
    \begin{split}
        |\phi(p)| &\leqslant e^{\beta_nq_n+C\varepsilon q_n}
        ( r_{\ell-1+\eta} e^{-L|p-x_1|} e^{-L|x_1'-((\ell-1)q_n+k_j)|}\\
        &\qquad\qquad + r_{\ell} e^{-L|p-x_1|} e^{-L|\ell q_n-x_1'|}\\
        &\qquad\qquad + r_{\ell} e^{-L|p-x_2|} e^{-L|x_2'-\ell q_n|}\\
        &\qquad\qquad + r_{\ell+\eta} e^{-L|p-x_2|} e^{-L|\ell q_n+k_j-x_2'|}).
    \end{split}
\end{equation*}
Therefore,
\begin{equation*}
    |\phi(p)| \leqslant e^{\beta_nq_n+C\varepsilon q_n}
    (e^{-L(q_n-k_j)}r_{\ell-1+\eta}+ e^{-Lk_j}r_{\ell}+ e^{-Lk_j}r_{\ell+\eta}).
\end{equation*}
Taking the supremum over $|p-\ell q_n|\leqslant 10\varepsilon q_n$, we obtain
\begin{equation}\label{rell-before-absorb-short}
    r_{\ell}\leqslant e^{\beta_n q_n+C\varepsilon q_n} (e^{-L(q_n-k_j)}r_{\ell-1+\eta} + e^{-Lk_j}r_{\ell} + e^{-Lk_j}r_{\ell+\eta}).
\end{equation}
By \eqref{absorb1}, for sufficiently large $n$,
\begin{equation*}
    e^{\beta_nq_n+C\varepsilon q_n-Lk_j}
    \leqslant
    \frac{1}{2}.
\end{equation*}
Thus
\begin{equation*}
    r_{\ell} \leqslant e^{C\varepsilon q_n+\beta_n q_n}
     (e^{-L(q_n-k_j)}r_{\ell-1+\eta}+e^{-Lk_j}r_{\ell+\eta}).
\end{equation*}
This proves \eqref{rell-odd-short}.

\begin{figure}[htbp]
\centering
\begin{tikzpicture}[x=1cm,y=1cm,line cap=round,line join=round]

\draw[->,line width=0.6pt] (-0.3,0) -- (10.3,0);

\coordinate (P0) at (0.9,0);
\coordinate (F)  at (5.0,0);
\coordinate (P1) at (9.1,0);

\node[circle,draw=black,fill=white,inner sep=1.7pt,line width=0.65pt] at (P0) {};
\node[circle,fill=black,inner sep=1.8pt] at (F) {};
\node[circle,draw=black,fill=white,inner sep=1.7pt,line width=0.65pt] at (P1) {};

\node[font=\small,below=5pt] at (P0) {$(\ell-1)q_n+k_j$};
\node[font=\small,below=5pt] at (F) {$\ell q_n$};
\node[font=\small,below=5pt] at (P1) {$\ell q_n+k_j$};

\draw[line width=1.7pt] (1.75,0) -- (4.05,0);
\node[font=\small,below=5pt] at (2.90,0) {$x_1'$};

\draw[line width=1.7pt] (5.95,0) -- (8.25,0);
\node[font=\small,below=5pt] at (7.10,0) {$x_2'$};

\end{tikzpicture}
    \caption{Odd case $n=2j-1$: diagram for $r_{\ell}$.}
    \label{fig:odd2}
\end{figure}

If $m_{0}\in I_{1}$, following the proof of  \eqref{Contradiction}, we have  $|\phi(0)|\leqslant \frac{1}{2}$, which contradicts to $\phi(0)=1.$

Finally, we also let
\begin{equation*}
    \gamma_2=
    \frac{1}{2}\min\bigg\{\frac{\eta(L-\kappa)}{2},\frac{L}{2}\bigg\}.
\end{equation*}
By \eqref{absorb1} and \eqref{absorb2}, we obtain
\begin{equation*}
    r_{\ell}\leqslant
    e^{-\gamma_2q_n}(
        r_{\ell-1+\eta}+r_{\ell+\eta}).
\end{equation*}
This finishes the proof.
\end{proof}

\subsection{Resonance for $n=2j$}
\begin{proposition}\label{reson-even}
    Assume $n=2j$ and 
    \begin{equation*}
        0<|\ell|\leqslant 20\frac{b_{n+1}}{q_{n}}.
    \end{equation*} 
    Then, for sufficiently large $n$,
    \begin{equation}\label{rell-even}
        r_{\ell}\leqslant e^{-Lq_{n}+\beta_{n} q_{n}+C\varepsilon q_{n}} (r_{\ell-1}+r_{\ell+1}).
    \end{equation}
    Consequently, 
    \begin{equation*}
        r_{\ell}\leqslant e^{-\gamma_{3} q_{n}}(r_{\ell-1}+r_{\ell+1})
    \end{equation*}
    for some $\gamma_{3}=\gamma_{3}(L,\mu)>0$.
\end{proposition}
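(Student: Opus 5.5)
We follow the template of Propositions~\ref{reson_odd1} and \ref{reson_odd2}, with $k_j$ replaced by $q_n/2$, since for $n=2j$ there is no phase-resonant site at scale $q_n$. Take any $p$ with $|p-\ell q_n|\leqslant 10\varepsilon q_n$. Let $n_0$ be the least positive integer with $q_{n-n_0}\leqslant \frac{\varepsilon}{2}(\frac{q_n}{8}-4\varepsilon q_n)$, and let $s$ be maximal with $sq_{n-n_0}\leqslant \frac{q_n}{8}-4\varepsilon q_n$, so that $\frac{q_n}{8}-5\varepsilon q_n\leqslant sq_{n-n_0}$. Set
\begin{equation*}
I_1=[-2sq_{n-n_0},2sq_{n-n_0}-1],\qquad I_2=[\ell q_n-sq_{n-n_0},\ell q_n+sq_{n-n_0}-1].
\end{equation*}
Then $\#(I_1\cup I_2)=6sq_{n-n_0}<q_n$, and we put $k=6sq_{n-n_0}-1$.

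\emph{Small divisors.} For $j_1\neq j_2$, write $j_1-j_2=l_1q_n+l_2$ as before.
\begin{enumerate}
\item If $l_2\neq0$, the argument of \eqref{odd2frequency'} gives $\|(j_1-j_2)\alpha\|\geqslant \frac{1}{4q_n}$.
\item If $l_2=0$, then $\|(j_1-j_2)\alpha\|\geqslant\|q_n\alpha\|\geqslant e^{-(\beta_n+\varepsilon)q_n}$. This case occurs for at most one $j_2$ per $j_1$, since $I_1$ and $I_2$ each have length $<q_n$.
\end{enumerate}
For the phase, we compare $j_1+j_2$ with $k_j$. Here $n=2j$, and \eqref{Ij} gives $\frac{\eta}{10q_n}\leqslant\|2\theta+k_j\alpha\|\leqslant\frac{10\eta}{q_n}$. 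Exactly as in the derivation of \eqref{reson12} (the case $l_2'=0$ versus $l_2'\neq0$), we get $\|2\theta+(j_1+j_2)\alpha\|\geqslant \frac{\eta}{20q_n}$ for all pairs. The range condition $|\ell|\leqslant 20b_{n+1}/q_n$ makes $|l_1'|\,\|q_n\alpha\|\leqslant 10^{-5}\eta/q_n$ negligible.

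We also need to rule out the next phase resonance $k_{j+1}\approx\eta q_{n+1}$. Since $|j_1+j_2|\lesssim 40b_{n+1}=4\cdot10^{-6}\eta q_{n+1}\ll k_{j+1}$, and $\|2\theta+k_{j+1}\alpha\|\approx \eta/q_{n+2}$ is tiny, this resonance is not reached. Hence the only loss is one frequency factor per point. Applying Theorem~\ref{uniform}\,(\ref{item: uniformn}) with $\sigma_1=\sigma_2=1$, $n_1=n+1$, $n_2=n$, $\Omega_1=1$, $\Omega_2=0$, as in Lemma~\ref{lag-ell-short}, yields $\Lag_m\leqslant(\beta_n+\varepsilon)q_n$.

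\emph{Block expansion.} Lemma~\ref{lowlag} gives $m_0\in I_1\cup I_2$ with $|P_k|\geqslant e^{kL-(\beta_n+\varepsilon)q_n}$. Suppose first $m_0\in I_2$, and set $[x_1,x_2]=[m_0-3sq_{n-n_0}+1,m_0+3sq_{n-n_0}-1]$. Lemma~\ref{block} gives
\begin{equation*}
|\phi(p)|\leqslant e^{\beta_nq_n+C\varepsilon q_n}\sum_i|\phi(x_i')|e^{-L|p-x_i|}.
\end{equation*}
We have $x_1'\in[(\ell-1)q_n,\ell q_n]$ and $x_2'\in[\ell q_n,(\ell+1)q_n]$, both at distance $\geqslant 2sq_{n-n_0}$ from $\ell q_n$. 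Theorem~\ref{nonres}\,(2) bounds $|\phi(x_i')|$ by $r_\ell e^{-(L-\varepsilon)(|x_i'-\ell q_n|-3\varepsilon q_n)}$ plus $r_{\ell\pm1}e^{-(L-\varepsilon)(|(\ell\pm1)q_n-x_i'|-3\varepsilon q_n)}$. Chaining the exponents, the $r_{\ell\pm1}$ terms carry $e^{-Lq_n+C\varepsilon q_n}$ and the $r_\ell$ terms carry $e^{-L\cdot 4sq_{n-n_0}+C\varepsilon q_n}\approx e^{-Lq_n/2}$. Taking the supremum over $p$ gives
\begin{equation*}
r_\ell\leqslant e^{\beta_nq_n+C\varepsilon q_n}\big(e^{-Lq_n}(r_{\ell-1}+r_{\ell+1})+e^{-Lq_n/2+C\varepsilon q_n}r_\ell\big).
\end{equation*}

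\emph{Main obstacle.} The $r_\ell$ self-term must be absorbed, which requires $\beta_n<L/2$. But $\beta_{2j}\to\mu$ and we only know $L>\mu$, so this fails with $sq_{n-n_0}\approx q_n/8$. The fix is to make $I_2$ and the block nearly full length. Take $sq_{n-n_0}\approx(\frac12-C\varepsilon)q_n$ with $I_1=[-sq_{n-n_0},sq_{n-n_0}-1]$, still totaling fewer than $q_{n+1}$ points. The Lagrange bound then comes from Theorem~\ref{uniform}\,(\ref{item: uniformk}) at scale $q_{n+1}$: the frequency loss is still a single $e^{(\beta_n+\varepsilon)q_n}$, and the phase distances stay $\geqslant\eta/(20q_n)$, which costs only $O(s\ln q_n)$, i.e.\ $o(q_n)$ in the exponent. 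The block $[x_1,x_2]$ then has length $\approx(1-C\varepsilon)q_n$ around $\ell q_n$, and the boundary points $x_i'$ land within $C\varepsilon q_n$ of $(\ell\pm1)q_n$, so they are directly controlled by $r_{\ell\pm1}$. The $r_\ell$ self-term then carries $e^{-L(1-C\varepsilon)q_n+\beta_nq_n}\leqslant\frac12$ by \eqref{absorb3}, and this yields \eqref{rell-even}.

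The case $m_0\in I_1$ is excluded as in \eqref{Contradiction}, using $\ell\neq0$: it would force $|\phi(0)|\leqslant\frac12$, contradicting $\phi(0)=1$. Finally, \eqref{absorb3} gives $\gamma_3=\frac{L-\mu}{2}$.
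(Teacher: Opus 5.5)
Your final choice of $s$, $I_1$, $I_2$ (half-widths $sq_{n-n_0}\approx q_n/2$, centered at $0$ and at $\ell q_n$) coincides with the paper's. Your diagnosis of why the $q_n/8$ version fails is also correct. The gap is in the block-expansion step.

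Lemma~\ref{lowlag} forces the block to have length $k=4sq_{n-n_0}-1\approx 2q_n$ and to be centered at $m_0$. But $m_0$ is an arbitrary point of $I_2$, not $\ell q_n$. Hence $x_1'=m_0-2sq_{n-n_0}$ can lie anywhere in $[\ell q_n-3sq_{n-n_0},\,\ell q_n-sq_{n-n_0}]$, roughly between $(\ell-\frac32)q_n$ and $(\ell-\frac12)q_n$, and symmetrically for $x_2'$. So your claim that both $x_i'$ are within $C\varepsilon q_n$ of $(\ell\pm1)q_n$ is false. It is also inconsistent with your stated block length $(1-C\varepsilon)q_n$, which would put the endpoints near $(\ell\pm\frac12)q_n$.

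If $m_0$ sits near an end of $I_2$, the two endpoints behave differently:
\begin{itemize}
\item The near endpoint is only about $q_n/2$ from $\ell q_n$. This is the true source of the $r_\ell$ self-term, with coefficient $e^{-Lq_n+\beta_nq_n+C\varepsilon q_n}$, which \eqref{absorb3} does absorb.
\item The far endpoint falls in $[(\ell\mp2)q_n,(\ell\mp1)q_n]$. There Theorem~\ref{nonres}\,(2) produces a term $e^{-2Lq_n+\beta_nq_n+C\varepsilon q_n}\,r_{\ell\mp2}$.
\end{itemize}
Your argument never generates or disposes of these $r_{\ell\pm2}$ terms, so \eqref{rell-even}, which involves only $r_{\ell\pm1}$, does not follow.

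The paper splits into the cases $m_0\leqslant\ell q_n$ and $m_0>\ell q_n$, collects $r_{\ell-2},\dots,r_{\ell+2}$, and absorbs $r_\ell$. It then removes $r_{\ell\pm2}$ via the transfer-matrix bound \eqref{transfermatrix}, $r_{\ell\pm2}\leqslant e^{(L+C\varepsilon)q_n}r_{\ell\pm1}$. This turns $e^{-2Lq_n+\beta_nq_n}r_{\ell\pm2}$ into $e^{-Lq_n+\beta_nq_n+C\varepsilon q_n}r_{\ell\pm1}$, and it is the missing step. It cannot be sidestepped by shortening the block: the block length is $\#(I_1\cup I_2)-1$. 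For $L$ close to $\mu$, both $I_1$ and $I_2$ need half-width close to $q_n/2$, the first for the $r_\ell$ self-term and the second for excluding $m_0\in I_1$.

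There is also a smaller inaccuracy in the Lagrange step: the phase divisors do not cost ``$O(s\ln q_n)=o(q_n)$''. For large $n=2j$ one has $n_0=1$ and $\ln q_n=(\kappa\eta+o(1))q_{n-1}$, so $s\ln q_n\approx\frac{\kappa\eta}{2}q_n$, which is not negligible. What Theorem~\ref{uniform} actually gives is $Cs\ln q_{n-n_0}+C\bigl(1+\frac{sq_{n-n_0}}{q_{n-n_0+1}}\bigr)\ln q_n=o(q_n)$, plus the single $\ln q_{n+1}=\beta_nq_n$ from the frequency resonance. Also, item (1) of that theorem works at scale $q_{k+1}$ with $k=n-n_0$, not at scale $q_{n+1}$. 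The paper uses item (2) with $4s$ in place of $s$, though item (1) would also work here.
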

\begin{proof}
Take any $p$ with $|p-\ell q_{n}|\leqslant 10\varepsilon q_{n}$.
Let $n_{0}$ be the least positive integer such that
\begin{equation*}
    q_{n-n_{0}}\leqslant \frac{\varepsilon}{2}\bigg(\frac{q_{n}}{2}-4\varepsilon q_{n}\bigg).
\end{equation*}
Let $s$ be the largest positive integer such that $sq_{n-n_{0}}\leqslant \frac{q_{n}}{2}-4\varepsilon q_{n}$. Since $(s+1)q_{n-n_{0}}>\frac{q_{n}}{2}-4\varepsilon q_{n}$, one has
\begin{equation*}
    \frac{q_{n}}{2}-5\varepsilon q_{n}\leqslant s q_{n-n_{0}}\leqslant \frac{q_{n}}{2}-4\varepsilon q_{n}.
\end{equation*}

We construct
\begin{equation*}
    \begin{split}
        &I_{1}=[-sq_{n-n_{0}},sq_{n-n_{0}}-1],\\
        &I_{2}=[\ell q_{n}-sq_{n-n_{0}},\ell q_{n}+sq_{n-n_{0}}-1].
    \end{split}
\end{equation*}

Let $\theta_{m}=\theta+m\alpha$ for $m\in I_{1}\cup I_{2}$. The set $\{\theta_{m}:m\in I_{1}\cup I_{2}\}$ consists of $4sq_{n-n_{0}}$ points. Let $k=4sq_{n-n_{0}}-1$.

For any $j_{1},j_{2}\in I_{1}\cup I_{2}$ with $j_{1}\neq j_{2}$, there exist $l_{1},l_{2},l_{1}',l_{2}'$ with $|l_{1}|,|l_{1}'|\leqslant 40\frac{b_{n+1}}{q_{n}}+4$ and $0\leqslant l_{2},l_{2}'<q_{n}$ such that
\begin{equation*}
    j_{1}-j_{2}=l_{1}q_{n}+l_{2},
\end{equation*}
and
\begin{equation*}
    j_{1}+j_{2}=k_{j}+l_{1}'q_{n}+l_{2}'.
\end{equation*}
If $l_{2}=0$, then $\|l_{1}q_{n}\alpha\|<\frac{1}{2}$ and thus
\begin{equation}\label{evenfrequency}
    \|(j_{1}-j_{2})\alpha\| \geqslant \|q_{n}\alpha\|\geqslant \frac{1}{2q_{n+1}}\geqslant e^{-(\beta_{n}+\varepsilon)q_{n}}.
\end{equation}
If $l_{2}\neq 0$, then
\begin{equation}\label{evenfrequency'}
    \begin{split}
        \|(j_{1}-j_{2})\alpha\|&\geqslant \|l_{2}\alpha\|-|l_{1}|\|q_{n}\alpha\|\\
        &\geqslant \frac{1}{2q_{n}}- (40\frac{b_{n+1}}{q_{n}}+4) \frac{1}{q_{n+1}}\\
        &\geqslant \frac{1}{4q_{n}}.
    \end{split}
\end{equation}
For each fixed $j_1$, the exceptional case $l_2=0$ can occur for at most one $j_2$.

If $l_2'=0$, then by \eqref{Ij} and $n=2j$,
\begin{equation}\label{evenphase}
    \begin{split}
        \|2\theta+(j_1+j_2)\alpha\|
    &\geqslant
    \|2\theta+k_j\alpha\|
    -
    |l_1'|\|q_n\alpha\|  \\
    &\geqslant
    \frac{\eta}{10q_n}
    -
    (40\frac{b_{n+1}}{q_n}+4)\frac{1}{q_{n+1}} \\
    &\geqslant
    \frac{\eta}{20q_n}.
    \end{split}
\end{equation}
If $l_2'\neq0$, then
\begin{equation}\label{evenphase'}
    \begin{split}
        \|2\theta+(j_1+j_2)\alpha\|
    &\geqslant
    \|l_2'\alpha\|
    -
    \|2\theta+k_j\alpha\|
    -
    |l_1'|\|q_n\alpha\|  \\
    &\geqslant
    \frac{1}{2q_n}
    -
    \frac{10\eta}{q_n}
    -
    (40\frac{b_{n+1}}{q_n}+4)\frac{1}{q_{n+1}}  \\
    &\geqslant
    \frac{1}{4q_n}.
    \end{split}
\end{equation}
Since $\eta\leqslant 10^{-2}$, in both cases we have
\begin{equation*}
    \|2\theta+(j_1+j_2)\alpha\|\geqslant
    \frac{\eta}{20q_n}.
\end{equation*}
Hence, the only possible extra small divisor is the frequency one, and it occurs at most once.

Based on \eqref{evenfrequency}, \eqref{evenfrequency'}, \eqref{evenphase}, and \eqref{evenphase'}, we have the following Lagrange estimate.

\begin{lemma}\label{lageven}
    For any $m\in I_{1}\cup I_{2}$, we have $\Lag_{m}\leqslant (\beta_{n}+\varepsilon) q_{n}$.
\end{lemma}
\begin{proof}
    By the construction of $I_{1}$ and $I_{2}$, one has
    \begin{equation*}
        q_{n-n_{0}+1}>\frac{\varepsilon}{2}\bigg(\frac{q_{n}}{2}-4\varepsilon q_{n}\bigg),
    \end{equation*}
    which means for sufficiently large $n$,
    \begin{equation*}
        \frac{\ln q_{n}}{q_{n-n_{0}+1}}\leqslant \frac{\ln q_{n}}{\varepsilon q_{n}/5}=o(1),
    \end{equation*}
    and 
    \begin{equation*}
        \frac{s}{q_{n}}\ln \frac{q_{n}}{s}\leqslant \frac{\ln q_{n-n_{0}}}{q_{n-n_{0}}}=o(1),
    \end{equation*}
    where we use $q_{n}/s>q_{n-n_{0}}$ and $t^{-1}\ln t$ decays for sufficiently large $t>0$.

    Since $\# (I_{1}\cup I_{2})=4s q_{n-n_{0}}< 2q_{n}$, we apply \hyperref[item: uniformn]{Theorem~\ref*{uniform}\,\textup{(\ref*{item: uniformn})}} with $4s$ in place of $s$, and with
    \begin{equation*}
        \sigma_{1}=\sigma_{2}=1, \quad n_{1}=n+1, \quad n_{2}=n, \quad \Omega_{1}=1, \quad \Omega_{2}=0,
    \end{equation*}
    one has
    \begin{equation*}
        \begin{split}
            \Lag_{m} &\leqslant 8s \ln \frac{q_{n}}{s} + Cs \ln q_{n-n_{0}} + C \bigg(1+\frac{sq_{n-n_{0}}}{q_{n-n_{0}+1}} \bigg)\ln q_n+\ln q_{n+1}\\
            &\leqslant q_{n}\bigg(\frac{8s}{q_{n}} \ln \frac{q_{n}}{s}+  C \frac{\ln q_{n-n_{0}}}{q_{n-n_{0}}}+C\frac{\ln q_{n}}{q_{n}}+ C \frac{\ln q_{n}}{q_{n-n_{0}+1}} + \frac{\ln q_{n+1}}{q_{n}}\bigg)\\
            &\leqslant q_{n}(\beta_{n}+\varepsilon).
        \end{split}
    \end{equation*}
\end{proof}

By \hyperref[lowlag]{Lemma~\ref*{lowlag}} and \hyperref[lageven]{Lemma~\ref*{lageven}}, there exists $m_0\in I_1\cup I_2$ such that
\begin{equation*}
    \bigg|P_{k}\bigg(\theta_{m_{0}}-\frac{k-1}{2}\alpha\bigg)\bigg|\geqslant e^{kL-(\beta_{n}+\varepsilon) q_{n}}.
\end{equation*}
Assume that $m_0\in I_2$. Construct
\begin{equation*}
    [x_{1},x_{2}] = [m_{0}-2sq_{n-n_{0}}+1, m_{0}+2sq_{n-n_{0}}-1].
\end{equation*}
Thus
\begin{equation}\label{det3}
    |P_{[x_{1},x_{2}]}|\geqslant e^{kL-(\beta_{n}+\varepsilon) q_{n}}
\end{equation}
By \hyperref[block]{Lemma~\ref*{block}} and \eqref{det3}, it follows that
\begin{equation*}
    |\phi(p)| \leqslant e^{\beta_{n} q_{n}+C\varepsilon q_{n}} \sum_{i=1,2} |\phi(x_{i}')|e^{-L|p-x_{i}|},
\end{equation*}
where $x_{1}'=x_{1}-1$ and $x_{2}'=x_{2}+1$.

To proceed, we divide the analysis into two cases.

\noindent Case 1: $m_{0} \leqslant \ell q_{n}$. From the construction of $I_{2}$, we have,
\begin{align*}
    x_{1}' &\in [\ell q_{n}-3sq_{n-n_{0}}, \ell q_{n}-2sq_{n-n_{0}}] \subseteq [(\ell-2)q_{n}, (\ell-1)q_{n}+10\varepsilon q_{n}], \\
    x_{2}' &\in [\ell q_{n}+sq_{n-n_{0}}, \ell q_{n}+2sq_{n-n_{0}}] \subseteq [\ell q_{n}, (\ell+1)q_{n}-8\varepsilon q_{n}].
\end{align*}
If $x_{1}'$ (or $x_{2}'$) is in the $10\varepsilon q_n$ neighborhood of $(\ell-1)q_n$ (or $(\ell+1)q_{n}$), we bound it directly by  $r_{\ell-1}$ (or $r_{\ell+1}$). Otherwise, we apply \hyperref[nonres]{Theorem~\ref*{nonres}}. In both cases, 
\begin{equation*}
    \begin{split}
        |\phi(x_{1}')| &\leqslant r_{\ell-2}e^{-(L-\varepsilon)(|x_{1}'-(\ell-2)q_{n}|-10\varepsilon q_{n})} \\
        &\quad + r_{\ell-1} e^{-(L-\varepsilon)(|(\ell-1)q_{n}-x_{1}'|-10\varepsilon q_{n})}, \\
    \end{split}
\end{equation*}
and
\begin{equation*}
    \begin{split}
        |\phi(x_{2}')| &\leqslant r_{\ell}e^{-(L-\varepsilon)(|x_{2}'-\ell q_{n}|-10\varepsilon q_{n})}\\
        &\quad+ r_{\ell+1}e^{-(L-\varepsilon)(|(\ell+1)q_{n}-x_{2}'|-10\varepsilon q_{n})}.
    \end{split}
\end{equation*}
Consequently, we have
\begin{equation*}
    \begin{split}
        |\phi(p)|& \leqslant e^{\beta_{n} q_{n}+C\varepsilon q_{n}}(r_{\ell-2}e^{-L|p-x_{1}|}e^{-L|x_{1}'-(\ell-2)q_{n}|}\\
        &\qquad\qquad +r_{\ell-1}e^{-L|p-x_{1}|} e^{-L|(\ell-1)q_{n}-x_{1}'|}\\
        &\qquad\qquad + r_{\ell}e^{-L|p-x_{2}|}e^{-L|x_{2}'-\ell q_{n}|}\\
        &\qquad\qquad+r_{\ell+1}e^{-L|p-x_{2}|}e^{-L|(\ell +1)q_{n}-x_{2}'|} ).
    \end{split}
\end{equation*}
Thus
\begin{equation}\label{case1-even}
\begin{split}
    |\phi(p)|
    &\leqslant
    e^{C\varepsilon q_n-2Lq_n+\beta_nq_n}r_{\ell-2}
    +
    e^{C\varepsilon q_n-Lq_n+\beta_nq_n}r_{\ell-1} \\
    &\quad+
    e^{C\varepsilon q_n-Lq_n+\beta_nq_n}r_{\ell}
    +
    e^{C\varepsilon q_n-Lq_n+\beta_nq_n}r_{\ell+1}.
\end{split}
\end{equation}

\noindent Case 2: $m_{0} > \ell q_{n}$. Then
\begin{align*}
    x_{1}' &\in [\ell q_{n}-2sq_{n-n_{0}}, \ell q_{n}-sq_{n-n_{0}}] \subseteq [(\ell-1)q_{n}+8\varepsilon q_{n},\ell q_{n}], \\
    x_{2}' &\in [\ell q_{n}+2sq_{n-n_{0}}, \ell q_{n}+3sq_{n-n_{0}}] \subseteq [(\ell+1)q_{n}-10\varepsilon q_{n}, (\ell+2)q_{n}].
\end{align*}
As in Case 1, when $x_{1}'$ (or $x_{2}'$) is in the  $10\varepsilon q_n$ neighborhood of $(\ell-1)q_{n}$ (or $(\ell+1)q_n$), we bound it directly by $r_{\ell-1}$ (or $r_{\ell+1}$).  Otherwise we apply \hyperref[nonres]{Theorem~\ref*{nonres}}. In both cases,
\begin{equation*}
    \begin{split}
        |\phi(x_{1}')| &\leqslant 
    r_{\ell-1}
    e^{-(L-\varepsilon)(|x_{1}'-(\ell-1)q_{n}|-10\varepsilon q_{n})} \\
    &\quad+r_{\ell}
    e^{-(L-\varepsilon)(|\ell q_{n}-x_{1}'|-10\varepsilon q_{n})},
    \end{split}
\end{equation*}
and
\begin{equation*}
    \begin{split}
        |\phi(x_{2}')| &\leqslant 
    r_{\ell+1}
    e^{-(L-\varepsilon)(|x_{2}'-(\ell+1)q_{n}|-10\varepsilon q_{n})} \\
    &\quad+r_{\ell+2}
    e^{-(L-\varepsilon)(|(\ell+2)q_{n}-x_{2}'|-10\varepsilon q_{n})}.
    \end{split}
\end{equation*}
This implies
\begin{equation*}
    \begin{split}
        |\phi(p)|&\leqslant e^{\beta_{n} q_{n}+C\varepsilon q_{n}} (r_{\ell-1}e^{-L|p-x_{1}|} e^{-L|x_{1}'-(\ell-1)q_{n}|} \\
        &\qquad\qquad+r_{\ell} e^{-L|p-x_{1}|} e^{-L|\ell q_{n}-x_{1}'|}\\
        &\qquad\qquad + r_{\ell+1} e^{-L|p-x_{2}|} e^{-L|x_{2}'-(\ell+1)q_{n}|} \\
        &\qquad\qquad+ r_{\ell+2} e^{-L|p-x_{2}|} e^{-L|(\ell+2)q_{n}-x_{2}'|} ).
    \end{split}
\end{equation*}
Thus
\begin{equation}\label{case2-even}
\begin{split}
    |\phi(p)|
    &\leqslant
    e^{C\varepsilon q_n-Lq_n+\beta_nq_n}r_{\ell-1}
    +
    e^{C\varepsilon q_n-Lq_n+\beta_nq_n}r_{\ell} \\
    &\quad+
    e^{C\varepsilon q_n-Lq_n+\beta_nq_n}r_{\ell+1}
    +
    e^{C\varepsilon q_n-2Lq_n+\beta_nq_n}r_{\ell+2}.
\end{split}
\end{equation}

Combining \eqref{case1-even} and \eqref{case2-even}, we obtain
\begin{equation*}
    \begin{split}
        r_{\ell}&\leqslant e^{-Lq_{n}+\beta_{n}q_{n}+C\varepsilon q_{n}} (r_{\ell-1}+r_{\ell}+r_{\ell+1})\\
        &\quad +e^{-2Lq_{n}+\beta_{n}q_{n}+C\varepsilon q_{n}} (r_{\ell-2}+r_{\ell+2}).
    \end{split}
\end{equation*}
By \eqref{absorb3}, for sufficiently large $n$,
\begin{equation*}
    e^{C\varepsilon q_n-Lq_n+\beta_nq_n}
    \leqslant
    \frac{1}{2}.
\end{equation*}
Thus
\begin{equation}\label{after-absorb-even}
\begin{split}
    r_{\ell}
    &\leqslant
    e^{C\varepsilon q_n-Lq_n+\beta_nq_n}
    (r_{\ell-1}+r_{\ell+1})  \\
    &\quad+
    e^{C\varepsilon q_n-2Lq_n+\beta_nq_n}
    (r_{\ell-2}+r_{\ell+2}).
\end{split}
\end{equation}
By the transfer matrix estimate \eqref{transfermatrix},
\begin{equation*}
    r_{\ell+2}
    \leqslant
    e^{C\varepsilon q_n}e^{Lq_n}r_{\ell+1},
    \quad
    r_{\ell-2}
    \leqslant
    e^{C\varepsilon q_n}e^{Lq_n}r_{\ell-1}.
\end{equation*}
Substituting this into \eqref{after-absorb-even}, we obtain
\begin{equation*}
    r_{\ell}
    \leqslant
    e^{C\varepsilon q_n-Lq_n+\beta_nq_n}
    (r_{\ell-1}+r_{\ell+1}).
\end{equation*}
This proves \eqref{rell-even}.

\begin{figure}[htbp]
\centering
\begin{tikzpicture}[x=1cm,y=1cm,line cap=round,line join=round]

\draw[->,line width=0.6pt] (-0.3,0) -- (10.3,0);

\coordinate (F0) at (0.9,0);
\coordinate (F)  at (5.0,0);
\coordinate (F1) at (9.1,0);

\node[circle,fill=black,inner sep=1.8pt] at (F0) {};
\node[circle,fill=black,inner sep=1.8pt] at (F) {};
\node[circle,fill=black,inner sep=1.8pt] at (F1) {};

\node[font=\small,below=5pt] at (F0) {$(\ell-1)q_n$};
\node[font=\small,below=5pt] at (F) {$\ell q_n$};
\node[font=\small,below=5pt] at (F1) {$(\ell+1)q_n$};

\draw[line width=1.7pt] (1.75,0) -- (4.05,0);
\node[font=\small,below=5pt] at (2.90,0) {$x_1'$};

\draw[line width=1.7pt] (5.95,0) -- (8.25,0);
\node[font=\small,below=5pt] at (7.10,0) {$x_2'$};

\end{tikzpicture}
    \caption{Even case $n=2j$: diagram for $r_{\ell}$.}
    \label{fig:even}
\end{figure}

If $m_{0}\in I_{1}$, following the proof of  \eqref{Contradiction}, we have  that $|\phi(0)|\leqslant \frac{1}{2}$, which contradicts to $\phi(0)=1.$

Finally, we let 
\begin{equation*}
    \gamma_{3}=\frac{L-\mu}{4},
\end{equation*}
By \eqref{absorb3} and $n=2j$,  we have
\begin{equation*}
    r_{\ell}\leqslant e^{-\gamma_3 q_n}(r_{\ell-1}+r_{\ell+1}).
\end{equation*}
This finishes the proof.
\end{proof}

\section{Proof of Theorem~\ref*{mainthm}}

We only prove \hyperref[item:AL]{Theorem~\ref*{mainthm}\,\textup{(\ref*{item:AL})}}, since \hyperref[item:SC]{Theorem~\ref*{mainthm}\,\textup{(\ref*{item:SC})}} is well-known. Let
$L = \ln|\lambda| > \max\{\kappa,\mu\}$, and let $\phi$ be a generalized eigenfunction
satisfying \eqref{eigen1} and \eqref{normalization}. Choose
\begin{equation*}
    0 < \gamma < \frac{1}{100}\min\{L, \gamma_1, \gamma_2, \gamma_3\}.
\end{equation*}
Let $b_n = 10^{-7}\eta q_n$, where $\eta$ satisfies \eqref{eta}. For sufficiently large $n$, Propositions~\ref{reson_odd1}, \ref{reson_odd2}, and
\ref{reson-even} imply that for $n = 2j$,
\begin{equation}\label{main-even-rec}
    r_{\ell} \leqslant e^{-8\gamma q_n}\max\{r_{\ell-1}, r_{\ell+1}\},
    \qquad 0 < |\ell| \leqslant 20b_{n+1}/q_n,
\end{equation}
and for $n = 2j-1$,
\begin{equation}\label{main-odd-rec1}
    r_{\ell+\eta} \leqslant e^{-8\gamma q_n}\max\{r_\ell, r_{\ell+1}\},
    \qquad |\ell| \leqslant 20b_{n+1}/q_n,
\end{equation}
while
\begin{equation}\label{main-odd-rec2}
    r_{\ell} \leqslant e^{-8\gamma q_n}\max\{r_{\ell-1+\eta}, r_{\ell+\eta}\},
    \qquad 0 < |\ell| \leqslant 20b_{n+1}/q_n.
\end{equation}
Combining \eqref{main-odd-rec1} and
\eqref{main-odd-rec2}, and absorbing the $r_\ell$ term, we again obtain
\begin{equation}\label{main-int-rec}
    r_{\ell} \leqslant e^{-8\gamma q_n}\max\{r_{\ell-1}, r_{\ell+1}\},
    \qquad 0 < |\ell| \leqslant 20b_{n+1}/q_n.
\end{equation}

Let $M_n = \lfloor 20\frac{b_{n+1}}{q_n}\rfloor - 1$.
By iterating \eqref{main-even-rec} and
\eqref{main-int-rec}, for $0 \leqslant \ell \leqslant M_n$, we have
\begin{equation*}
    r_\ell \leqslant
    \max\{e^{-4\gamma\ell q_n}r_0,\,
    e^{-4\gamma(M_n-\ell)q_n}r_{M_n}\}.
\end{equation*}
Since $\phi$ is a generalized eigenfunction, $r_0 \leqslant Cq_n$ and
$r_{M_n} \leqslant Cq_{n+1}$. Hence, for sufficiently large $n$ and
$1 \leqslant \ell \leqslant 10b_{n+1}/q_n$,
\begin{equation}\label{main-r-decay}
    r_\ell \leqslant e^{-2\gamma\ell q_n}.
\end{equation}
A similar argument yields
\begin{equation}\label{main-r-decay-neg}
    r_\ell \leqslant e^{-2\gamma|\ell| q_n},
    \qquad
    1 \leqslant |\ell| \leqslant 10b_{n+1}/q_n.
\end{equation}
If $n = 2j-1$, then \eqref{main-odd-rec1}, \eqref{main-r-decay}, and
\eqref{main-r-decay-neg} imply
\begin{equation}\label{main-phase-decay}
    r_{\ell+\eta} \leqslant e^{-2\gamma|\ell+\eta|q_n},
    \qquad
    |\ell| \leqslant 10b_{n+1}/q_n.
\end{equation}

Now, take $|k|$ to be sufficiently large and choose $n$ such that $8b_n \leqslant |k| < 8b_{n+1}$. Thus $|k|> 10\varepsilon q_{n}$ for sufficiently small $\varepsilon>0$.
We may assume $k > 0$, as the case $k < 0$ is analogous. If $\operatorname{dist}(k, q_n\mathbb{Z}) < 10\varepsilon q_n$, then \eqref{main-r-decay} gives
\begin{equation}\label{ellres}
    |\phi(k)| \leqslant e^{-(\gamma-\varepsilon)k}.
\end{equation}
If $n = 2j-1$ and $\operatorname{dist}(k, q_n\mathbb{Z} + k_j) < 10\varepsilon q_n$, then by \eqref{ks} and \eqref{main-phase-decay},
\begin{equation}\label{etares}
    |\phi(k)| \leqslant e^{-(\gamma-\varepsilon)k}.
\end{equation}

It remains to consider non-resonant $k$. If $n = 2j$, we choose $\ell$ such that
$k \in [\ell q_n, (\ell+1)q_n]$. By Theorem~\ref{nonres},
\begin{equation*}
    |\phi(k)| \leqslant
        r_{\ell}e^{-(L-\varepsilon)(|k-\ell q_n|-3\varepsilon q_n)} + r_{\ell+1}e^{-(L-\varepsilon)(|(\ell+1)q_n-k|-3\varepsilon q_n)}.
\end{equation*}
Using \eqref{main-r-decay}, \eqref{main-r-decay-neg}, and the assumption $L > 100\gamma$, we obtain
\begin{equation}\label{evennonres}
    |\phi(k)| \leqslant e^{-(\gamma-\varepsilon)k}.
\end{equation}
If $n = 2j-1$, then $k$ belongs to one of the two intervals $[\ell q_n, \ell q_n+k_j]$ or $[\ell q_n+k_j, (\ell+1)q_n]$.
Applying the corresponding estimate in Theorem~\ref{nonres}, and subsequently using
\eqref{main-r-decay}, \eqref{main-r-decay-neg}, \eqref{main-phase-decay}, and
$k_j = \eta q_n + o(q_n)$, we again obtain
\begin{equation}\label{oddnonres}
    |\phi(k)| \leqslant e^{-(\gamma-\varepsilon)k}.
\end{equation}
Combining \eqref{ellres} and \eqref{etares} with \eqref{evennonres} and \eqref{oddnonres} completes the proof.
\qed

\appendix
\section{Lagrange interpolation}
\begin{theorem}\label{uniform}
    Let $k=n-n_{0}$. Let $I_{1}, I_{2}\subseteq \mathbb{Z}$ disjoint, and set $I=I_{1}\cup I_{2}$. Assume $I_{1}$ and $I_{2}$ are each a union of consecutive intervals of length $q_{k}$. Fix $n_{1},n_{2}\geqslant n$. Assume
    \begin{equation*}
        \min_{i,j\in I, i\neq j}\|(i-j)\alpha\|_{\mathbb{R}/\mathbb{Z}} \geqslant \frac{1}{C_{0}q_{n_{1}}^{\sigma_{1}}},
    \end{equation*}
    and 
    \begin{equation*}
        \min_{i,j\in I}\|2\theta+(i+j)\alpha\|_{\mathbb{R}/\mathbb{Z}} \geqslant \frac{1}{C_{0}q_{n_{2}}^{\sigma_{2}}}.
    \end{equation*}
    Moreover, assume that
    \begin{equation*}
            \sup_{i\in I}\# \bigg\{j\in I, j\neq i: \|(i-j)\alpha\|< \frac{1}{C_{0}'q_{n}^{\sigma_{1}}} \bigg\}\leqslant \Omega_{1},
        \end{equation*}
        and
        \begin{equation*}
            \sup_{i\in I}\# \bigg\{j\in I: \|2\theta+(i+j)\alpha\|< \frac{1}{C_{0}'q_{n}^{\sigma_{2}}} \bigg\}\leqslant \Omega_{2},
        \end{equation*}
    where $C_{0}'\geqslant C_{0}$ is fixed. Then the following hold.

    \begin{enumerate}
        \item \label{item: uniformk} If $\# I=sq_{k}< C_{0}'' q_{k+1}$, then for every $m\in I$,
\begin{equation*}
    \begin{split}
        \mathrm{Lag}_m &\leqslant -2s \ln \frac{s}{q_{k+1}} + Cs \ln q_{k} + C(\sigma_{1}+\sigma_{2}) \ln q_n\\
        &\qquad+\Omega_{1}\sigma_{1}\ln q_{n_{1}}+\Omega_{2}\sigma_{2}\ln q_{n_{2}}.
    \end{split}
\end{equation*}

        \item \label{item: uniformn} If $\# I=sq_{k}< C_{0}'' q_{n}$, then for every $m\in I$,
\begin{equation*}
    \begin{split}
        \Lag_{m}\leqslant &-2s \ln \frac{s}{q_{n}}+Cs \ln q_{k} + C(\sigma_{2}+\sigma_{1})(1+\frac{sq_{k}}{q_{k+1}})\ln q_{n}\\
        &\qquad+\Omega_{1}\sigma_{1}\ln{q_{n_{1}}}+\Omega_{2}\sigma_{2}\ln{q_{n_{2}}}.
    \end{split}
\end{equation*}
    \end{enumerate}
\end{theorem}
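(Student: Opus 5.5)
We must prove the appendix Theorem~\ref{uniform}, the uniform Lagrange interpolation estimate. The plan is the standard Avila--Jitomirskaya scheme, which bounds numerator and denominator products of $|\cos 2\pi a-\cos 2\pi\theta_\ell|$ separately. We write
\begin{equation*}
|\cos 2\pi a-\cos 2\pi\theta_\ell| = 2|\sin\pi(a+\theta_\ell)|\,|\sin\pi(a-\theta_\ell)|,
\end{equation*}
and for $\theta_\ell=\theta+\ell\alpha$ the denominator factors become
\begin{equation*}
2\,|\sin\pi(2\theta+(m+\ell)\alpha)|\,|\sin\pi(m-\ell)\alpha|.
\end{equation*}
Thus $\Lag_m=\ln\max_a\sum_{\ell\neq m}\ln|\ldots|$ splits into a numerator sum minus two denominator sums: a frequency sum over $\ln\|(m-\ell)\alpha\|$ and a phase sum over $\ln\|2\theta+(m+\ell)\alpha\|$.

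\textbf{Numerator.} Decompose $I$ into $s$ blocks of $q_k$ consecutive integers. On each block the points $\ell\alpha$ are $1/(2q_k)$-almost equidistributed, with one point in each interval of length $1/q_k$ up to an error $O(1/q_{k+1})$. The standard estimate (Lemma 9.7 of Avila--Jitomirskaya type) gives, for every $a$ and each block,
\begin{equation*}
\sum_{\ell\in\text{block}}\ln|\sin\pi(a\pm\theta_\ell)|\le -q_k\ln 2+C\ln q_k.
\end{equation*}
Summing over the $s$ blocks gives numerator $\le -2sq_k\ln2+Cs\ln q_k$, with the factor 2 coming from the $\pm$ terms.

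\textbf{Denominator.} This is the main step. Here we need a lower bound on
\begin{equation*}
\sum_{\ell\in I,\ell\ne m}\ln|\sin\pi(m-\ell)\alpha| \quad\text{and}\quad \sum_{\ell}\ln|\sin\pi(2\theta+(m+\ell)\alpha)|.
\end{equation*}
Order the values $\|(m-\ell)\alpha\|$ increasingly. Since $I$ lies in a window of length $\#I$, pairwise differences of the points $(m-\ell)\alpha$ are again differences of elements of $I$ or of size $<q_{k+1}$ (case (1)), or of size $<C q_n$ (case (2)). Hence they are separated by roughly $\|q_k\alpha\|\gtrsim 1/q_{k+1}$ in case (1), or $\gtrsim 1/q_n$ in case (2), apart from the exceptional close pairs.

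After removing at most $\Omega_1$ exceptional terms, each bounded below by $-\sigma_1\ln(C_0q_{n_1})$ by the first hypothesis, the $i$-th smallest remaining value is $\ge c\,i/(sq_{k+1})$ in case (1). This uses the fact that in any interval of length $1/q_{k+1}$ there are at most $Cs$ points; refining this, each block contributes at most one point per interval of length $1/q_k$. Comparing with $\sum\ln(i/(sq_k))$ gives the frequency sum $\ge -sq_k\ln 2+s\ln(s/q_{k+1})-Cs\ln q_k-C\sigma_1\ln q_n-\Omega_1\sigma_1\ln q_{n_1}$. The phase sum is treated identically using the second separation hypothesis and $\Omega_2$.

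In case (2), $\#I$ can exceed $q_{k+1}$, so we instead use the spacing $1/q_n$ and a count of at most $1+sq_k/q_{k+1}$ blocks per scale. This produces the $(1+sq_k/q_{k+1})\ln q_n$ term and $\ln(s/q_n)$ in place of $\ln(s/q_{k+1})$.

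\textbf{Combining.} Subtracting, the $-2sq_k\ln2$ terms cancel, leaving $-2s\ln(s/q_{k+1})$ (respectively $-2s\ln(s/q_n)$) together with the stated error terms. The main obstacle is the careful counting in the denominator of how many points of $\{(m-\ell)\alpha\}$ fall into short intervals, uniformly in $m$, in the case-(2) regime where the window is longer than $q_{k+1}$. To handle this, I would reduce to case (1) on each subwindow of length $<q_{k+1}$ and sum, absorbing cross-window interactions via the $1/q_n$ separation.
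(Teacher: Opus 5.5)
Your numerator bound (Lemma~\ref{av} applied to each of the $s$ blocks, in its upper-bound direction) is the same as in the paper. The gap is in the denominator step, which carries the content of the theorem and does not go through as described.

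The statement needs the main term $-sq_k\ln 2$ of each sine sum $\Sigma_\pm$ with the exact constant and an error $O(s\ln q_k)=o(\#I)$, besides the $s\ln(s/q_{k+1})$ and resonance terms, so that it cancels against the numerator. A rank comparison of the type ``the $i$-th smallest value is $\ge c\,i/(sq_{k+1})$, compare with $\sum_i\ln(i/(sq_k))$'' determines these sums only up to an error linear in $\#I=sq_k$. The unspecified constant $c$ costs about $sq_k\ln(1/c)$. Replacing $\ln|\sin\pi x|$ by the logarithm of a linear profile also costs a fixed amount per point, even for perfectly equidistributed points: $\int_0^1\ln|\sin\pi x|\,dx=-\ln 2$, whereas $\int_0^1\ln(2\|x\|)\,dx=-1$. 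An $O(sq_k)$ error is fatal in the applications; for instance, Lemma~\ref{lagnonresonant} needs $\Lag_m\le\varepsilon^2 q_n$ while $\#I\gtrsim\varepsilon q_n$. Similarly, ``at most $Cs$ points in an interval of length $1/q_{k+1}$'' yields only about $-s\ln q_{k+1}-Cs$ for the $s$ smallest terms, losing the $s\ln s$ in $s\ln(s/q_{k+1})$. The applications use $\frac{s}{q_n}\ln\frac{q_{k+1}}{s}=o(1)$, which fails for $\frac{s}{q_n}\ln q_{k+1}$, since $sq_k\gtrsim\varepsilon q_n$ and $\ln q_{k+1}/q_k$ does not tend to $0$ for this $\alpha$.

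The missing ingredient is the lower-bound half of Lemma~\ref{av}, applied to each block $J_i$ of $q_k$ consecutive integers. It gives $-(q_k-1)\ln 2-C\ln q_k$ for all terms of the block except its minimum. Hence $\Sigma_+\ge -sq_k\ln 2-Cs\ln q_k+\sum_{i=1}^s\ln|\sin\pi\hat\theta_i|$, similarly for $\Sigma_-$, and all the arithmetic is concentrated in the $s$ block minima $\hat\theta_i$. (Done block by block with the true $\ln|\sin|$ profile and the one-point-per-arc structure, your rearrangement idea essentially reproduces that lemma.) When $\hat\theta_{i+1}=\hat\theta_i+q_k\alpha$, consecutive minima are $\|q_k\alpha\|\ge 1/(2q_{k+1})$ apart. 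Stirling then gives $s\ln(s/q_{k+1})-Cs$ after removing at most $\Omega_2$ (resp.\ $\Omega_1$) exceptional minima, each bounded below by the separation hypothesis involving $q_{n_2}$ (resp.\ $q_{n_1}$). In general one splits $[1,s]$ into maximal runs $T_\kappa$ of this type. In case (2), the runs that approach $0$ have length $\gtrsim q_{k+1}/q_k$ and number $\lesssim sq_k/q_{k+1}$, which is what produces the factor $(1+sq_k/q_{k+1})\ln q_n$.

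Your premise ``$I$ lies in a window of length $\#I$'' is also false. $I_1$ and $I_2$ are disjoint and in the applications typically far apart ($I_1$ around $0$, $I_2$ around $\ell q_n$ or $\ell q_n+k_j$). So cross differences are neither shorter than $q_{k+1}$ nor $\|q_k\alpha\|$-separated. That is why the hypotheses are phrased through $q_{n_1},q_{n_2}$ and the counts $\Omega_1,\Omega_2$. It is also why no uniform $1/q_n$ separation is available for your case-(2) reduction: pairs closer than $1/(C_0'q_n^{\sigma_i})$ are allowed, and only their number is bounded.
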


We need the following estimate.
\begin{lemma}[\cite{MR2521117}]\label{av}
For $\alpha \in \mathbb{R} \setminus \mathbb{Q}$, $\theta \in \mathbb{R}$ and $0 \leqslant j_{0} \leqslant q_{n}-1$  be such that
\begin{equation*}
    |\cos \pi (\theta+j_{0} \alpha )|=\inf _{0 \leqslant j \leqslant q_{n}-1}|\cos \pi(\theta+j \alpha)|,
\end{equation*}
then for some absolute constant $C >0$
\begin{equation}\label{avila-tri}
    \bigg|\sum_{j=0, j \neq j_{0}}^{q_{n}-1} \ln |\cos \pi(\theta+j\alpha)|+(q_{n}-1) \ln 2\bigg| \leqslant C \ln q_{n}.
\end{equation}
\end{lemma}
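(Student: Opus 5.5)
The plan is to compare the orbit $\{\theta+j\alpha\}_{0\le j<q_n}$ with the rational orbit $\{\theta+jp_n/q_n\}_{0\le j<q_n}$. For the rational orbit there is an exact product formula. Since $\gcd(p_n,q_n)=1$, the points $jp_n/q_n$ run over the grid $\frac{1}{q_n}\mathbb{Z}$ modulo $1$. We use the classical identity $\prod_{k=0}^{q-1}2|\sin\pi(y+k/q)|=2|\sin\pi q y|$ together with $|\cos\pi x|=|\sin\pi(x+\tfrac12)|$. This gives
\begin{equation*}
\sum_{j=0}^{q_n-1}\ln|\cos\pi(\theta+jp_n/q_n)|=-q_n\ln 2+\ln\big(2|\sin\pi q_n(\theta+\tfrac12)|\big).
\end{equation*}
Let $\tilde j_0$ be the index for which the grid point $\theta+\tilde j_0p_n/q_n$ is closest to $\tfrac12+\mathbb{Z}$, at distance $d\le 1/(2q_n)$. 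Then $|\sin\pi q_n(\theta+\tfrac12)|\asymp q_nd$ and $|\cos\pi(\theta+\tilde j_0p_n/q_n)|\asymp d$. Removing that term therefore gives
\begin{equation*}
\Big|\sum_{j\neq\tilde j_0}\ln|\cos\pi(\theta+jp_n/q_n)|+(q_n-1)\ln2\Big|\le \ln q_n+C.
\end{equation*}
So the rational version of the lemma holds exactly, with the grid-closest index removed.

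The second step transfers this to the true orbit. We have $|j\alpha-jp_n/q_n|\le j/(q_nq_{n+1})<1/q_{n+1}\le 1/q_n$ for $0\le j<q_n$, so each true point lies within $1/q_n$ of its grid partner. We also use separation: for $0\le i\neq j<q_n$ we have $\|(i-j)\alpha\|\ge\|q_{n-1}\alpha\|\ge 1/(2q_n)$.

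Order the true points by their distance to $\tfrac12+\mathbb{Z}$. By separation, apart from $j_0$ the $i$-th closest is at distance at least $i/(4q_n)$, since at most two points per side fit in each window of length $1/(2q_n)$. The grid points, after removing $\tilde j_0$, are likewise at distances at least $(i-1)/q_n+d$, with at most two points per distance level.

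For every $j\notin\{j_0,\tilde j_0\}$ whose rank $i$ is at least some absolute $i_*$, both $\theta+j\alpha$ and its grid partner lie at distance $\gtrsim i/q_n$ from the zero set, and they differ by at most $1/q_n$. Hence
\begin{equation*}
\Big|\ln\frac{|\cos\pi(\theta+j\alpha)|}{|\cos\pi(\theta+jp_n/q_n)|}\Big|\le \frac{C}{i}.
\end{equation*}
Summing over $i$ gives $O(\ln q_n)$. The boundedly many remaining indices are $j_0$, $\tilde j_0$ and those of rank below $i_*$. For all of these except $j_0$ itself, the true point is at distance at least $1/(4q_n)$ from the zero set, so each true term is $\ln|\cos|\in[-\ln(Cq_n),0]$ and costs $O(\ln q_n)$. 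The grid terms with rank below $i_*$ other than $\tilde j_0$ are at distance at least $1/q_n$ minus a constant multiple of $1/q_n$ adjusted by the ranking, and so are also $O(\ln q_n)$. If $\tilde j_0\neq j_0$, we swap the removed index. This requires bounding the true term at $\tilde j_0$, which is $O(\ln q_n)$ by separation from $j_0$. It also requires bounding the grid term at $j_0$. That grid point is within $1/q_n$ of a point nearest the zero set, so it may be close to the grid minimum. If so, one uses instead that the grid points at $\tilde j_0$ and $j_0$ are distinct grid points, hence one of them is at distance at least $1/(2q_n)$. This makes the swapped grid term $O(\ln q_n)$.

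I expect the main obstacle to be exactly this bookkeeping near the minimum. The perturbation $1/q_{n+1}$ can be comparable to the spacing $1/q_n$ (for instance when $a_{n+1}=1$), so the naive relative-error bound fails for the few closest points. The plan is to isolate a bounded number of such indices and control each one crudely by $O(\ln q_n)$, using only the separation $\|(i-j)\alpha\|\ge1/(2q_n)$ and grid spacing. The relative-error estimate with its harmonic $\sum C/i$ is applied only to the bulk. Combining the bulk bound, the exceptional terms, and the exact rational identity yields the estimate of Lemma~\ref{av}.
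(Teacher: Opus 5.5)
The paper gives no proof of this lemma. It is quoted from \cite{MR2521117} and used only as a black box in the proof of Theorem~\ref{uniform}, so your argument has to stand on its own. It does: the proposal is correct, with a few misstatements to fix.

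The rational step is exact. One has $\sum_{j\neq\tilde j_0}\ln|\cos\pi(\theta+jp_n/q_n)|=-(q_n-1)\ln 2+\ln\frac{\sin\pi q_n d}{\sin\pi d}$, and the last term lies in $[\ln q_n-\ln\frac{\pi}{2},\ln q_n]$. When $d=0$, read it as $\ln q_n$, from $\prod_{k=1}^{q-1}2\sin(\pi k/q)=q$.

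The transfer step works with $i_*=8$. The $i$-th closest true point other than $j_0$ is at distance at least $i/(4q_n)$ from $\frac12+\mathbb{Z}$. Its grid partner is then at distance at least $i/(8q_n)$, and no zero lies between the two. The bound $|\pi\tan\pi x|\leqslant \pi/(2\dist(x,\frac12+\mathbb{Z}))$ then gives your $C/i$ estimate.

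Some of the bookkeeping is misstated, though none of it is a gap.
\begin{itemize}
\item Your claim that the remaining grid points lie at distances at least $(i-1)/q_n+d$ is false as written. They sit at distances $m/q_n\pm d$ with $m\geqslant 1$.
\item That grid ranking is never needed. In the bulk, the grid partner's distance is inherited from the true point's distance.
\item For every exceptional term, two facts suffice. Every grid point other than $\tilde j_0$ is at distance at least $1/q_n-d\geqslant 1/(2q_n)$. Every true point other than $j_0$ is at distance at least $1/(4q_n)$.
\item The swap is therefore immediate. If $j_0\neq\tilde j_0$, the grid point of $j_0$ is simply not the grid minimizer; no further case analysis is needed.
\end{itemize}

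As for the route: a common proof of this estimate uses that $j\alpha-jp_n/q_n=j(q_n\alpha-p_n)/q_n$ has a fixed sign and modulus less than $1/q_n$. Hence each arc between consecutive grid points contains exactly one orbit point. Monotonicity of $\ln|\cos\pi x|$ on either side of its zero then sandwiches the sum between two grid sums, up to $O(1)$ terms next to the zero, each costing $O(\ln q_n)$.

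Your Lipschitz comparison with a harmonic sum is somewhat longer, and the extra $O(\ln q_n)$ comes from $\sum_i C/i$ rather than from boundary terms. In exchange it needs neither the sign information nor the one-point-per-arc structure. It uses only $|j\alpha-jp_n/q_n|<1/q_n$ and the $1/(2q_n)$-separation of the orbit.
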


\begin{proof}[Proof of Theorem \ref{uniform}]
We provide the proof of \hyperref[item: uniformk]{Theorem~\ref*{uniform}\,\textup{(\ref*{item: uniformk})}} in detail then sketch the proof for \hyperref[item: uniformn]{Theorem~\ref*{uniform}\,\textup{(\ref*{item: uniformn})}}.
Fix $m \in I_1 \cup I_2$. For any $a \in \mathbb{T}$, we have
\begin{equation}\label{logm1}
    \begin{split}
        &\ln \prod_{\ell \neq m} \frac{|\cos 2\pi a - \cos 2\pi \theta_\ell|}{|\cos 2\pi \theta_m - \cos 2\pi \theta_\ell|} \\
        &= \sum_{\ell \neq m} \ln |\cos 2\pi a - \cos 2\pi \theta_\ell| - \sum_{\ell \neq m} \ln |\cos 2\pi \theta_m - \cos 2\pi \theta_\ell|.
    \end{split}
\end{equation}
We first estimate $\sum_{\ell \neq m} \ln |\cos 2\pi a - \cos 2\pi \theta_\ell|$. Recall $\# I = s q_{k}$. Rewrite
\begin{equation}\label{logm2}
    \begin{split}
        \sum_{\ell \neq m} &\ln |\cos 2\pi a - \cos 2\pi \theta_\ell| \\
        &= \sum_{\ell \neq m} \left( \ln |\sin \pi (a + \theta_\ell)| + \ln |\sin \pi (a - \theta_\ell)| + \ln 2 \right) \\
        &\eqcolon\Sigma_+ + \Sigma_- + (s q_{k} - 1) \ln 2,
    \end{split}
\end{equation}
where $\Sigma_{\pm}$ consist of $s$ terms of the form \eqref{avila-tri}, supplemented by $s$ terms of the form
\begin{equation*}
    \ln \inf_{0 \leqslant \ell \leqslant q_{k}-1} |\sin \pi (x + \ell \alpha)|,
\end{equation*}
and excluding the term $\ln |\sin \pi (a \pm \theta_m)|$. Applying \hyperref[av]{Lemma~\ref*{av}} to each of the $s$ components in $\Sigma_{\pm}$, we obtain
\begin{equation*}
    \Sigma_{\pm} \leqslant -s q_{k} \ln 2 + Cs \ln q_{k}.
\end{equation*}
Consequently, 
\begin{equation}\label{num01}
    \sum_{\ell \neq m} \ln |\cos 2\pi a - \cos 2\pi \theta_\ell| \leqslant -s q_{k} \ln 2 + Cs \ln q_{k}.
\end{equation}

Next, we estimate $\sum_{\ell \neq m} \ln |\cos 2\pi \theta_m - \cos 2\pi \theta_\ell|$. Write
\begin{equation*}
    \begin{split}
        \sum_{\ell \neq m} \ln |\cos 2\pi \theta_m - \cos 2\pi \theta_\ell| &= \Sigma_+ + \Sigma_- + (s q_{k} - 1) \ln 2,
    \end{split}
\end{equation*}
where
\begin{equation*}
    \Sigma_+ = \sum_{\ell \neq m} \ln |\sin \pi (2\theta + (m+\ell)\alpha)|, 
\end{equation*}
and
\begin{equation*}
    \Sigma_- = \sum_{\ell \neq m} \ln |\sin \pi (m-\ell)\alpha|.
\end{equation*}
We claim that 
\begin{equation}\label{sigma+}
    \Sigma_+ \geqslant -s q_{k} \ln 2 - Cs \ln q_{k} + s \ln \frac{s}{q_{k+1}} - C\sigma_{2}\ln q_{n}-\Omega_{2}\sigma_{2}\ln{q_{n_{2}}},
\end{equation}
and
\begin{equation}\label{sigma-}
    \Sigma_- \geqslant -s q_{k} \ln 2 - Cs \ln q_{k} + s \ln \frac{s}{q_{k+1}} - C\sigma_{1} \ln q_{n}-\Omega_{1}\sigma_{1}\ln q_{n_{1}}.
\end{equation}

We only prove \eqref{sigma+} since the proof of \eqref{sigma-} is similar. Note that $I$ can be represented as a disjoint union of $s$ segments, denoted by $J_i$ for $i=1, \dots, s$, each of length $q_{k}$. Applying \hyperref[av]{Lemma~\ref*{av}}  to each $J_i$ yields
\begin{equation*}
    \Sigma_+ \geqslant -s q_{k} \ln 2 - Cs \ln q_{k} + \sum_{i=1}^{s} \ln |\sin \pi \hat{\theta}_i|,
\end{equation*}
where 
\begin{equation*}
    |\sin \pi \hat{\theta}_i| \coloneq \min_{\ell \in J_i} |\sin \pi (2\theta + (\ell + m)\alpha)|.
\end{equation*}

To bound $\sum_{i=1}^{s} \ln |\sin \pi \hat{\theta}_i|$, we first consider the case where $\hat{\theta}_{i+1} = \hat{\theta}_i + q_{k} \alpha$ for $1 \leqslant i < s$. In this case,
\begin{equation*}
    \|\hat{\theta}_{i+1} - \hat{\theta}_i\|_{\mathbb{R}/\mathbb{Z}} \geqslant \|q_{k} \alpha\|_{\mathbb{R}/\mathbb{Z}} \geqslant \frac{1}{2q_{k+1}}.
\end{equation*}
Denote
\begin{equation*}
    \mathcal{E}=\{i: \ln |\sin \pi \hat{\theta}_{i}|<-\sigma_{2} \ln q_{n}-C\}.
\end{equation*}
Then $\# \mathcal{E}\leqslant \Omega_{2}$. For $i\in\mathcal{E}$, we use the assumption
\begin{equation*}
    \ln|\sin \pi \hat{\theta}_{i}| \geqslant -\sigma_{2} \ln q_{n_{2}}-C,
\end{equation*}
and thus
\begin{equation}\label{E}
    \sum_{i\in\mathcal{E}} \ln |\sin \pi \hat{\theta}_i| \geqslant  -\Omega_{2}\sigma_{2}\ln q_{n_{2}}-C \Omega_{2}.
\end{equation}
For $i\notin \mathcal{E}$, we have
\begin{equation*}
    \ln|\sin \pi \hat{\theta}_{i}| \geqslant -\sigma_{2} \ln q_{n}-C.
\end{equation*}
Applying Stirling's formula, we obtain
\begin{equation}\label{Ec}
    \begin{split}
        \sum_{i\notin \mathcal{E}} \ln |\sin \pi \hat{\theta}_i| 
        &\geqslant  \sum_{j=1}^{s-1-\#\mathcal{E}} \ln \|j q_{k} \alpha\|_{\mathbb{R}/\mathbb{Z}} -\sigma_{2}\ln q_{n}-C\\
        &\geqslant s \ln \frac{s}{q_{k+1}} - Cs -\sigma_{2}\ln q_{n}.
    \end{split}
\end{equation}
By \eqref{E} and \eqref{Ec}, we have
\begin{equation*}
    \sum_{i=1}^{s}\ln |\sin \pi \hat{\theta}_i|  \geqslant s \ln \frac{s}{q_{k+1}} - Cs -\sigma_{2}\ln q_{n}-\Omega_{2}\sigma_{2}\ln q_{n_{2}}.
\end{equation*}

For the other cases, we decompose $[1, s]$ into maximal intervals $T_\kappa$ such that $\hat{\theta}_{i+1} = \hat{\theta}_i + q_{k} \alpha$ for all $i, i+1 \in T_\kappa$. Then $T_{\kappa}\neq [1,s]$. The boundary points of each $T_\kappa$ satisfy either  are the boundary  of $[1, s]$ or satisfy the inequality $\|\hat{\theta}_i\|_{\mathbb{R}/\mathbb{Z}} + \|q_{k} \alpha\|_{\mathbb{R}/\mathbb{Z}} \geqslant \frac{1}{2} \|q_{k-1} \alpha\|_{\mathbb{R}/\mathbb{Z}}$. This arises from the fact that there exists $0 < |z| < q_{k}$ such that
\begin{equation*}
    \|\hat{\theta}_i + (z + q_{k}) \alpha\|_{\mathbb{R}/\mathbb{Z}}< \|\hat{\theta}_i +  q_{k}\alpha\|_{\mathbb{R}/\mathbb{Z}},
\end{equation*}
and for such $z$,
\begin{equation*}
    \|\hat{\theta}_i + (z + q_{k}) \alpha\|_{\mathbb{R}/\mathbb{Z}} \geqslant \|q_{k-1} \alpha\|_{\mathbb{R}/\mathbb{Z}} - \|\hat{\theta}_i\|_{\mathbb{R}/\mathbb{Z}} - \|q_{k} \alpha\|_{\mathbb{R}/\mathbb{Z}}.
\end{equation*}
If $T_\kappa$ contains some $i$ such that $\|\hat{\theta}_i\|_{\mathbb{R}/\mathbb{Z}} < \frac{1}{4} \|q_{k-1} \alpha\|_{\mathbb{R}/\mathbb{Z}}$, then
\begin{equation*}
    \# T_\kappa \geqslant \frac{1}{4} \frac{\|q_{k-1} \alpha\|_{\mathbb{R}/\mathbb{Z}}}{\|q_{k} \alpha\|_{\mathbb{R}/\mathbb{Z}}} - 1 \geqslant \frac{1}{8} \frac{q_{k+1}}{q_{k}} - 1\geqslant \frac{s}{8C_{0}''}-1.
\end{equation*}
Combine $\# T_{\kappa}\geqslant 1$, we have $\# T_{\kappa}\geqslant \max \{1, s/8C_{0}''-1\}\geqslant \frac{s}{16C_{0}''}$.
Moreover, 
\begin{equation*}
    \#\bigg\{\kappa: T_{\kappa} \ \text{contains}\ i \ \text{with}\  \|\hat{\theta}_i\|_{\mathbb{R}/\mathbb{Z}} < \frac{1}{4} \|q_{k-1} \alpha\|_{\mathbb{R}/\mathbb{Z}}\bigg\}\leqslant 16C_{0}''.
\end{equation*}
For such $T_\kappa$, an estimate similar to the above gives
\begin{equation}\label{case01}
    \begin{split}
        \sum_{i \in T_\kappa} \ln |\sin \pi \hat{\theta}_i| &\geqslant \# T_\kappa \ln \frac{\# T_\kappa}{q_{k+1}} - C \# T_\kappa - \sigma_{2}\ln{q_n} -\Omega_{2,\kappa} \sigma_{2}\ln q_{n_{2}}\\
        &\geqslant \# T_\kappa \ln \frac{s}{q_{k+1}} - C \# T_\kappa - \sigma_{2}\ln{q_n}-\Omega_{2,\kappa} \sigma_{2}\ln q_{n_{2}},
    \end{split}
\end{equation}
where $\Omega_{2,\kappa}$ denotes the number of resonances of phase in $T_{\kappa}$.

Conversely, if $T_\kappa$ does not contain such $i$, then
\begin{equation*}
    \ln|\sin\pi \hat{\theta}_{i}|\geqslant -\ln q_{k}-C.
\end{equation*}
Since $s q_{k} < C_{0}''q_{k+1}$, we have
\begin{equation}\label{case02}
    \begin{split}
        \sum_{i \in T_\kappa} \ln |\sin \pi \hat{\theta}_i| &\geqslant -\# T_\kappa \ln {q_{k}} - C \# T_\kappa\\
        &\geqslant \# T_\kappa \ln \frac{s}{q_{k+1}} - C \# T_\kappa.
    \end{split}
\end{equation}
Combining \eqref{case01} with \eqref{case02} and $\sum_{\kappa} \Omega_{2,\kappa}\leqslant \Omega_{2}$, we obtain 
\begin{equation*}
    \sum_{i=1}^{s} \ln |\sin \pi \hat{\theta}_i| \geqslant s \ln \frac{s}{q_{k+1}} - Cs - C\sigma_{2}\ln q_{n}-\Omega_{2}\sigma_{2}\ln{q_{n_{2}}},
\end{equation*}
which implies
\begin{equation*}
    \Sigma_+ \geqslant -s q_{k} \ln 2 - Cs \ln q_{k} + s \ln \frac{s}{q_{k+1}} - C\sigma_{2}\ln q_{n}-\Omega_{2}\sigma_{2}\ln{q_{n_{2}}}.
\end{equation*}
Combining \eqref{sigma+} with \eqref{sigma-} yields
\begin{equation}\label{den01}
    \begin{split}
        \sum_{\ell \neq m} \ln |\cos 2\pi \theta_m - &\cos 2\pi \theta_\ell| \geqslant -s q_{k} \ln 2 - Cs \ln q_{k} + 2s \ln \frac{s}{q_{k+1}} \\
        &- C(\sigma_{1}+\sigma_{2}) \ln q_n -\Omega_{1}\sigma_{1}\ln q_{n_{1}}-\Omega_{2}\sigma_{2}\ln{q_{n_{2}}}.
    \end{split}
\end{equation}
Finally, by comparing the estimates in \eqref{num01} and \eqref{den01}, we conclude that
\begin{equation*}
    \mathrm{Lag}_m \leqslant -2s \ln \frac{s}{q_{k+1}} + Cs \ln q_{k} + C(\sigma_{1}+\sigma_{2}) \ln q_n+\Omega_{1}\sigma_{1}\ln q_{n_{1}}+\Omega_{2}\sigma_{2}\ln q_{n_{2}}.
\end{equation*}
This proves \hyperref[item: uniformk]{Theorem~\ref*{uniform}\,\textup{(\ref*{item: uniformk})}}.

Let us prove \hyperref[item: uniformn]{Theorem~\ref*{uniform}\,\textup{(\ref*{item: uniformn})}}. In fact, there is no difference for the estimate of numerator. The only difference comes from denominator. Write
\begin{equation*}
    \begin{split}
        \sum_{\ell \neq m} \ln |\cos 2\pi \theta_m - \cos 2\pi \theta_\ell| &= \Sigma_+ + \Sigma_- + (s q_{k} - 1) \ln 2,
    \end{split}
\end{equation*}
where
\begin{equation*}
    \Sigma_+ = \sum_{\ell \neq m} \ln |\sin \pi (2\theta + (m+\ell)\alpha)|, 
\end{equation*}
and
\begin{equation*}
    \Sigma_- = \sum_{\ell \neq m} \ln |\sin \pi (m-\ell)\alpha|.
\end{equation*}
We have
\begin{equation}\label{Sigma2}
    \Sigma_+ \geqslant -s q_{k} \ln 2 - Cs \ln q_{k} + \sum_{i=1}^{s} \ln |\sin \pi \hat{\theta}_i|,
\end{equation}
where 
\begin{equation*}
    |\sin \pi \hat{\theta}_i| \coloneq \min_{\ell \in J_i} |\sin \pi (2\theta + (\ell + m)\alpha)|.
\end{equation*}

To bound $\sum_{i=1}^{s} \ln |\sin \pi \hat{\theta}_i|$, we first consider the case where $\hat{\theta}_{i+1} = \hat{\theta}_i + q_{k} \alpha$ for $1 \leqslant i < s$. By Stirling's formula and $q_{k+1}\leqslant q_{n}$, we obtain
\begin{equation}\label{wholeline}
    \begin{split}
        \sum_{i=1}^{s}\ln |\sin \pi \hat{\theta}_i|  &\geqslant s \ln \frac{s}{q_{k+1}} - Cs -\sigma_{2}\ln q_{n}-\Omega_{2}\sigma_{2}\ln q_{n_{2}}\\
        &\geqslant s \ln \frac{s}{q_{n}} - Cs -\sigma_{2}\ln q_{n}-\Omega_{2}\sigma_{2}\ln q_{n_{2}}.
    \end{split}
\end{equation}
For the other cases, we decompose $[1, s]$ into maximal intervals $T_\kappa$ such that $\hat{\theta}_{i+1} = \hat{\theta}_i + q_{k} \alpha$ for all $i, i+1 \in T_\kappa$.

If $T_\kappa$ contains some $i$ such that $\|\hat{\theta}_i\|_{\mathbb{R}/\mathbb{Z}} < \frac{1}{4} \|q_{k-1} \alpha\|_{\mathbb{R}/\mathbb{Z}}$, then
\begin{equation*}
    \# T_\kappa \geqslant \frac{1}{4} \frac{\|q_{k-1} \alpha\|_{\mathbb{R}/\mathbb{Z}}}{\|q_{k} \alpha\|_{\mathbb{R}/\mathbb{Z}}} - 1 \geqslant \frac{1}{8} \frac{q_{k+1}}{q_{k}} - 1.
\end{equation*}
Combine $\# T_{\kappa}\geqslant 1$, we have 
\begin{equation*}
    \# T_{\kappa}\geqslant \max \bigg\{1, \frac{1}{8} \frac{q_{k+1}}{q_{k}} - 1\bigg\}\geqslant \frac{q_{k+1}}{16 q_{k}}.
\end{equation*}
Moreover, 
\begin{equation*}
    \#\bigg\{\kappa: T_{\kappa} \ \text{contains}\ i \ \text{with}\  \|\hat{\theta}_i\|_{\mathbb{R}/\mathbb{Z}} < \frac{1}{4} \|q_{k-1} \alpha\|_{\mathbb{R}/\mathbb{Z}}\bigg\}\leqslant \frac{16sq_{k}}{q_{k+1}}.
\end{equation*}
For such $T_\kappa$, 
\begin{equation}\label{case21}
    \begin{split}
        \sum_{i \in T_\kappa} \ln |\sin \pi \hat{\theta}_i| &\geqslant \# T_\kappa \ln \frac{\# T_\kappa}{q_{k+1}} - C \# T_\kappa - \sigma_{2}\ln{q_n} -\Omega_{2,\kappa} \sigma_{2}\ln q_{n_{2}}\\
        &\geqslant \# T_\kappa \ln \frac{1}{q_{k}} - C \# T_\kappa - \sigma_{2}\ln{q_n}-\Omega_{2,\kappa} \sigma_{2}\ln q_{n_{2}},
    \end{split}
\end{equation}
where $\Omega_{2,\kappa}$ denotes the number of resonances of phase in $T_{\kappa}$.

Conversely, if $T_\kappa$ does not contain such $i$, then
\begin{equation*}
    \ln|\sin\pi \hat{\theta}_{i}|\geqslant -\ln q_{k}-C,
\end{equation*}
and thus
\begin{equation}\label{case22}
    \sum_{i \in T_\kappa} \ln |\sin \pi \hat{\theta}_i| \geqslant -\# T_\kappa \ln {q_{k}} - C \# T_\kappa.
\end{equation}
By \eqref{case21} and \eqref{case22},
\begin{equation}\label{decom2}
    \sum_{i=1}^{s} \ln |\sin \pi \hat{\theta}_i| \geqslant -s \ln q_{k} - C\sigma_{2}\frac{sq_{k}}{q_{k+1}}\ln q_{n}-\Omega_{2}\sigma_{2}\ln{q_{n_{2}}},
\end{equation}
Combine \eqref{decom2} with \eqref{wholeline}, one has
\begin{equation*}
    \sum_{i=1}^{s} \ln |\sin \pi \hat{\theta}_i| \geqslant s \ln \frac{s}{q_{n}}-s \ln q_{k} - C\sigma_{2}(1+\frac{sq_{k}}{q_{k+1}})\ln q_{n}-\Omega_{2}\sigma_{2}\ln{q_{n_{2}}},
\end{equation*}
which, by \eqref{Sigma2}, implies
\begin{equation*}
    \Sigma_+ \geqslant s \ln \frac{s}{q_{n}} -s q_{k} \ln 2 - Cs \ln q_{k} - C\sigma_{2}(1+\frac{sq_{k}}{q_{k+1}})\ln q_{n}-\Omega_{2}\sigma_{2}\ln{q_{n_{2}}}.
\end{equation*}
Similarly,
\begin{equation*}
    \Sigma_{-} \geqslant s \ln \frac{s}{q_{n}} -s q_{k} \ln 2 - Cs \ln q_{k} - C\sigma_{1}(1+\frac{sq_{k}}{q_{k+1}})\ln q_{n}-\Omega_{1}\sigma_{1}\ln{q_{n_{1}}}.
\end{equation*}
This leads to
\begin{equation*}
    \begin{split}
        \Lag_{m}&\leqslant -2s \ln \frac{s}{q_{n}}+Cs \ln q_{k} + C(\sigma_{2}+\sigma_{1})(1+\frac{sq_{k}}{q_{k+1}})\ln q_{n}\\
        &\qquad +\Omega_{1}\sigma_{1}\ln{q_{n_{1}}}+\Omega_{2}\sigma_{2}\ln{q_{n_{2}}}.
    \end{split}
\end{equation*}
\end{proof}

\section*{Acknowledgments}
Jiawei He was supported by the National Natural Science Foundation of China (Grant No. 12501247), the Natural Science Foundation of Fujian Province (Grant No. 2025J08232), the  Startup Fund for Advanced Talents of Putian University (Grant No. 2023120), and the Fujian Alliance of Mathematics (Grant No. 2025SXLMMS07).

\section*{Statements and Declarations}
{\bf Conflict of Interest} 
The authors declare no conflicts of interest.
				
\vspace{0.2in}
{\bf Data Availability}
Data sharing is not applicable to this article as no new data were created or analyzed in this study.

\bibliography{main}
\end{document}